\documentclass[11pt]{article}
\pdfoutput=1

\usepackage[a4paper,margin=1in]{geometry}
\usepackage{xspace}
\usepackage[bookmarks,bookmarksopen,bookmarksdepth=2]{hyperref}
\usepackage{comment}
\usepackage{amsmath}
\usepackage{amsfonts}
\usepackage{amssymb}
\usepackage{amsthm}
\usepackage{thmtools,thm-restate}
\usepackage[textsize=small]{todonotes}
\usepackage{tikz}
\usetikzlibrary{arrows,automata}
\usepackage{stmaryrd}
\usepackage{mathdots}
\usepackage{algorithm}
\usepackage[noend]{algpseudocode}
\usepackage{thmtools}
\usepackage{lineno}
\usepackage{enumerate}
\usepackage{mathtools} 
\usepackage{wrapfig}

\declaretheorem{proposition}
\declaretheorem[sibling=proposition]{lemma}
\declaretheorem[sibling=proposition]{theorem}
\declaretheorem[sibling=proposition]{corollary}
\declaretheorem[sibling=proposition]{conjecture}
\declaretheorem[style=remark,sibling=proposition]{example}
\declaretheorem[style=remark,sibling=proposition]{claim}
\declaretheorem[style=remark,sibling=proposition]{remark}

\begin{document}

\title{Reachability in One-Dimensional Pushdown Vector Addition Systems is Decidable}

\author{
Clotilde Bizière \\ Université de Bordeaux, LaBRI \\ \texttt{clotilde.biziere@u-bordeaux.fr} \and
Wojciech Czerwiński\thanks{Supported by the ERC grant INFSYS, agreement no. 950398.} \\ University of Warsaw \\ \texttt{wczerwin@mimuw.edu.pl}
}



\date{}

\pagestyle{plain}


\newcommand{\clotilde}[1]{}
\newcommand{\wojtek}[1]{}

\newcommand{\ignore}[1]{}

%
%

\newcommand{\Oo}{\mathcal{O}}
\newcommand{\D}{\mathbb{D}}
\newcommand{\F}{\mathcal{F}}
\newcommand{\W}{\mathcal{W}}
\newcommand{\A}{\mathcal{A}}
\newcommand{\B}{\mathcal{B}}
\newcommand{\I}{\mathcal{I}}
\renewcommand{\S}{\mathcal{S}}
\newcommand{\M}{\mathcal{M}}
\newcommand{\N}{\mathbb{N}}
\newcommand{\Nn}{\mathcal{N}}
\newcommand{\Pp}{\mathcal{P}}
\newcommand{\Z}{\mathbb{Z}}
\newcommand{\R}{\mathbb{R}}
\newcommand{\Q}{\mathbb{Q}}
\newcommand{\Qpos}{\mathbb{Q}_+}
\renewcommand{\P}{\mathbb{P}}

\newcommand{\set}[1]{\{#1\}}
\newcommand{\setof}[2]{\set{#1 \mid #2}}
\newcommand{\card}[1]{\left|#1\right|}
\newcommand{\prettyexists}[2]{\exists_{#1} \, #2}
\newcommand{\prettyforall}[2]{\forall_{#1} \, #2}
\newcommand{\source}[1]{\textsc{source}(#1)}
\newcommand{\target}[1]{\textsc{target}(#1)}
\newcommand{\runs}[1]{\textsc{Runs}(#1)}  
\newcommand{\runord}{\unlhd}
\newcommand{\fin}{\textup{fin}}
\newcommand{\prof}{\textsc{profile}}
\newcommand{\lab}{\textsc{label}}
\newcommand{\parimage}{\textsc{pi}}
\newcommand{\rsmall}{\textsc{small}}
\newcommand{\rbig}{\textsc{big}}
\newcommand{\reachvas}{\textsc{Reach}(\mathrm{VAS})}
\newcommand{\secreachvas}{\textsc{SecReach}(\mathrm{VAS})}
\newcommand{\reachvass}{\textsc{Reach}(\mathrm{VASS})}
\newcommand{\secreachvass}{\textsc{SecReach}(\mathrm{VASS})}
\newcommand{\reach}[1]{R_{#1}}
\newcommand{\secreach}{\textsc{SecReach}}
\newcommand{\vas}{VAS\xspace}
\newcommand{\vases}{{\vas}es\xspace}
\newcommand{\petrinet}{PN\xspace}
\newcommand{\vass}{VASS\xspace}
\newcommand{\vasses}{{\vass}es\xspace}
\newcommand{\posit}{\textsc{pos}}
\newcommand{\negat}{\textsc{neg}}
\newcommand{\pre}{\textsc{pre}}
\newcommand{\post}{\textsc{post}}
\renewcommand{\sec}[2]{\textsc{sec}_{#1}(#2)}
\newcommand{\cc}{\textsc{cc}}
\newcommand{\trans}[1]{\stackrel{#1}{\longrightarrow}}
\newcommand{\tran}{\longrightarrow}
\newcommand{\Tran}{\Longrightarrow}
\newcommand{\llra}{\longleftrightarrow}
\newcommand{\To}{\Rightarrow}

\newcommand{\up}{\textup{up}}
\newcommand{\trg}{\textup{trg}}
\newcommand{\src}{\textup{src}}
\newcommand{\depth}{\textup{depth}}
\newcommand{\down}{\textup{down}}
\newcommand{\negative}{\textsc{negative}}
\newcommand{\zero}{\textsc{zero}}
\newcommand{\pref}{\textup{pref}}
\newcommand{\suff}{\textup{suff}}
\newcommand{\midd}{\textup{mid}}
\newcommand{\low}{\textup{Low}}
\newcommand{\topp}{\textup{Top}}
\newcommand{\high}{\textup{high}}
\newcommand{\init}{\textup{init}}
\newcommand{\rank}{\textup{rank}}
\newcommand{\tree}{\textup{tree}}
\newcommand{\alp}{\textup{alph}}
\newcommand{\fresh}{\textup{fresh}}
\newcommand{\dis}{\textup{dis}}
\newcommand{\ratio}{\textup{ratio}}
\newcommand{\norm}{\textup{norm}}
\newcommand{\supp}{\textup{supp}}
\newcommand{\trash}{\textup{trash}}
\newcommand{\proj}{\textup{proj}}
\newcommand{\conf}{\textup{Conf}}
\newcommand{\atrans}{\textup{AnTrans}}
\newcommand{\inp}{\textup{input}}
\newcommand{\out}{\textup{output}}
\newcommand{\size}{\textup{size}}
\newcommand{\flush}{\textbf{\textup{flush}}}
\newcommand{\ztest}{\textbf{\textup{zero-test}}}
\newcommand{\mult}{\textbf{\textup{multiply}}}
\newcommand{\reaches}{\longrightarrow}
\newcommand{\nreaches}{\centernot\longrightarrow}
\newcommand{\lin}{\textup{Lin}}
\newcommand{\ds}{\textup{data}}
\newcommand{\alleven}{\textup{AllEven}\xspace}
\newcommand{\allodd}{\textup{AllOdd}\xspace}
\newcommand{\limsupeven}{\textup{LimsupEven}\xspace}
\newcommand{\limsupodd}{\textup{LimsupOdd}\xspace}

\newcommand{\addeq}{\mathrel{+}=}
\newcommand{\subeq}{\mathrel{-}=}
\newcommand\tab{\;\;\;\;}
\newcommand{\loopp}{\textup{\textbf{loop}}\xspace}
\newcommand{\body}{\textup{\textbf{body}}\xspace}
\newcommand{\dop}{\textup{\textbf{do}}\xspace}
\newcommand{\downto}{\textup{\textbf{downto}}\xspace}
\newcommand{\ampl}{\textup{\textbf{ampl}}\xspace}

\newcommand{\eff}{\textup{eff}}
\newcommand{\leff}{\textup{left-eff}}
\newcommand{\reff}{\textup{right-eff}}
\newcommand{\maxeff}{\textup{max-eff}}
\newcommand{\maxreach}{\textup{max-reach}}
\newcommand{\pump}{\textup{pump}}
\newcommand{\lleft}{\textup{left}}
\newcommand{\rright}{\textup{right}}
\newcommand{\lift}{\textup{lift}}
\newcommand{\lsize}{\textup{leaf-size}}
\newcommand{\sm}{\textup{small}}
\newcommand{\rt}{\textup{root}}
\newcommand{\ver}{\textup{ver}}
\newcommand{\poly}{\textup{poly}}
\newcommand{\bottom}{\textup{bottom}}
\newcommand{\shape}{\textup{shape}}
\newcommand{\graph}{\textup{graph}}
\newcommand{\rev}{\textup{rev}}

\newcommand{\fra}[3]{\left(\frac{#1}{#2}\right)^{#3}}

\newcommand{\eps}{\varepsilon}

\newcommand{\Nplus}{\N_{\geq 0}}	
\newcommand{\zeroel}[1]{0_{#1}}
\newcommand{\floor}[1]{\lfloor #1 \rfloor}
\newcommand{\ceil}[1]{\lceil #1 \rceil}
\newcommand{\under}{\sqsubseteq}
\newcommand{\ut}{\mbox{UT}}

\renewcommand{\enspace}{}

\newcommand{\nl}{\textup{NL}\xspace}
\newcommand{\nctwo}{\textup{NC$^2$}\xspace}
\newcommand{\ptime}{\textup{PTime}\xspace}
\newcommand{\np}{\textup{NP}\xspace}
\newcommand{\pspace}{\textup{PSpace}\xspace}
\newcommand{\exptime}{\textup{ExpTime}\xspace}
\newcommand{\expspace}{\textup{ExpSpace}\xspace}
\newcommand{\twoexptime}{\textup{2-ExpTime}\xspace}
\newcommand{\twoexpspace}{\textup{2-ExpSpace}\xspace}
\newcommand{\tower}{\textup{Tower}\xspace}
\newcommand{\ackermann}{\textup{Ackermann}\xspace}
\newcommand{\hypackermann}{\textup{HyperAckermann}\xspace}

\newcommand{\myparagraph}[1]{\vskip 0.3cm\textbf{#1.}}

\newcommand{\goto}[2]{\textbf{goto} {\footnotesize #1} \textbf{or} {\footnotesize #2}}
\newcommand{\gotod}[1]{\textbf{goto} {\footnotesize #1}}
\newcommand{\zerotest}[1]{\textbf{zero-test}($#1$)}
\newcommand{\testm}[1]{\textbf{max?}~$#1$}
\newcommand{\inc}[1]{\add{#1}{1}}
\newcommand{\dec}[1]{\sub{#1}{1}}
\newcommand{\add}[2]{$\coreadd{\vr{#1}}{#2}$}
\newcommand{\sub}[2]{$\coresub{\vr{#1}}{#2}$}
\newcommand{\coreadd}[2]{#1 \,\, +\!\!= \, #2}
\newcommand{\coresub}[2]{#1 \,\, -\!\!= \, #2}
\newcommand{\vr}[1]{#1}
\newcommand{\halt}{\textbf{halt}}
\newcommand{\haltz}[1]{{\halt} \textbf{if} $#1 = 0$}
\newcommand{\initialise}{\textbf{initialise to} $0$}

\maketitle

\begin{abstract}
We consider the model of one-dimensional Pushdown Vector Addition Systems (1-PVAS), a fundamental computational model simulating both recursive and concurrent behaviours. Our main result is decidability of the reachability problem for 1-PVAS, an important open problem investigated for at least a decade. In the algorithm we actually consider an equivalent model of Grammar Vector Addition Systems (GVAS). We prove the main result by showing that for every one-dimensional GVAS (1-GVAS) one can compute another 1-GVAS, which has the same reachability relation as the original one and additionally has the so-called thin property. Due to the work of Atig and Ganty from 2011, thin 1-GVAS have decidable reachability problem, therefore our construction implies decidability of the problem for all 1-GVAS. Moreover, we also show that if reachability in thin 1-GVAS can be decided in elementary time then also reachability in all 1-GVAS can be decided in elementary time.
\end{abstract}

\newpage


\section{Introduction}\label{sec:intro}
Automata extended with counters or with a pushdown are one of the most fundamental computational models
with countless applications in both theory and practice
(see~\cite{DBLP:journals/siglog/Schmitz16,DBLP:conf/concur/BouajjaniEM97,DBLP:journals/eatcs/EsparzaN94}).
In turn, one of the most natural decision problems
for a computational model is the emptiness problem asking whether its language is empty or, equivalently,
the reachability problem asking whether there exists a run of this model between given source and target configurations.
It is a folklore that the reachability problem becomes undecidable for automata with two pushdowns and even with
nonnegative integer counters equipped with zero-tests.
This last model is also often called the Minsky machines, as the undecidability was shown by Marvin Minsky in 1967~\cite{Minsky67}.
The proof essentially shows that
automata with two zero-tested counters are Turing powerful and the
result follows from the undecidability of the halting problem for Turing machines.

Since so seemingly simple models already have undecidable reachability problems, the community has been exploring various simpler models for decades, notably a two classical models, which in very natural ways simulate recursion and concurrency.
The first prominent example is the pushdown automaton, which is a fundamental model for simulating recursion.
The reachability problem for pushdown automata can be rather easily shown to be in polynomial time~\cite{DBLP:books/aw/HopcroftU79}
and many techniques are developed for various algorithmic problems for pushdown automata.
The second very widely investigated model is automata with nonnegative integer counters, but without the possibility of zero-testing them.
These are Vector Addition Systems with States (VASS) and is essentially equivalent to Petri nets, which are 
popular for modelling concurrent processes~\cite{DBLP:journals/jcsc/Aalst98}.
Algorithmic properties of VASS together with its central reachability problem
are studied for decades, however the complexity of the reachability problem was established only recently.
The problem is known to be decidable in \ackermann~\cite{DBLP:conf/lics/LerouxS19}
and also \ackermann-hard~\cite{DBLP:conf/focs/Leroux21,DBLP:conf/focs/CzerwinskiO21}, which implies \ackermann-completeness.
Despite the fact that the complexity of the reachability problem is settled, many related questions are still open, for example
the complexity of the problem for fixed number of counters. Thus, one may say that there is already some understanding
of how to solve algorithmic problems for VASS, a fundamental model for concurrent programs.
However, the picture is much less clear if we would like to model recursion and concurrency together.
One such model is automaton with two resources: one pushdown and many counters, it is called
Pushdown Vector Addition System (PVAS). Decidability of the reachability problem for that model is open for a long time
and has been unofficially conjectured to be decidable.
It was called a long-standing open problem already in 2011 (see the abstract of~\cite{DBLP:conf/fsttcs/AtigG11}).
The reachability problem for PVAS was mentioned and researched a lot in the last
decade~\cite{DBLP:conf/csl/LerouxPS14,DBLP:conf/birthday/LazicT17,DBLP:conf/fsttcs/Lazic19}.
Importantly, even for the one-dimensional case the coverability, reachability and related problems were investigated~\cite{DBLP:conf/rp/LerouxST15,DBLP:conf/icalp/LerouxST15,DBLP:journals/ipl/EnglertHLLLS21}
and notably, this investigation led to an important progress on the VASS reachability problem~\cite{DBLP:conf/stoc/CzerwinskiLLLM19}
(see the explanation here~\cite{DBLP:conf/fsttcs/Lazic19}).
Despite of all of that efforts, decidability of the reachability problem for one-dimensional PVAS still remains a stubborn obstacle for a progress in this area.

It seems that in automata theory, there are well-developed techniques for handling pushdown (i.e., recursion), some methods for managing multiple counters (i.e., concurrency),
but there exist no matured techniques for dealing with both pushdown and counters (or both recursion and concurrency) at the same time.
Perhaps surprisingly, there is no strong evidence that these problems are extremely hard computationally, the best
known lower complexity bound for the reachability problem for the one-dimensional PVAS is \pspace-hardness~\cite{DBLP:conf/lics/BlondinFGHM15}. The big gap in the known complexity bounds suggests that quite possibly some insightful techniques still
wait for being invented and explored, which motivates our search in this direction.

\paragraph*{Related research}
Even though decidability of the reachability problem for PVAS is still open, even in dimension one,
some progress has been made towards it.
Most of the existing results rely on the fact that the class of languages of PVAS is equal to the class
of languages of Grammar Vector Addition Systems (GVAS) and the transformations in both ways
are effective~\cite{DBLP:conf/icalp/LerouxST15}
(similarly as for languages of pushdown automata and languages of context-free grammars).
A $d$-dimensional GVAS (d-GVAS) is just a context-free grammar, where the terminals are vectors in $\Z^d$.
A derivation of a d-GVAS is a valid derivation from a vector $s \in \N^d$ to a vector $t \in \N^d$
if starting at $s$ and adding up consecutive vectors from the yield of the derivation we never go below zero
on any dimension, and at the end reach vector $t$.
The GVAS are often much more convenient than PVAS for designing algorithms,
thus majority of the results focus on this model. We also formulate our results using the terminology of GVAS.
In 2011 Atig and Ganty have proven decidability of GVAS emptiness in the special case when the input GVAS
has a finite-index~\cite{DBLP:conf/fsttcs/AtigG11}, we show that finite-index GVAS are actually equivalent to thin GVAS.
The class is quite restricted, however it contains for example
linear grammars. We build our contribution on the work of Atig and Ganty.

In~\cite{DBLP:conf/csl/LerouxPS14} Leroux, Praveen and Sutre showed that if the reachability set of a PVAS is finite than
it is of size at most hyper-Ackermannian, which immediately implies decidability of the boundedness problem,
asking whether the reachability set is finite.
Interestingly the bound is tight and the hyper-Ackermannian size of the reachability sets can indeed be obtained
as shown in the same paper~\cite{DBLP:conf/csl/LerouxPS14}.
This suggests that the reachability problem for PVAS might be \hypackermann-hard,
but no results of this kind have been obtained and the best complexity lower bound for the problem
is still the \ackermann-hardness inherited from Vector Addition Systems~\cite{DBLP:conf/focs/Leroux21,DBLP:conf/focs/CzerwinskiO21}.
Further progress on the boundedness problem
for 1-GVAS was achieved by Leroux, Sutre and Totzke in~\cite{DBLP:conf/rp/LerouxST15},
however it does improve the complexity of this problem.

In~\cite{DBLP:conf/icalp/LerouxST15} the same authors have proven that the coverability problem, asking whether
a given configuration can be covered, is decidable for 1-GVAS. Moreover, at the end of their paper, they remark
that the given algorithm works in exponential space.
Interestingly, for PVAS the reachability and the coverability problem are interreducible (to our best knowledge
it is first mentioned in~\cite{DBLP:journals/corr/Lazic13}, but without a proof), in a sheer contrast
to Vector Addition Systems, where the coverability problem is \expspace-complete~\cite{Lipton76,DBLP:journals/tcs/Rackoff78},
while the reachability problem is \ackermann-complete~\cite{DBLP:conf/lics/LerouxS19,DBLP:conf/focs/Leroux21,DBLP:conf/focs/CzerwinskiO21}. The reduction of the coverability problem for of $d$-dimensional PVAS (d-PVAS) to the reachability
problem for d-PVAS is immediate.
On the other hand the reachability problem for d-PVAS can be reduced to the coverability problem for $(d+1)$-PVAS by a simple,
but clever construction: the additional $(d+1)$-th counter is used to keep the sum
of the counters constant along the run; its initial value is guessed at the beginning of the run,
remembered at the bottom of the stack and checked at the end of the run.
Therefore, there is hierarchy of problems of increasing hardness for PVAS:
coverability for 1-PVAS, reachability for 1-PVAS, coverability for 2-PVAS, reachability for 2-PVAS, etc.
As the coverability problem is known to be decidable for 1-PVAS the next natural next step towards understanding
the reachability-type problems for PVAS seems to be solving the reachability problem for 1-PVAS,
which is the topic of this paper.

There is also a lot of research on the models related to PVAS, which does not directly motivate our direction,
but gives a broader view on the area and witnesses a vivid interest of the community in this kind of problems.
The investigated models are often a slight relaxations of PVAS,
for which decidability results about the reachability problem are known.
A special case of a pushdown is a counter, which can be zero tested. It is known
that the reachability problem is decidable for VASS with one zero-test (one counter
allowed to be zero-tested)~\cite{DBLP:journals/entcs/Reinhardt08,DBLP:conf/mfcs/Bonnet11}
and even with hierarchical zero-tests~\cite{DBLP:journals/entcs/Reinhardt08,DBLP:conf/icalp/Guttenberg24}.
In dimension two even more is known. For 2-VASS with one zero-test in~\cite{DBLP:conf/concur/LerouxS20}
it is shown that similarly as for 2-VASS the reachability relation is semilinear and the reachability problem is \pspace-complete,
thus in this case the zero-test is practically for free. The reachability problem is decidable even for 2-VASS with one zero-test
and one reset~\cite{DBLP:conf/fsttcs/FinkelLS18}.
If we allow counters of PVAS to drop below zero we obtain a model of $\Z$-PVAS. For $\Z$-PVAS it is known that
the reachability relation is semilinear~\cite{DBLP:journals/jcss/HarjuIKS02} and the reachability problem
is decidable~\cite{DBLP:journals/jcss/HarjuIKS02} and even \np-complete~\cite{DBLP:conf/cav/HagueL11}.
Another recently studied model related to PVAS is a bidirected PVAS: we demand that for each transition in a PVAS
there exists another transition, which reverses the effect of the first one. Ganardi et al. show
in~\cite{DBLP:conf/lics/GanardiMZ22} that the reachability problem in bidirected PVAS is decidable in \ackermann and is \tower-hard.
Moreover they prove that in each fixed dimension it is primitive recursive and in dimension one it is in \pspace.
However, the techniques from many above mentioned papers,
like~\cite{DBLP:conf/fsttcs/FinkelLS18, DBLP:conf/cav/HagueL11, DBLP:conf/lics/GanardiMZ22} cannot be directly of use,
as they rely on the fact that the reachability relation in 2-VASS with zero-test and reset, in $\Z$-PVAS and in bidirected
PVAS, respectively, is semilinear (in other words Presburger definable). This is unfortunately not the case already
for 1-PVAS, which is rather easy to show (see Example 2.1. in~\cite{DBLP:journals/lmcs/LerouxPSS19}).

\paragraph*{Our contribution}
The main contribution of this paper is the following theorem.

\begin{theorem}\label{thm:decidability}
The reachability problem for one-dimensional Grammar Vector Addition Systems is decidable.
\end{theorem}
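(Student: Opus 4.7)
The plan is to follow the roadmap already announced in the abstract: reduce reachability in an arbitrary 1-GVAS to reachability in a thin 1-GVAS, and invoke the Atig--Ganty decidability result for thin (equivalently, finite-index) 1-GVAS from \cite{DBLP:conf/fsttcs/AtigG11}. Concretely, given an input 1-GVAS $G$ and source/target counter values $s,t \in \N$, I would effectively construct another 1-GVAS $G'$ whose reachability relation on $\N \times \N$ coincides with that of $G$ and which additionally has the thin property; then decidability follows immediately by running the Atig--Ganty algorithm on $G'$.

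First I would analyse each nonterminal $A$ of $G$ in the usual way: compute, by a semilinear analysis of the sign-abstraction of the terminals (vectors in $\Z$), the set of integer effects achievable by derivations starting from $A$, together with the minimum counter dip and maximum counter bump reachable in such derivations. This does not make the reachability relation of $G$ semilinear -- it is known not to be -- but it classifies nonterminals qualitatively into up-pumpable, down-pumpable, both-pumpable, or bounded, and yields effective bounds on the numerical parameters needed later.

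The core step is a rearrangement lemma for derivation trees: any derivation of $G$ from $s$ to $t$ can be transformed into an equivalent derivation that keeps only a bounded number of live nonterminals in every sentential form, so that the resulting grammar is thin. When a sentential form becomes too wide, the pigeonhole principle supplies many occurrences of some nonterminal whose subderivations can be grouped; I would replace these groups by a single invocation of a fresh nonterminal of $G'$ that generates the composite pump in thin shape. Because we are in dimension one, the counter evolves as a walk on $\N$, and the only obstruction to such regrouping is the nonnegativity constraint. This can be managed by first applying enough up-pumps to raise the counter to a safe height that dominates all subsequent downward dips, then performing the rearranged middle part, then discharging the down-pumps at the end; the upward and downward reserves needed are bounded by the parameters computed in the previous step.

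The hard part will be making this rearrangement rigorous for mixed nonterminals that can only push the counter up by first dipping it down (or vice versa), where the nonnegativity constraint genuinely couples the rearrangement across the derivation tree, and showing that the construction of $G'$ terminates with a uniformly bounded thinness parameter depending only on $G$. I expect an amortised potential argument: the ``debt'' that a dip-then-pump nonterminal imposes on the counter can be paid in advance by a precomputable number of safe up-pump iterations, and an induction on tree height shows the width collapses without altering the overall $s \to t$ behaviour. Once $G'$ is built and its thinness is certified, Theorem~\ref{thm:decidability} follows by invoking \cite{DBLP:conf/fsttcs/AtigG11} as a black box.
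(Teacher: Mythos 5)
Your high-level roadmap matches the paper's: eliminate branching nonterminals to obtain an equivalent thin 1-GVAS, then invoke Atig--Ganty~\cite{DBLP:conf/fsttcs/AtigG11} (via the observation that thin grammars are finite-index). But the heart of the matter is precisely the step you defer --- the ``rearrangement lemma'' --- and, as you state it, it is not provable. You would need every derivation of $G$ to be rewritable into one whose sentential forms have width bounded by a constant depending only on $G$; but that would make $G$ itself equivalent to a finite-index grammar by a uniform, $(s,t)$-independent bound, and the amortised-potential idea cannot deliver this. The ``debt'' incurred by a dip-then-pump subderivation is not a grammar-dependent constant: in the paper's own example with rules $Y \to 0$, $Y \to -1\,Y\,2$, $X \to 1$, $X \to -1\,X\,Y$ one has $a \trans{X} b$ iff $b \leq 2^a$, so the usable pump capacity grows with the input, not the grammar, and stacking one more nonterminal pushes this to tower growth. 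A precomputable reserve of up-pumps therefore cannot pay for the downward obligations of an arbitrary derivation; the nonnegativity coupling across the tree is exactly where the non-semilinearity lives.

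What the paper does instead is split $\N^2$ into three regions and argue differently in each: far from both axes (Lemma~\ref{lem:far-from-axis}), where the reachability relation \emph{is} semilinear and directly yields a thin grammar via Proposition~\ref{prop:semilin-to-thin}; near one axis (Lemma~\ref{lem:small-lines}), where it builds a finite \emph{supertree} of partial derivations explored in Euler-tour order, with a cycle-extraction discipline that bounds the depth of the current branch (not the width of sentential forms) by forbidding repeated input--label pairs, and with two stop conditions (all-lower leaves remaining, or a top nonterminal reached at high counter value); and a bounded square near the origin (Lemma~\ref{lem:bounded-area}), handled by direct enumeration of bounded-depth prefixes. Your proposal has no analogue of the supertree and its termination argument, and it does not separate the ``high counter'' regime (where Claim~\ref{cl:high-input-branching} gives a semilinear output set after pumping) from the ``low counter'' regime (where boundedness must be enforced combinatorially). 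Until the rearrangement lemma is replaced by something with the force of Lemmas~\ref{lem:far-from-axis}--\ref{lem:bounded-area}, the argument has a genuine hole and does not yet establish decidability.
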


The proof strongly relies on our conceptual contribution of partitioning the nonterminals of 1-GVAS
into \emph{thin} and \emph{branching} ones, and treat them differently.
This division can actually be defined for any context-free grammar,
regardless of whether the terminals are integers (as in 1-GVAS),
vectors in $\Z^d$ (as in d-GVAS), or letters from a finite alphabet.
We say that a nonterminal $X$ is branching if, intuitively, it can create two copies of itself; specifically, this means there is a derivation of the form $X \Rightarrow \alpha X \beta X \gamma$ for some sequences
$\alpha$, $\beta$ and $\gamma$ of nonterminals and terminals.
Otherwise, for any derivation $X \Rightarrow \alpha X \beta$, no nonterminal from $\alpha$ or $\beta$ can generate $X$, and $X$ is said to be thin. We call a GVAS \emph{thin} if all its nonterminals are thin.
It turns out that both thin and branching nonterminals have nice properties, but these properties
are useful in very different situations. 
Therefore the partition of nonterminals into thin and branching is very helpful for the analysis of 1-GVAS,
and potentially, may be helpful also for future research on higher dimensional GVAS.

We strongly rely on the work of Atig and Ganty, who showed in~\cite{DBLP:conf/fsttcs/AtigG11} that the reachability problem is decidable for a subclass of GVAS called finite-index GVAS. They are equivalent to thin GVAS.
Therefore, to show Theorem~\ref{thm:decidability}
it is enough to eliminate branching nonterminals from a 1-GVAS.
This is indeed possible, as shown in the core of our technical contribution, the following theorem.

\begin{theorem}\label{thm:main}
There is an algorithm, which given a 1-GVAS $G$ produces a thin 1-GVAS $H$ with the same reachability relation.
Moreover, the size of the produced thin 1-GVAS $H$ is bounded by $s = f(\size(G))$ for some triply-exponential function $f$
and the construction works in time $s \cdot g(s)$, where $g$ is the time complexity of the reachability problem for thin 1-GVAS.
\end{theorem}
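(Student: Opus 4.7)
The plan is to eliminate the branching nonterminals of $G$ one by one, processing them bottom-up in the production-dependency order so that by the time we handle a branching nonterminal $X$, every nonterminal reachable from $X$ in the grammar is either thin or has already been replaced. For each such $X$ I would build a thin 1-GVAS $H_X$ whose reachability relation coincides with the relation $R_X \subseteq \N \times \N$ of $X$ in the current grammar, and substitute $H_X$'s productions for those of $X$. Since this substitution is reachability-preserving at the nonterminal level, iterating over all branching nonterminals yields the required global thin 1-GVAS $H$.

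\paragraph*{Building $H_X$.}
The heart of the proof is showing that $R_X$ admits a thin realisation of controlled size. In one dimension $R_X$ is not in general semilinear (by the example cited in the introduction), but it should decompose into two regimes. In a "bounded" regime both the source and the target counter values are at most some threshold $M$; there $R_X$ is just a finite table whose entries can be discovered one by one by querying the thin 1-GVAS reachability oracle on suitable thin sub-instances, where the initial and final counter values are hard-coded into the nonterminal alphabet. In an "unbounded" regime the counter can be kept high enough to make the non-negativity constraint inactive on some middle portion of the derivation, so the analysis collapses to that of a $\Z$-PVAS, whose reachability relation is semilinear and can therefore be captured by a linear (hence thin) family of productions, modulo a bounded initial climb and final descent. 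Each ingredient — the finite table, the climb, the semilinear $\Z$-effect, the descent — is naturally expressible by thin productions, so their disjoint union is a thin 1-GVAS $H_X$ simulating $X$ exactly at the level of source-target pairs.

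\paragraph*{Oracle use, size bound, and main obstacle.}
Computing $M$ and the pumping data for the unbounded regime will require polynomially-many calls to the thin 1-GVAS reachability oracle on auxiliary instances of size polynomial in the current grammar, and $M$ itself should be at most exponential in the current grammar size by standard short-witness bounds for one-counter pushdown systems. Cascading this construction across the nonterminals of $G$ compounds these exponentials into the triply-exponential function $f$ and the $s \cdot g(s)$ running time claimed in the statement. The step I expect to require the most care is the decomposition itself: branching in 1-GVAS allows the single counter to be shared between sibling sub-derivations in intricate ways, and proving that, despite this sharing, every reachable source-target pair of a branching $X$ is already witnessed by some derivation assembled from a bounded palette of thinly-realisable patterns will demand a pumping and amortisation argument tailored to one-dimensional pushdown VASS, building on and sharpening the finite-index analysis of Atig and Ganty~\cite{DBLP:conf/fsttcs/AtigG11}.
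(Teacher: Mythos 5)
Your overall plan — eliminating branching nonterminals bottom-up and replacing each by a thin 1-GVAS with the same reachability relation — matches the paper's strategy. You also correctly identify the hard step: arguing that every reachable source--target pair of a branching $X$ is witnessed by a derivation drawn from a bounded, thinly-realisable family. However, two concrete gaps in the proposal prevent it from working as stated.

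First, the two-regime decomposition (bounded table plus ``unbounded'' semilinear $\Z$-collapse modulo a ``bounded initial climb and final descent'') misses the genuinely hard regime. The unbounded regime where the nonnegativity constraint can be globally ignored corresponds to the paper's Lemma~\ref{lem:far-from-axis}, covering $[B,\infty)^2$: there the paper does indeed prove semilinearity for a top-branching nonterminal, roughly as you envision. But the critical remaining case is \emph{small source, arbitrarily large target} (and symmetrically, large source, small target), i.e.\ the sets $\{a\}\times[T,\infty)$ with $a$ small. Here the counter starts near zero, so the nonnegativity constraint is active at the beginning of the derivation and cannot be collapsed to $\Z$-reachability; moreover the ``initial climb'' is not bounded in any obvious way, and in fact this is precisely where the reachability relation of 1-GVAS ceases to be semilinear. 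The paper's Lemma~\ref{lem:small-lines} handles this via a dedicated search structure (the supertree of Section~\ref{sec:mainproof}), which is the technical heart of the whole argument. Your sketch waves this regime into the ``bounded climb and descent,'' which is the missing idea.

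Second, the complexity accounting does not yield the claimed triply-exponential bound. You say the cascade across nonterminals ``compounds these exponentials into the triply-exponential function $f$,'' but if each elimination step could blow up the grammar exponentially in the size of the \emph{current} (already inflated) grammar, then $k$ nested branching components would give a $k$-fold exponential tower, not a fixed triple exponential. The paper explicitly flags this danger (right after Lemma~\ref{lem:top-only-branching}) and avoids it by formulating the per-step bound \emph{additively} in $\sum_V \size(H_V)$ with only a $\size(G)$ multiplicative factor (Lemma~\ref{lem:top-branching}), so that the induction over nonterminals in the proof of Theorem~\ref{thm:main} only contributes a polynomial outer factor. Without that refinement, your iteration does not give the stated bound.
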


It is not clear how to obtain function $f$ in Theorem~\ref{thm:main} smaller than triply-exponential (see Remark~\ref{rem:triply-exp}).
Therefore, even if the complexity of reachability for thin 1-GVAS is very low we cannot get a very low complexity for 1-GVAS
using Theorem~\ref{thm:main} directly. However, it is natural to conjecture that the complexity of reachability for 1-GVAS is elementary.
We leave this problem open, but we notice that Theorem~\ref{thm:main} immediately implies the following corollary,
which makes designing an elementary algorithm for 1-GVAS easier.

\begin{corollary}\label{cor:thin-elementary}
If the complexity of the reachability problem for thin 1-GVAS is elementary
then the complexity of the reachability problem for 1-GVAS is elementary (without the thinness assumption).
\end{corollary}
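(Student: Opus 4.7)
The plan is to combine Theorem~\ref{thm:main} with the hypothesised elementary algorithm for thin 1-GVAS in the most direct way possible. Given a 1-GVAS $G$ of size $n$, I would first invoke Theorem~\ref{thm:main} to produce a thin 1-GVAS $H$ with the same reachability relation as $G$, and of size $s \leq f(n)$ for some triply-exponential function $f$. Since the reachability relations of $G$ and $H$ coincide, any reachability instance $(u,v)$ on $G$ is a yes-instance if and only if $(u,v)$ is a yes-instance on $H$, so it suffices to decide reachability in $H$.

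Next, I would apply the assumed elementary-time procedure for thin 1-GVAS reachability, which runs in time $g(s)$ with $g$ elementary. The construction of $H$ itself, again by Theorem~\ref{thm:main}, takes time $s \cdot g(s)$. Summing these two contributions, the overall running time is bounded by $O\bigl(s \cdot g(s)\bigr) = O\bigl(f(n) \cdot g(f(n))\bigr)$.

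It then remains to observe that this bound is elementary in $n$. This follows from the standard fact that the class of elementary functions is closed under composition and multiplication: a triply-exponential function is elementary, an elementary function composed with it is still elementary (the composition merely raises the tower height by a constant), and multiplying by another elementary quantity preserves the class. Thus the entire procedure decides reachability for $G$ in elementary time in $\size(G)$.

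There is essentially no technical obstacle here, since all the work has been absorbed into Theorem~\ref{thm:main}; the only point worth checking is that the precise notion of ``elementary'' (a tower of exponentials of fixed finite height) really is preserved under composition with the triply-exponential blow-up, which is immediate from the definition. The value of the corollary is therefore methodological rather than technical: it identifies the thin case as the sole remaining obstacle to establishing an elementary upper bound for 1-GVAS reachability.
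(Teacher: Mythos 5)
Your proof is correct and matches the paper's approach: the paper treats Corollary~\ref{cor:thin-elementary} as an immediate consequence of Theorem~\ref{thm:main}, relying on exactly the closure of elementary functions under composition and multiplication that you spell out. The only bookkeeping point worth keeping in mind (and which you implicitly handle) is that the time bound $s \cdot g(s)$ in Theorem~\ref{thm:main} already has $g$ — the thin-reachability running time — folded into the construction cost, so the final answer genuinely is a single elementary function of $\size(G)$ once $g$ is elementary.
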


The rest of the paper is organised as follows.
In Section~\ref{sec:prelim} we introduce preliminary notions.
Then, in Section~\ref{sec:overview} we give an intuitive overview on the main ideas behind our contribution,
formulate key Lemmas~\ref{lem:far-from-axis},~\ref{lem:small-lines}~and~\ref{lem:bounded-area} and show
how they imply Theorem~\ref{thm:main}. In the following sections we prove the auxiliary lemmas.
In Section~\ref{sec:far-from-axis} we show Lemma~\ref{lem:far-from-axis}.
Next, in Section~\ref{sec:mainproof} we provide the most challenging proof of Lemma~\ref{lem:small-lines}.
Further, in Section~\ref{sec:bounded-area} we prove Lemma~\ref{lem:bounded-area}.
Finally, in Section~\ref{sec:future} we discuss possible interesting future research directions.


\section{Preliminaries}\label{sec:prelim}
For $a, b \in \Z$ by $[a, b]$ we denote the set $\{k \in \Z \mid a \leq k \leq b\}$.
For a vector $v \in \Z^d$ and $i \in [1,d]$ by $v[i]$ we denote the $i$-th entry of $v$.
For a relation $R \subseteq \N^2$ and $a \in \N$ by $R(a)$ we denote the set $\{b \mid R(a,b)\}$.
A set $S \subseteq \N^d$ is \emph{linear} if it is of a form $b + P^*$ for some $b \in \N^d$ and $P \subseteq_\fin \N^d$.
In other words $S = \set{b + k_1 p_1 + \ldots + k_m p_m \mid k_1, \ldots, k_m \in \N}$, where $P = \set{p_1, \ldots, p_m}$.
A set $S \subseteq \N^d$ is \emph{semilinear} if it is a finite union of linear sets.

\paragraph*{GVAS and derivations}
A \emph{d-dimensional Grammar Vector Addition System} (shortly, a d-GVAS or a GVAS if $d$ is irrelevant)
is briefly speaking a context-free grammar with terminals being vectors in $\Z^d$.
More precisely, a d-GVAS $G$ consists of a finite set of \emph{nonterminals} $\Nn(G)$ with a distinguished \emph{initial nonterminal} $S \in \Nn(G)$ and a finite set of \emph{production rules} $\Pp(G)$ of a form $R \subseteq \Nn(G) \times (\Nn(G) \cup \Z^d)^*$.
A rule $(X, \alpha) \in \Nn(G) \times (\Nn(G) \cup \Z^d)^*$ is often written as $X \to \alpha$.
The initial nonterminal of $G$ is denoted by $\S(G)$.
For a nonterminal $X \in \Nn(G)$ by $G_X$ we define $G$ with the initial nonterminal set to $X$.

The \emph{size} of a number $n \in \Z$ is the number of bits in the binary encoding of its absolute value.
The \emph{size} of a vector $v \in \Z^d$ is the sum of sizes of its entries.
The \emph{size} of $G$, denoted $\size(G)$, is the sum of cardinalities of the sets of nonterminals $\Nn(S)$,
rules $\Pp(G)$, length of all the right hand sides of the rules in $\Pp(G)$ and sizes of all the vectors occurring
in the right hand sides of the rules in $\Pp(G)$.

In the sequel we assume that each rule in $\Pp(G)$ is of the form $X \to Y Z$,
where $X$ is a nonterminal, while $Y$ and $Z$ are either nonterminals or terminals.
This assumption is useful in the sequel.
And it is harmless, as each GVAS can be transformed into an equivalent GVAS satisfying this assumption,
with at most polynomial blowup in size.
Indeed, if there are some rules, such that the right-hand side is shorter than two (so is either empty or contains
single nonterminal or terminal) then we can add to these right-hand sides additional terminals being the zero vectors. 
Therefore it is enough to show how we eliminate rules of longer right-hand sides.
The main idea is that any rule $X \to X_1 \cdots X_k$ for $k \geq 3$
can be simulated by adding nonterminals $Y_2, Y_3, \ldots, Y_{k-1}$ and creating the following rules:
$X \to Y_{k-1} X_k$, $Y_i \to Y_{i-1} X_i$ for $i \geq 3$ and $Y_2 \to X_1 X_2$.
Sometimes, for brevity, we use in examples rules with longer right-hand sides, but we do this only for illustrative reasons
and the GVAS we present can be transformed to binary ones using the above mentioned technique.

\paragraph*{Derivations, cycles and effects}
For a given d-GVAS $G$ and its nonterminal $X \in \Nn(G)$ an \emph{$X$-derivation} 
is an ordered tree (namely the order of children of a given node matters)
with nodes labelled by nonterminals or vectors from $\Z^d$ in such a way that:
\begin{itemize}
  \item the root is labelled by the nonterminal $X$, and
  \item for each internal node $v$ there is a rule $Y \to \alpha$ such that $v$ is labelled by the nonterminal $Y$
  and the sequence of labels of the children of $v$ is exactly $\alpha$.
\end{itemize}
A \emph{derivation} is an $X$-derivation for some $X \in \Nn(G)$.
Notice that only leaves can be labelled by vectors from $\Z^d$, but we do not require all the leaves 
of a derivation to be labelled by vectors, some can be labelled by nonterminals.
A \emph{yield} of a derivation is the word obtained from labels of its leaves read from the left to the right.
A derivation is \emph{complete} if no leaf is labelled by a nonterminal, so all leaves are labelled by vectors.
A derivation is \emph{simple} if no path contains two nodes labelled by the same nonterminal.
The \emph{size} of a derivation is the total number of vertices in this derivation.
We call a binary tree \emph{full} if each its node is either a leaf or has exactly two children.
Notice that derivations of the GVAS we consider are always full binary trees.

For an $X$-derivation $\tau$ of yield $\alpha$ we often write $X \Rightarrow \alpha$.
We often call $X \Rightarrow \alpha$ also a derivation, even though there can be many $X$-derivations
with the yield $\alpha$. By $X \Rightarrow \alpha$ we mean any of those $X$-derivations, for example $\tau$.

For a nonterminal $X \in \Nn(G)$ an \emph{$X$-cycle} is an $X$-derivation with a distinguished leaf labelled
by the same nonterminal $X$. A \emph{cycle} is an $X$-cycle for some nonterminal $X \in \Nn(G)$.
A cycle is \emph{complete} if only the distinguished leaf is labelled by a nonterminal.
A cycle is \emph{simple} if after removing the distinguished leaf it is a simple derivation, so does not contain
any path with two nodes labelled by the same nonterminal.

For a cycle, its \emph{left effect} is the sum of all the vectors labelling its leaves to the left of the distinguished leaf
and its \emph{right effect} is the sum of all the vectors labelling its leaves to the right of the distinguished leaf.
The \emph{global effect} of a cycle is the sum of its left effect and right effect.
The \emph{effect} of a complete derivation is the sum of all the vectors in its leaves.
Notice that cycles are not complete derivations, thus they do not have a well defined effect, but they have a global effect instead.

\paragraph*{Graph of nonterminals}
For any context-free grammar $G$ by the \emph{graph of $G$} we mean the graph
with
\begin{itemize}
  \item vertices being nonterminals $\Nn(G)$ of $G$, and
  \item edges $X \to Y$ if there is a rule $X \to \alpha$ in $\Pp(G)$ such that $Y \in \alpha$.
\end{itemize}
Intuitively, an edge $X \to Y$ means that $Y$ can be directly derived from $X$ and a path
from $X$ to $Y$ in the graph of $G$ means that $Y$ be be derived from $X$ by some sequence of rules.
We sometimes also say in such a situation that $X$ \emph{reaches} or \emph{derives} $Y$.
A \emph{component of $G$} is the strongly connected component of the graph of $G$. We slightly overload
the notation and often use the term component to denote the set of nonterminals belonging to a given component.
The \emph{dag of components of $G$} is the directed acyclic graph of the components of the graph of $G$,
where the component of $X$ has an edge to a component of $Y$ if $X$ reaches $Y$.
The \emph{top component}, denoted $\topp(G)$, is the component, which contains the initial nonterminal $\S(G)$.
A nonterminal $X$ is a \emph{top nonterminal} if $X \in \topp(G)$. Otherwise, it is a \emph{lower nonterminal}.
The set of all the lower nonterminals of $G$ is denoted by $\low(G)$.

\paragraph*{Thin and branching nonterminals}
We say that a nonterminal $X$ is \emph{branching} if there is a derivation $X \To \alpha X \beta X \gamma$
for some sequences of nonterminals and terminals $\alpha$, $\beta$ and $\gamma$.
Intuitively, a branching nonterminal can derive two copies of itself simultaneously.
Otherwise we say that $X$ is \emph{thin}. Notice the following claim.

\begin{claim}\label{cl:reachable-branching}
If $X$ and $Y$ are in the same component and $X$ is branching then $Y$ is also branching.
\end{claim}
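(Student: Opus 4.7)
The plan is to exploit the defining property of branching directly: if $X$ is branching, witnessed by a derivation $X \To \alpha X \beta X \gamma$, then to show $Y$ is branching I need to exhibit some derivation $Y \To \alpha' Y \beta' Y \gamma'$.

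Since $X$ and $Y$ lie in the same strongly connected component of the graph of $G$, there is a path from $Y$ to $X$ and a path from $X$ to $Y$ in that graph. By the definition of the edge relation (an edge $U \to V$ exists whenever $V$ occurs on the right-hand side of some rule for $U$), such paths immediately yield derivations: concretely, there exist sequences $u, v, u', v' \in (\Nn(G) \cup \Z^d)^*$ such that $Y \To u X v$ and $X \To u' Y v'$, obtained by repeatedly applying rules that witness each edge on the corresponding path (leaving the other child nonterminals untouched).

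I would then simply compose these three derivations. Starting from $Y$, first apply $Y \To u X v$ to reach $u X v$; then apply the branching derivation of $X$ in place to obtain $u \, \alpha X \beta X \gamma \, v$; finally apply $X \To u' Y v'$ to each of the two occurrences of $X$, yielding
\[
Y \To u\,\alpha\,u'\,Y\,v'\,\beta\,u'\,Y\,v'\,\gamma\,v.
\]
Setting $\alpha' = u\alpha u'$, $\beta' = v' \beta u'$ and $\gamma' = v' \gamma v$ witnesses that $Y$ is branching.

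There is no real obstacle here; the only thing to be a bit careful about is the existence of the ``sideways'' derivations $Y \To u X v$ and $X \To u' Y v'$ from the existence of a directed path in the graph of $G$, but this is immediate from unfolding one rule per edge and carrying the remaining nonterminals along unchanged.
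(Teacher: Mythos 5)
Your proof is correct and is essentially the same as the paper's: derive $X$ from $Y$, apply the branching derivation of $X$, then derive $Y$ from each of the two resulting copies of $X$. The only difference is notation (your $u, v, u', v'$ versus the paper's $\delta_3, \delta_4, \delta_1, \delta_2$).
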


\begin{proof}
This is because we have derivations $X \To \alpha X \beta X \gamma$, $X \To \delta_1 Y \delta_2$, $Y \To \delta_3 X \delta_4$,
so also
\[
Y \To \delta_3 X \delta_4 \To \delta_3 \alpha X \beta X \gamma \delta_4 \To \delta_3 \alpha \delta_1 Y \delta_2 \beta \delta_1 Y \delta_2 \gamma \delta_4.
\]
Therefore, indeed $Y$ is branching as well.
\end{proof}
By Claim~\ref{cl:reachable-branching} for each component either all the nonterminals in it are branching or all are thin.
Therefore we call a component \emph{branching} if the nonterminals in it are branching and \emph{thin} otherwise.

Notice that if $X$ is thin, $X \To \alpha Y \beta$ and $Y$ is in the same component as $X$, then necessarily
all the nonterminals in $\alpha$ and $\beta$ cannot derive $X$, they are in the other components.
The intuition is that a thin nonterminal can only produce one copy of itself, all the other nonterminals
should be from lower components (i.e. components lower in the dag of component).
If all the nonterminals in a context-free grammar $G$ are thin then we say that \emph{$G$ is thin}.

Atig and Ganty in~\cite{DBLP:conf/fsttcs/AtigG11} defined finite-index grammars and showed
that the reachability problem is decidable for them. Their result is important for us,
so we need to argue that each thin grammar is also a finite-index one. 
We write here $\alpha X \beta \to \alpha \delta \beta$ for any rule $X \to \delta$ in $\Pp(G)$.
In other words $\gamma \to \gamma'$ if $\gamma'$ is obtained from $\gamma$ by taking one
nonterminal $X$ from $\gamma$ and substituting it by the right-hand side of a rule $X \to \delta$.
Then, we say that a grammar $G$ is \emph{finite-index} if there exists a $k \in \N$
such that for each word over terminals $\alpha$ derived from $\S(G)$ there is a sequence
\[
\S(G) \to \alpha_1 \to \alpha_2 \to \ldots \to \alpha_m = \alpha
\]
such that for each $i \in [1,m]$ the word $\alpha_i$ contains at most $k$ occurrences of nonterminals.
\begin{claim}\label{cl:finite-index-thin}
Every thin context-free grammar is finite-index.
\end{claim}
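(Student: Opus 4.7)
The plan is to induct on the height $h$ of the dag of components of $G$, showing that there is a constant $k_h$ such that every complete derivation $X \To w$ from a nonterminal $X$ sitting in a component of height at most $h$ can be linearised into a sequential derivation whose intermediate sentential forms contain at most $k_h$ nonterminals. Applied to $X = \S(G)$ and $h$ equal to the full height of the dag of components, this yields finite-indexness with bound $k_h$.

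The structural observation I would first isolate is that in any complete derivation tree $T$ rooted at a thin nonterminal $X$, the nodes labelled by nonterminals of the component $C$ of $X$ form a single root-to-descendant path, the \emph{spine}. Indeed, if two sibling subtrees under some spine node both contained $C$-labelled descendants $Z_1$ and $Z_2$, then using that $Z_1, Z_2 \in C$ each derive $X$ back, one would compose the partial derivations to obtain a derivation $X \To \alpha X \beta X \gamma$, contradicting thinness (the same composition idea used in the proof of Claim~\ref{cl:reachable-branching}). A similar argument rules out a spine ``gap'' through a child whose label lies outside $C$, since that label would then reach $X$ and also be reached from $X$, hence belong to $C$ itself. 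In the binary setting each spine rule is therefore of the form $Y \to \gamma^L Z \gamma^R$ with $Y, Z \in C$, and $\gamma^L \gamma^R$ is a single symbol: either a terminal, or a nonterminal lying in a strictly lower component.

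The linearisation then walks down the spine one rule at a time, and after each spine step immediately expands the freshly produced side symbol (if it is a nonterminal) to its terminal yield using the induction hypothesis. During this side expansion the only other nonterminal present is the current spine nonterminal, so every intermediate sentential form contains at most $k_{h-1}+1$ nonterminals; hence $k_h := k_{h-1}+1$ works, with base case $k_0 = 1$ since in a sink thin component the whole derivation stays inside $C$ and the spine property keeps at most one nonterminal at a time. The only delicate step, and the place where the argument must be made rigorous, is the spine lemma itself; once a two-branch $C$-occurrence is seen to force a forbidden $X \To \alpha X \beta X \gamma$ derivation, the rest is a clean induction on the dag of components.
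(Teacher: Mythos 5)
Your proof is correct and follows essentially the same route as the paper: both arguments exploit that in a thin grammar the top-component nonterminals in any derivation tree must lie on a single spine (since two incomparable occurrences would compose into a witness of branching), and both linearise down this spine while expanding the side symbols by induction on the dag of components. The only cosmetic difference is that you induct on the \emph{height} of the dag rather than on the \emph{number} of components, and you state the spine lemma explicitly, whereas the paper's write-up invokes it implicitly when asserting that the sibling $X_1$ of the spine node $S_1$ lies in a lower component.
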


For the smoothness of reading we put the proof of Claim~\ref{cl:finite-index-thin}
(which is easy and rather straightforward) in Section~\ref{sec:app1}.

\paragraph*{Reachability relations of a GVAS}
For a d-GVAS $G$, its nonterminal $X$ and two vectors $u, v \in \N^d$ we say that $u$ \emph{$X$-reaches} $v$
if there is a complete $X$-derivation $\tau$ with the yield $v_1, \ldots, v_k$, where $v_i \in \Z^d$
for all $i \in [1,k]$, such that for each $j \in [1,k]$ we have $u + v_1 + \ldots + v_j \in \N^d$
and additionally $u + v_1 + \ldots + v_k = v$. Intuitively, if we start in vector $u \in \N$, traverse the leaves
of a derivation $\tau$ from left to right and add their labels (in $\Z^d$) to our current vector, then we never drop counter values below
zero (so always stay in $\N^d$) and at the end of the traversal we reach vector $v \in \N^d$.
We write then $u \trans{X} v$ or $u \trans{G} v$ if $X$ is the initial nonterminal of $G$.

The \emph{reachability relation of $G$}, denoted $\reach{G}$, is the set of pairs $(u, v) \in \N^d \times \N^d$
such that $u \trans{G} v$. Notice that for each d-GVAS $G$ the reachability relation $\reach{G}$ is \emph{diagonal}, so for each $\Delta \in \N^d$ and $u, v \in \N^d$, if $(u, v) \in \reach{G}$ then also $(u + \Delta, v + \Delta) \in \reach{G}$. Intuitively, this is because if you shift counters
by $\Delta \in \N^d$ they will only be higher, so that operation cannot cause dropping below zero.

For 1-GVAS $G$ the reachability relation $\reach{G}$ is a subset of $\N^2$.
Then for a fixed value $a \in \N$ one can see $\reach{G}(a)$ (defined above as $\{b \mid \reach{G}(a,b)\}$) to be the reachability set
of $a$. Notice also that $\set{(a,b) \mid \reach{G}(a,b)}$ is a subset of a vertical line in $\N^2$, so often informally refer to $\reach{G}(a)$
as a vertical line.

The following notion plays an important role in our proof: for a set $S \subseteq \N^2$ and $1$-GVAS $G$, $H$ we say that $G$ \emph{$S$-exactly under-approximates} $H$ if the following two points hold:
(1) $\reach{G} \subseteq \reach{H}$, and (2) $\reach{G} \cap S = \reach{H} \cap S$.
We write then $G \under_S H$.

\paragraph*{Reachability and coverability problems}
The \emph{reachability problem for d-GVAS} asks, given d-GVAS $G$ and two vectors $u, v \in \N^d$,
whether $u \trans{G} v$. The \emph{coverability problem for d-GVAS} asks, given d-GVAS $G$ and two vectors $u, v \in \N^d$
whether there exists $v' \in \N^d$ such that $u \trans{G} v'$ and for each $i \in [1,d]$ we have $v[i] \leq v'[i]$.

\begin{wrapfigure}{l}{0.3 \textwidth}
\centering
\begin{tikzpicture}[
    sibling distance=20mm, level distance=10mm]

\node[circle,fill, minimum size=1mm, inner sep=0pt] (root) {} 
    child {
        node[circle,fill, minimum size=1mm, inner sep=0pt] (left) {} 
            child { node[circle,fill, minimum size=1mm, inner sep=0pt] (leftleft) {} } 
            child { node[circle,fill, minimum size=1mm, inner sep=0pt] (leftright) {} } 
    }
    child {
        node[circle,fill, minimum size=1mm, inner sep=0pt] (right) {} 
    };

\node[left=2mm of root] (1) {};
\node[right=2mm of root] (10) {};

\node[left=2mm of left] (2) {};
\node[right=2mm of left] (7) {};

\node[left=2mm of right] (8) {};
\node[right=2mm of right] (9) {};

\node[left=2mm of leftleft] (3) {};
\node[right=2mm of leftleft] (4) {};

\node[left=2mm of leftright] (5) {};
\node[right=2mm of leftright] (6) {};

\draw[red, thick] (1.west) to (3.west);

\draw[red, thick] (3.west) arc[start angle=180, end angle=360, radius=0.52cm];
\draw[red, thick] (5.west) arc[start angle=180, end angle=360, radius=0.52cm];
\draw[red, thick] (8.west) arc[start angle=180, end angle=360, radius=0.52cm];

\draw[red, thick] (4.east) arc[start angle=180, end angle=0, radius=0.47cm];
\draw[red, thick] (7.east) arc[start angle=180, end angle=0, radius=0.47cm];

\draw[red, thick] (6.east) to (7.east);

\draw[red, thick, ->] (9.east) to (10.east);

\end{tikzpicture}
\caption{Direction of the flow.}
\label{fig:flow-direction}
\end{wrapfigure}

\paragraph*{Derivations with counters}
We assume now that we work with derivations of 1-GVAS, not a general GVAS.
Therefore, leaves of derivations are labelled by nonterminals or by integer values.
An \emph{$X$-derivation with counters} is a complete $X$-derivation in which every node is additionally labelled with
two nonnegative integer numbers: \emph{input} and \emph{output}. We say simply \emph{derivation with counters} if $X$ is irrelevant.
A derivation with counters is \emph{valid} if, intuitively, the input and output counters correspond to a correct trip around the tree,
in the order illustrated on Figure~\ref{fig:flow-direction}.

\begin{wrapfigure}{r}{0.45 \textwidth}
\centering
\begin{tikzpicture}[
    sibling distance=30mm, level distance=10mm]
\node[circle, minimum size=1mm, inner sep=0pt] (root) {X} 
    child {
        node[circle, minimum size=1mm, inner sep=0pt] (left) {Y} 
            child { node[circle, minimum size=1mm, inner sep=0pt] (leftleft) {-10} } 
            child { node[circle, minimum size=1mm, inner sep=0pt] (leftright) {12} } 
    }
    child {
        node[circle, minimum size=1mm, inner sep=0pt] (right) {-7} 
    };
\node[red, left=2mm of root] (1) {20};
\node[red, right=2mm of root] (10) {15};
\node[red, left=2mm of left] (2) {20};
\node[red, right=2mm of left] (7) {22};
\node[red, left=2mm of right] (8) {22};
\node[red, right=2mm of right] (9) {15};
\node[red, left=2mm of leftleft] (3) {20};
\node[red, right=2mm of leftleft] (4) {10};
\node[red, left=2mm of leftright] (5) {10};
\node[red, right=2mm of leftright] (6) {22};
\end{tikzpicture}
\caption{Derivation with counters.}
\label{fig:deriv-counters}
\end{wrapfigure}
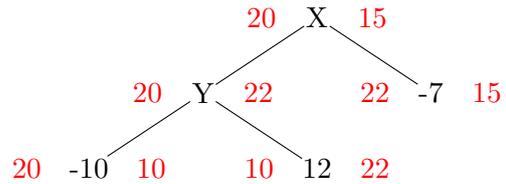

More precisely, a derivation with counters if valid, if the following four
conditions, called \emph{flow conditions}, are satisfied: (1) for each node with children, the input of the left child equals the input of the parent, (2) for each node with children, the output of the right child equals the output of the parent, (3) for each node with children, the output of the left child equals the input of the right child, and (4) for each leaf, its output is the sum of its input and its label.

It is easy to notice that for a 1-GVAS $G$ and $a, b \in \N$ there is a valid derivation with counters with the root having input $a$ and output $b$ if and only if $a \trans{G} b$. On Figure~\ref{fig:deriv-counters} we present an example of a derivation with counters witnessing
$20 \trans{X} 15$, the input and output labels of the nodes are drawn in red.

For a complete derivation $\tau$ of 1-GVAS we sometimes call its yield a \emph{run},
then a run is just a sequence of integers: $a_1, \ldots, a_k$.
We say that such a run is \emph{valid at $n$}, for $n \in \N$ if for each $i \in [1,k]$
we have $n + a_1 + \ldots + a_i \geq 0$. Notice that in such a case we can label the derivation $\tau$
by input and output values and get a derivation tree with counters, with input of the root equal $n$
and output of the root equal $n + a_1 + \ldots + a_k$. For example the derivation illustrated on Figure~\ref{fig:deriv-counters}
defines a run $-10 \hskip 0.15cm 12 \hskip 0.15cm -7$. It is valid at $20$, as witnessed at the picture, but it is also valid at $10$.
It is however not valid at $9$ or lower value.

\paragraph*{Reverse of a GVAS}
For a GVAS $G$ we define its \emph{reverse}, denoted $G_\rev$, as follows: it has all the same nonterminals,
but for a rule $X \to \alpha \beta$ in $\Pp(G)$ we add the rule $X \to \beta' \alpha'$ to $\Pp(G_\rev)$,
where $\alpha' = \alpha$ if $\alpha$ is a nonterminal and $\alpha' = - \alpha$ if $\alpha$ is a terminal,
similarly for $\beta'$. For example if $X \to 5 Y \in \Pp(G)$ then $X \to Y -5 \in \Pp(G_\rev)$.
It is easy to see the following claim.

\begin{claim}\label{cl:reverse}
Let $G$ be a 1-GVAS and $G_\rev$ be its reverse. Then for any $s, t \in \N$
it holds $(s, t) \in \reach{G}$ if and only if $(t, s) \in \reach{G_\rev}$.
\end{claim}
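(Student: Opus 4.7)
The plan is to exhibit a natural involution $\rho$ on derivation trees that carries $X$-derivations of $G$ to $X$-derivations of $G_\rev$, and to verify that it sends a valid witness for $s \trans{X} t$ in $G$ to a valid witness for $t \trans{X} s$ in $G_\rev$.

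First, I would define $\rho$ recursively on the tree structure. Recall that every rule has right-hand side of length exactly two, so every internal node has exactly two children. If a node labelled $X$ arises from a rule $X \to YZ$ with subtrees $\tau_L, \tau_R$ rooted at $Y$ and $Z$, then $\rho$ produces a node labelled $X$ with children $\rho(\tau_R), \rho(\tau_L)$, matching the corresponding rule $X \to Z' Y' \in \Pp(G_\rev)$. On leaves, $\rho$ fixes nonterminal labels and negates terminal labels $a \in \Z$ to $-a$. By construction $\rho(\tau)$ is an $X$-derivation of $G_\rev$ whenever $\tau$ is an $X$-derivation of $G$, and $\rho$ is an involution (after the canonical identification $(G_\rev)_\rev = G$). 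A straightforward induction on $\tau$ then shows that if the yield of a complete derivation $\tau$ is $a_1, \ldots, a_k \in \Z$, then the yield of $\rho(\tau)$ is $-a_k, -a_{k-1}, \ldots, -a_1$.

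The key arithmetic observation is now the following: assume $\tau$ witnesses $s \trans{X} t$, i.e.\ $s + \sum_{i=1}^{j} a_i \geq 0$ for every $j \in [1,k]$ and $s + \sum_{i=1}^{k} a_i = t$. Traversing the yield of $\rho(\tau)$ from starting value $t$, the partial sum after $j$ steps equals $t - \sum_{i=k-j+1}^{k} a_i = s + \sum_{i=1}^{k-j} a_i$, which is $\geq 0$ by assumption, and the final value (at $j = k$) is exactly $s$. Hence $\rho(\tau)$ witnesses $t \trans{X} s$ in $G_\rev$, and applying $\rho$ once more gives the converse direction. The only bookkeeping point---the place where a careless attempt could go astray---is matching up the partial-sum indices after reversal: the partial sums of the reverse run correspond not to the original partial sums, but to the complementary sums $s + \sum_{i=1}^{k-j} a_i$, which is precisely why non-negativity transfers.
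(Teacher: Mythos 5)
Your proof is correct. The paper does not supply a proof here (it merely says ``It is easy to see the following claim''), and your argument — mirror the derivation tree, negate terminal labels, then observe that the $j$-th partial sum of the reversed run equals the $(k-j)$-th partial sum of the original shifted by $s$, so nonnegativity transfers — is exactly the natural argument the authors are implicitly relying on. You also correctly flag the one spot where care is needed (the index bookkeeping after reversal), and your definition of $\rho$ matches the paper's definition of $G_\rev$ (swap the two items on each right-hand side, negate terminals, keep nonterminals). Nothing is missing.
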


\paragraph*{Functions and complexity}
A function $f: \N \to \N$ is \emph{exponential} (or \emph{$1$-fold exponential}) if $f(n) \leq 2^{P(n)}$ for some polynomial $P: \N \to \N$.
A function $f: \N \to \N$ is \emph{$(k+1)$-fold exponential} if $f(n) \leq 2^{g(n)}$ for some $k$-fold exponential function $g: \N \to \N$.
We call $2$-fold exponential functions \emph{doubly-exponential}, $3$-fold exponential functions \emph{triply-exponential}, etc.
A function is \emph{elementary} if it is $k$-fold exponential for some $k \in \N$.

We define the fast growing hierarchy of functions as follows. We have $F_1(n) = 2n$
and $F_{k+1}(n) = \underbrace{F_k \circ \ldots \circ F_k}_{n \text{ times }}(1)$ for $k \geq 1$.
Thus, in particular, $F_2(n) = 2^n$ and $F_3(n) = \underbrace{2^{2^{\iddots^2}}}_{n \text{ times }}$.
We define $\tower(n) = F_3(n)$ and $\ackermann(n) = F_n(n)$.

For the above defined families of functions there are naturally defined complexity classes. For example, the complexity of a problem
is \emph{elementary} if it can be solved in some elementary time. There are similar definitions for the complexity classes \tower
and \ackermann; for more details see~\cite{DBLP:journals/toct/Schmitz16}.


\section{Overview}\label{sec:overview}
Recall that the main contribution of this paper is Theorem~\ref{thm:main}, which in particular implies Theorem~\ref{thm:decidability}.
Below we provide an intuitive guide through its proof. We say that a 1-GVAS is \emph{top-branching} if the component of the
initial nonterminal is branching. A 1-GVAS is \emph{top-only-branching} if the component of the initial nonterminal is branching,
but all the other are thin. First notice that the following lemma would be sufficient to prove that for each 1-GVAS one can
compute an equivalent thin 1-GVAS.

\begin{lemma}\label{lem:top-only-branching}
Let $G$ be a top-only-branching 1-GVAS.
Then one can compute a thin 1-GVAS $H$ equivalent to $G$.
\end{lemma}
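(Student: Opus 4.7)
Since every lower component of $G$ is already thin, the only obstacle to thinness is the branching top component. The idea is to cover $\N^2$ by a constant number of regions and, for each, build a thin 1-GVAS that exactly under-approximates $G$ on that region; these are then glued under a fresh initial nonterminal.

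Fix a threshold $N$ depending on $\size(G)$ and define
\[
S_1 = \setof{(s,t) \in \N^2}{s,t \geq N}, \quad S_2 = \setof{(s,t) \in \N^2}{s < N}, \quad S_3 = \setof{(s,t) \in \N^2}{t < N},
\]
so that $S_1 \cup S_2 \cup S_3 = \N^2$. The plan is to produce thin 1-GVAS $G_1, G_2, G_3$ with $G_i \under_{S_i} G$, using the three key lemmas stated in Section~\ref{sec:overview}. Lemma~\ref{lem:far-from-axis} should yield $G_1$: when both endpoints are comfortably above zero, the nonnegativity constraint becomes slack, branching derivations can be serialised into thin ones, and the top component can be replaced by a thin substitute. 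Lemma~\ref{lem:small-lines}, applied to each of the finitely many values $s \in [0,N)$, should produce a thin 1-GVAS for each vertical line $\reach{G}(s)$, whose disjoint union (with a dispatching initial nonterminal testing the input by subtracting $s$) gives $G_2$. Applying the same to $G_\rev$ and reversing back via Claim~\ref{cl:reverse} should yield $G_3$. Lemma~\ref{lem:bounded-area} is presumably used inside these constructions (or, if needed, to patch the bounded corner $\setof{(s,t)}{s,t<N}$ where both small-coordinate regimes meet).

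The assembly step is then a routine combination. Take a disjoint union of the nonterminal sets of $G_1, G_2, G_3$ (renaming if necessary), introduce a fresh initial nonterminal $X$, and add the rules $X \to 0 \, \S(G_i)$ for $i=1,2,3$, where $0$ is the zero terminal used only to respect the binary right-hand-side convention. Since $X$ does not occur on any right-hand side and each $G_i$ is thin, the resulting 1-GVAS $H$ is thin. Correctness of $\reach{H} = \reach{G}$ follows from exact under-approximation: on the one hand $\reach{H} = \bigcup_i \reach{G_i} \subseteq \reach{G}$ by the first clause of $\under$; on the other hand, any $(s,t) \in \reach{G}$ belongs to some $S_i$ and hence, by $G_i \under_{S_i} G$, to $\reach{G_i} \subseteq \reach{H}$.

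The main obstacle is not this gluing, which is essentially bookkeeping, but rather establishing the three key lemmas. In particular I expect Lemma~\ref{lem:small-lines} to be the technical core, since describing a single vertical reachability line exactly requires analysing how arbitrarily deep branching derivations can be compressed into thin ones against a counter that is forced to stay near zero — precisely the regime in which the slack exploited by Lemma~\ref{lem:far-from-axis} disappears.
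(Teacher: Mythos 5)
Your overall shape — identify the three key lemmas, cover the quadrant by regions, construct a thin $S_i$-exact under-approximant for each region, then assemble — matches the paper's strategy. But the assembly step is not the ``routine bookkeeping'' you claim, and there are two concrete gaps.

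First, your $S_2$ and $S_3$ are not what Lemma~\ref{lem:small-lines} actually delivers. For a fixed $a < N$, Lemma~\ref{lem:small-lines} produces a thin 1-GVAS that $S$-exactly under-approximates $G$ only for $S = \set{a} \times [T,\infty)$, where $T$ is some (exponential) threshold. It does \emph{not} produce a thin 1-GVAS for the entire vertical line $\reach{G}(a)$, which is what you would need to have $G_2 \under_{S_2} G$ with $S_2 = [0,N) \times \N$. After taking a union over $a \in [0,N)$, and similarly for the reversed GVAS, what you actually cover is $\N^2 \setminus [0,B']^2$ for some computable bound $B'$ — a bounded square near the origin remains uncovered. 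That square is not optional or patchable-by-hand: its reachability still involves derivations of arbitrary size (cycles of left-effect zero can be pumped even when the input and output are both tiny), so you cannot enumerate it directly without already having an algorithm for $\reach{G}$.

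Second, and consequently, your disjoint-union glue with a fresh initial nonterminal $X$ and rules $X \to 0 \, \S(G_i)$ does not suffice: since $\bigcup_i S_i$ after the correction above is $\N^2 \setminus [0,B']^2$ rather than all of $\N^2$, the resulting $H$ would miss the corner. Lemma~\ref{lem:bounded-area} is not a patch but the load-bearing gluing mechanism. It takes as input thin $S$-exact under-approximants $H_X$ \emph{for every top nonterminal $X$} (not only the initial one — your plan only computes them for $\S(G)$), and uses the observation that in any $G$-derivation without left- and right-zero-effect cycles, every branch reaches within depth $(B'+1)^2 \cdot |\Nn(G)|$ a node that is a terminal, a lower nonterminal, or a top node whose input or output already lies outside $[0,B']^2$. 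The gluing then enumerates all such bounded-depth prefix derivations as initial rules and hangs the $H_W$'s at their leaves. That is the construction your proposal needs and is missing; once you add it, the argument matches the paper's (which, for the record, proves the slightly stronger Lemma~\ref{lem:top-branching} directly to control the size blow-up, and obtains Lemma~\ref{lem:top-only-branching} as a special case).
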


Indeed, to transform any 1-GVAS $G$ to a thin 1-GVAS we look at the dag of components of $G$.
We choose any component $C$ which is branching, but all the components below it are thin.
For every nonterminal $V$ in the component $C$ the 1-GVAS $G_V$ is top-only-branching.
Therefore by Lemma~\ref{lem:top-only-branching} we can compute thin 1-GVAS $H_V$ equivalent to $G_V$.
Next, by substituting in $G$ in each rule containing any nonterminal $V$ from the component $C$
by the initial nonterminal from $H_V$ we decrease the amount of branching nonterminals in $G$.
Thus, continuing this way, at some point there will be no more branching components and in that way
we compute a thin 1-GVAS $H$ equivalent to $G$.

The above sketch illustrates well our approach. However, we need to make it more subtle, as Theorem~\ref{thm:main}
not only states computability of an equivalent thin 1-GVAS. It also claims that under the condition that reachability for
thin 1-GVAS works in elementary time the computation of the equivalent thin 1-GVAS also works in elementary time
(and in consequence also produces a thin 1-GVAS of elementary size). Therefore our lemma also needs to make statements
about the complexity. We will see in a moment, that we need to make it in a more subtle way. Assume for a moment
that in Lemma~\ref{lem:top-only-branching} we claim that the computation of $H$ takes at most exponential
time and therefore $H$ is of size at most exponential wrt. size of $G$. If our 1-GVAS $G$ has $k$ components
forming a path (one below the other) and each is branching then, while transforming $G$ into equivalent thin 1-GVAS $H$,
the exponential blowup will be nested $k$ times. Therefore, this way of producing $H$ may result in a non-elementary algorithm,
of complexity around \tower. For that reason, we need to formulate the following stronger and more subtle lemma, which takes
as input not a top-only-branching 1-GVAS, but a top-branching 1-GVAS.

\begin{lemma}\label{lem:top-branching}
Let $G$ be a top-branching 1-GVAS. 
Suppose that for each $V \in \low(G)$ we have a thin 1-GVAS $H_V$ of same reachability relation.
Then can compute a thin 1-GVAS $H$ equivalent to $G$.
Moreover, there is a triply-exponential function $f$ such that the size of $H$ is at most
$s := f(\size(G)) + \size(G) \cdot \sum_{V \in \low(G)}\size(H_V)$
and the construction works in time at most $s \cdot g(s)$ where $g$ is the time complexity of reachability for thin 1-GVAS.
\end{lemma}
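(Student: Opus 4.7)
The plan is to reduce Lemma~\ref{lem:top-branching} to a suitable version of Lemma~\ref{lem:top-only-branching} by first substituting the given thin 1-GVAS $H_V$ for each lower nonterminal and then applying the top-only-branching construction while treating the $H_V$ as opaque subsystems whose internal rules are not further transformed.

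First I would build an intermediate 1-GVAS $G'$ from $G$ as follows: for every $V \in \low(G)$, rename the initial nonterminal of $H_V$ to $V$ (keeping the remaining nonterminals of $H_V$ fresh) and append all of $\Pp(H_V)$ to $\Pp(G')$, discarding the original rules of $G$ whose left-hand side is $V$. Since by assumption $\reach{H_V} = \reach{G_V}$ for every $V \in \low(G)$, the 1-GVAS $G'$ has the same reachability relation as $G$. By construction, the top component of $G'$ is exactly $\topp(G)$ (which is branching by hypothesis), while every other component is a component of some $H_V$ and is therefore thin. Hence $G'$ is top-only-branching, with $\size(G') \leq \size(G) + \sum_{V \in \low(G)} \size(H_V)$.

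Second I would apply to $G'$ the top-only-branching reduction underlying Lemma~\ref{lem:top-only-branching}, using Lemmas~\ref{lem:far-from-axis}, \ref{lem:small-lines} and \ref{lem:bounded-area} to cover $\reach{G'}$ as a union of three $S$-exact under-approximations over a partition of $\N^2$ into a far-from-axis region, a family of small lines, and a bounded area. The essential point is that each of these three sub-constructions manipulates the lower nonterminals of $G'$ only through their reachability relations: the $H_V$'s are incorporated verbatim as subgrammars lying strictly below the newly built top component, so they contribute only additively. The triply-exponential blowup encapsulated by $f$ arises solely from restructuring the top branching component, which depends on $\size(G)$ rather than on $\size(G')$, and the resulting 1-GVAS is thin because its top component is thin by construction and the plugged-in $H_V$ are already thin. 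Summing the two contributions gives $\size(H) \leq s = f(\size(G)) + \size(G) \cdot \sum_{V \in \low(G)} \size(H_V)$. The running-time bound is obtained by noting that Lemmas~\ref{lem:far-from-axis}, \ref{lem:small-lines} and \ref{lem:bounded-area} require only polynomially many reachability queries in thin 1-GVAS of size at most $s$, each costing $g(s)$ time.

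The main obstacle will be to implement the second step so that the triply-exponential blowup genuinely depends only on $\size(G)$ and not on $\size(G')$. This requires that the bounded-area step queries reachability of the lower subsystems rather than unfolding them, and that the small-lines and far-from-axis steps reuse the existing $H_V$ as subroutines instead of reprocessing their rules through the construction. A secondary subtlety is to ensure that this decoupling is compatible with the bottom-up iteration sketched after Lemma~\ref{lem:top-only-branching} in the overview: with the present formulation, each invocation of Lemma~\ref{lem:top-branching} contributes a single additive $f(\size(G))$ term and a multiplicative factor $\size(G)$ to the running sum of sizes, so the total size after processing all branching components of the original 1-GVAS stays triply-exponential in the input size rather than climbing the \tower hierarchy.
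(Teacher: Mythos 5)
Your high-level plan is in the right family — both you and the paper prove Lemma~\ref{lem:top-branching} by combining Lemmas~\ref{lem:far-from-axis}, \ref{lem:small-lines} and \ref{lem:bounded-area} — but the detour through $G'$ is where your proposal diverges from the paper and, more importantly, where it runs into trouble that you yourself flag as ``the main obstacle'' without resolving.

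The paper never constructs the merged 1-GVAS $G'$. Instead, Lemmas~\ref{lem:small-lines} and \ref{lem:bounded-area} are stated so as to take as input the top-branching $G$ \emph{together with} the thin 1-GVAS $H_V$ for each $V\in\low(G)$ as separate data, and their size bounds have the form $f_1(\size(G)) + \sum_{V\in\low(G)}\size(H_V)$ (respectively $f_2(\size(G)+B) + \sum_V\size(H_V) + \sum_X\size(H_X)$), where the elementary function is applied to $\size(G)$ only, and the $H_V$ enter additively. That is exactly the decoupling you say you ``require,'' but it is achieved as a structural design choice in the lemma statements rather than something to retrofit afterwards. If instead you literally apply the auxiliary lemmas to $G'$ as you describe, the syntactic size fed into them is $\size(G') = \size(G) + \sum_V \size(H_V)$, and several of the bounds produced by those lemmas — most visibly the exponential bound $B$ from Lemma~\ref{lem:far-from-axis} and the threshold $T$ and the blowup function $f_1$ from Lemma~\ref{lem:small-lines} — genuinely depend on the syntactic size of the input 1-GVAS, not just on its top component. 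So ``the triply-exponential blowup arises solely from restructuring the top branching component, which depends on $\size(G)$ rather than on $\size(G')$'' is an assertion that would need its own proof, and in the naive reading of your step two it is simply false: applying Lemma~\ref{lem:far-from-axis} to $G'$ gives $B$ exponential in $\size(G')$, which then sits inside $f_2(\size(G')+B)$ in Lemma~\ref{lem:bounded-area}, exactly the nesting the paper warns against in the discussion following Lemma~\ref{lem:top-only-branching}.

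There is also a smaller slip: the running time is not ``polynomially many reachability queries.'' The paper uses roughly $2(B'+1)\cdot\size(G)$ invocations of Lemma~\ref{lem:small-lines} (one per relevant vertical or horizontal line, and $B'$ is exponential in $\size(G)$), so the number of reachability queries can be exponential. This is still absorbed in the stated bound $s\cdot g(s)$ since $s$ is triply-exponential, but the bookkeeping matters, and the paper proves the time bound by a careful chain $h_2(\size(G))\cdot g(s) + s_2 \le s_4\cdot g(s_4)$ rather than by a polynomial-query claim.

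Concretely, to repair your proof you should drop $G'$ entirely, apply Lemma~\ref{lem:far-from-axis} to $G$ to get $B$ and a thin under-approximation on $[B,\infty)^2$, apply Lemma~\ref{lem:small-lines} to $G$ and $G_\rev$ for each line $x=a$ and $y=a$ with $a\le B'$, assemble for each $X\in\topp(G)$ a thin $H_X$ that $S$-exactly under-approximates $G_X$ on $\N^2\setminus[0,B']^2$, and finally feed $G$, $B'$, $\{H_V\}$, $\{H_X\}$ into Lemma~\ref{lem:bounded-area}. Then the size arithmetic works out exactly as in the paper because $\sum\size(H_V)$ never enters any of the elementary functions.
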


\begin{remark}\label{rem:triply-exp}
The triply-exponential blowup in Lemma~\ref{lem:top-branching} comes from two sources.
The first one is because the function $f_1$ in Lemma~\ref{lem:small-lines} is triply-exponential,
the algorithm there needs to search through a tree of a doubly-exponential depth.
The second reason is that the threshold $T$ in Lemma~\ref{lem:small-lines} is exponential.
Therefore, in the construction in Section~\ref{sec:bounded-area} for each derivation of exponential depth
we add a rule to our thin 1-GVAS with the yield of that tree on the right hand side.
As there might be triply-exponentially many derivations of an exponential depth this is another reason for the triply-exponential blowup.
\end{remark}

Before we delve into the proof of Lemma~\ref{lem:top-branching} let us show how it implies Theorem~\ref{thm:main}.

\begin{proof}[Proof of Theorem~\ref{thm:main}]
To show Theorem~\ref{thm:main} assume we have a 1-GVAS $G$. Let $G$ have $n$ nonterminals
and let an algorithm for the reachability problem in thin 1-GVAS work in time complexity $g$.
Notice first that Lemma~\ref{lem:top-branching} clearly holds also for 1-GVAS, which are not top-branching.
Indeed, once we have thin 1-GVAS $H_V$ for each nonterminal $V \in \low(G)$ there is nothing to do, as $G$ with $V$
substituted by initial nonterminals of $H_V$ is then already thin and its size is at most
$\size(G) + \size(G) \cdot \sum_{V \in \low(G)}\size(H_V)$.
Therefore in the rest of the proof we assume that the considered 1-GVAS is top-branching, as this is the challenging case.

We show by induction on $n$ that the size of produced thin 1-GVAS is at most $s = (\size(G) \cdot n)^{2n} \cdot f(\size(G))$,
where $f$ is the triply-exponential function from Lemma~\ref{lem:top-branching} and the construction
works in time at most $s \cdot g(s)$. Fix $S = \size(G)$.
For the induction base, when $n = 1$, there is only one nonterminal and $\low(G)$ is empty.
Then Lemma~\ref{lem:top-branching} directly shows the induction base.
For the induction step assume the induction assumption for $n$ and show
it for $n+1$. For every $V \in \low(G)$ the 1-GVAS $G_V$ has at most $n$ nonterminals.
Therefore, by induction assumption, the size of an equivalent thin 1-GVAS $H_V$ is at most $s = (S \cdot n)^{2n} \cdot f(S)$
and the time for computing all of them is at most $n \cdot s \cdot g(s)$.
Let $s' = ((S \cdot n)^{2n+1} + 1) \cdot f(S)$ and $s'' = (S \cdot n+1)^{2n+2} \cdot f(S)$.
Thus, by Lemma~\ref{lem:top-branching} the size of $H$ is at most
\[
f(S) + S \cdot n \cdot s = f(S) + S \cdot n \cdot (S \cdot n)^{2n} \cdot f(S) = ((S \cdot n)^{2n+1} + 1) \cdot f(S) = s' \leq s'',
\]
as needed. Also, by Lemma~\ref{lem:top-branching} the time needed for computation of $H$ from all $H_V$ is at most $s' \cdot g(s')$. Together with the time needed to computing all $H_V$,
namely at most $n \cdot s \cdot g(s)$ it is at most
\[
s' \cdot g(s') + n \cdot s \cdot g(s) \leq 2s' \cdot g(s') \leq s'' \cdot g(s''),
\]
as required. The last inequality above follows from the inequality $2((S \cdot n)^{2n+1} + 1) \leq (S \cdot n +1)^{2n+2}$.
It follows from an easy inequality $2(a^b + 1) \leq (a+1)^{b+1}$ for $a = S \cdot n$ and $b = 2n+1$.
This last inequality holds as $2(a^b + 1) \leq (a+1) \cdot (a^b+1) \leq (a+1) \cdot (a+1)^b = (a+1)^{b+1}$.
It is easy to see that $s''$ is triply-exponential in $\size(G)$, as $f$ is triply-exponential, which finishes the proof.
\end{proof}

In the rest of this section we focus on the proof of Lemma~\ref{lem:top-branching}.
We assume that the considered 1-GVAS $G$ is top-branching.
Our approach is to compute thin 1-GVAS $H_1, \ldots, H_k$ such that $H_i$ $S_i$-exactly under-approximate $G$
for some sets $S_i \subseteq \N^2$. Moreover, the union of sets $S_i$ is supposed to be almost the whole quadrant $\N^2$,
concretely speaking it should be $S = \N^2 \setminus [0,B)^2$ for a computable bound $B \in \N$.
Then, for each nonterminal $X \in \topp(G)$, we can create a thin 1-GVAS $H_X$, which $S$-exactly under-approximates $G$
by taking the union of $H_i$. Finally, we will apply Lemma~\ref{lem:bounded-area} to finish the argument.

We are ready to formulate a lemma responsible for the area $[B, \infty)^2$.

\begin{lemma}\label{lem:far-from-axis}
Let $G$ be a top-branching 1-GVAS.
Then one can compute in exponential time an exponential bound $B \in \N$
and a thin 1-GVAS $H$ which $S$-exactly under-approximates $G$ for $S = [B, \infty)^2$.
\end{lemma}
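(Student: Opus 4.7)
The plan is to compute the set of effects $E \subseteq \Z$ of complete derivations of $\S(G)$ and realize it as a thin 1-GVAS $H$ that agrees with $G$ on $[B,\infty)^2$. Since the yields of $G$ form a context-free language and the total effect is a linear function of the Parikh image, Parikh's theorem gives $E$ as a semilinear union $E = \bigcup_i (b_i + P_i^\ast)$ computable in exponential time with exponentially bounded parameters. Standard pumping-style arguments for context-free grammars furnish, for each $i$, a simple complete derivation $\tau_i^0$ of $\S(G)$ with effect $b_i$, and, for each $p \in P_i$, a simple cycle $\sigma_{i,p}\colon X \To \alpha X \beta$ with global effect $p$; all these witnesses have size and maximum dip bounded exponentially in $\size(G)$. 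I set $D$ to be this maximum dip and $B := D$, both exponential and computable in exponential time.

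To assemble $H$ I introduce one fresh nonterminal $S_i^H$ per linear piece, together with alternations $S^H \to S_i^H$ at a fresh initial nonterminal $S^H$. Each $S_i^H$ mirrors the tree of $\tau_i^0$ with every pivot nonterminal occurrence $X$ replaced by a freshly renamed iteration nonterminal $K_{i,X}$ equipped with binary rules simulating $K_{i,X} \to \alpha K_{i,X} \beta$ and $K_{i,X} \to C_X$, where $C_X$ is an acyclic fresh expansion of the base completion of $X$ in $\tau_i^0$. All introduced nonterminals are thin: the only declared self-recursion is the single copy in the $K_{i,X}$ rule, and all inner nonterminals are freshly named and acyclic. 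By construction every $H$-derivation literally is a $G$-derivation (after identifying renamed nonterminals with their originals) obtained by pumping the chosen cycles some number of times at the chosen pivots in $\tau_i^0$, so $\reach{H} \subseteq \reach{G}$ trivially.

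The main obstacle is the converse direction $\reach{G} \cap [B,\infty)^2 \subseteq \reach{H}$: one must show that every $(s,t) \in [B,\infty)^2$ with $t - s \in E$ is realized by a ``pumped $\tau_i^0$'' derivation that is valid at $s$. The crucial input is the top-branching assumption, which lets one split and reorder sub-derivations: since $\S(G)$ derives two---and by iteration arbitrarily many---independent copies of itself, one can schedule all positive-effect pumps to be completed before any negative-effect pump is started, so that the running counter rises before it falls. Combined with the fact that each individual pump and the base contribute at most $D$ to any local excursion and that $\min(s,t) \geq B = D$, this keeps the counter nonnegative throughout and exhibits a valid derivation of the shape $H$ enumerates. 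Establishing this scheduling rigorously via the branching structure---and verifying that the chosen witnesses $\tau_i^0, \sigma_{i,p}$ actually share pivot nonterminals so that the cycles can be inserted as described---is the central technical content of the lemma.
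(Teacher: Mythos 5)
There is a genuine gap, concentrated in the last paragraph. You set $B = D$ where $D$ is the maximum dip contributed by a \emph{single} base derivation $\tau_i^0$ or a \emph{single} application of a cycle $\sigma_{i,p}$, and then claim that ``each individual pump and the base contribute at most $D$ to any local excursion and that $\min(s,t) \geq B = D$ \dots keeps the counter nonnegative throughout.'' This does not account for the cumulative excursion of \emph{iterating} a cycle. Your $K_{i,X}$ nonterminal has a rule of the shape $K_{i,X} \to \alpha\, K_{i,X}\, \beta$, so an $H$-derivation that pumps it $n$ times yields a word of the shape $\alpha^n \cdots \beta^n$; if $\mathrm{eff}(\alpha) < 0$ this dips by roughly $n \cdot |\mathrm{eff}(\alpha)|$, which is unbounded in $n$ and therefore not dominated by any fixed $B$. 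The ``schedule positive pumps before negative pumps'' idea might rescue an argument about \emph{some} $G$-derivation (by distributing pumps across many sibling copies of $\S(G)$), but that restructured derivation is precisely not of the shape your $H$ produces, since $H$ pumps at the fixed pivots of $\tau_i^0$ with nested self-recursion. So even granting your scheduling heuristic, it does not establish $\reach{G} \cap [B,\infty)^2 \subseteq \reach{H}$ for the concrete $H$ you built.

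A second, related issue is that the equality $\reach{G} \cap [B,\infty)^2 = \{(s,t) \in [B,\infty)^2 : t - s \in E\}$ is treated as essentially free, but it is the real content of the lemma. It is true (that is what the lemma ultimately says, combined with semilinearity of $E$), but it needs a serious argument: one must show that for $s,t$ both at least $B$ every achievable effect $t - s$ is achievable \emph{validly} from $s$. The paper establishes this by splitting $[B,\infty)^2$ into an upper triangle, a lower triangle (via the reverse GVAS), and a bounded band of diagonal lines. On the triangles, it proves (Lemma~\ref{lem:line-linear-set}, using the gcd $d$ of simple-cycle effects and the common residuum $r$ from Claim~\ref{cl:same-residuum}) that the reachability set is exactly the arithmetic progression $(a + r + d\Z)$ above a threshold — this is where the top-branching hypothesis is actually used. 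On each remaining diagonal, Lemma~\ref{lem:diagonal-lines} supplies a small derivation of that effect together with an exponential bound on the minimal valid input. Only after this does the paper invoke Proposition~\ref{prop:semilin-to-thin} to turn the resulting computable semilinear \emph{relation} (not merely the effect set $E$) into a thin 1-GVAS; Proposition~\ref{prop:semilin-to-thin} works precisely because the period vectors $(\ell_i, r_i) \in \N^2$ of a diagonal linear relation encode the minimal required input per increment, something that a one-dimensional effect set $E$ alone cannot express. Your construction skips both the region decomposition and the arithmetic-progression analysis, which is why the exponential bound on $B$ is not justified.
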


The proof of Lemma~\ref{lem:far-from-axis} is presented in Section~\ref{sec:far-from-axis}.
Lemma~\ref{lem:far-from-axis} already covers a lot of the $\N^2$ in the reachability relation $R$.
It only remains to cover vertical and horizontal lines close to the axes and a bounded area close to the origin.
The following lemma takes care of the vertical lines close to the vertical axis.

\begin{lemma}\label{lem:small-lines}
Let $G$ be a top-branching 1-GVAS. 
Suppose that for each $V \in \low(G)$ we have a thin 1-GVAS $H_V$ of same reachability relation.
Then one can compute a threshold $T$ and a thin 1-GVAS $H$
which $S$-exactly under-approximates $G$ for $S = \set{a} \times [T, \infty)$.
Moreover, the threshold $T$ is at most exponential in $\size(G)$, there is a triply-exponential function $f_1$ such that
the size of $H$ is at most $s := f_1(\size(G)) + \sum_{V \in \low(G)}\size(H_V)$
and the construction works in time at most $s \cdot g(s)$ where $g$ is the time complexity of reachability for thin 1-GVAS.
\end{lemma}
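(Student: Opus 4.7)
The plan is to prove the lemma by analyzing the structure of complete derivations of $G$ that witness $a \trans{G} b$ for an exponentially large target $b \geq T$, and to simulate these derivations by a thin grammar $H$ built from the skeletons of such derivations together with the supplied thin grammars $H_V$ for lower components. Because the top component of $G$ is branching, Claim~\ref{cl:reachable-branching} tells us that every top nonterminal admits cycles with a variety of effects; my first move is to classify the simple cycles of the top component (there are at most exponentially many) by their global effect, and to single out those with strictly positive global effect. If no simple cycle of the top component had positive global effect, then starting from the fixed $a$ only a bounded exponential set of values could be $G$-reached, so one chooses $T$ just above this bound; for $b \geq T$ any witnessing derivation must contain a nontrivial amount of pumping of at least one positive-effect simple cycle $\sigma$.

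Next, I would prove a structural decomposition: for every derivation reaching $b \geq T$, there is a distinguished spine from the root to a leaf along which the derivation factors as (i) an initial skeleton of bounded depth that carries the counter above a safe level, (ii) a repeated block of iterations of the chosen positive-effect simple cycle $\sigma$, and (iii) a terminating skeleton. Subtrees hanging off the spine either have roots in $\low(G)$---in which case they are faithfully simulated by the supplied $H_V$ via \emph{exact} plug-in, since $H_V$ has the same reachability relation as $G_V$---or are bounded-depth subderivations that stay within the top component and can be enumerated explicitly. The thin grammar $H$ is then assembled from three kinds of ingredients: a thin encoding of every possible initial/terminating skeleton (enumerated up to a doubly-exponential depth), a single thin nonterminal that linearly unwinds $\sigma$ with its left and right effects encoded as surrounding terminals, and the provided $H_V$ plugged in verbatim. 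Because only one cycle has to be pumped and it is applied along a single spine, no nonterminal of $H$ needs to produce two copies of itself, so $H$ is genuinely thin.

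Correctness of $H$, namely $\reach{H} \subseteq \reach{G}$, is routine since every rule of $H$ is manufactured from a concrete valid $G$-derivation and the plug-ins of $H_V$ preserve reachability exactly. The completeness direction, $\reach{G} \cap S \subseteq \reach{H} \cap S$, follows from the decomposition theorem once $T$ is chosen above the exponential bound forcing at least one occurrence of $\sigma$. The main obstacle, as flagged by Remark~\ref{rem:triply-exp}, is establishing that enumerating skeletons up to doubly-exponential depth actually suffices to capture all high-reaching derivations. The argument here is a pigeonhole over pairs of the form (nonterminal on the spine, residue class of the counter value modulo the lcm of cycle effects), combined with a normalization step that strips any already-pumpable sub-block off the spine and replaces it with an iterated application of $\sigma$; together these yield the doubly-exponential depth bound and hence, through enumerating all such skeletons, the triply-exponential size of $H$.
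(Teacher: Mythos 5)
Your proposal sketches an approach built around a single-spine / single-pumped-cycle decomposition, which is not what the paper does and, more importantly, does not appear to work as stated. The paper's proof instead constructs a ``supertree'' of partial derivations, uses two carefully engineered stop conditions, and then splits into two very different cases: one where a top nonterminal is reached with a large counter value (the ``successful supernode'' case, handled via Lemma~\ref{lem:line-linear-set} and Proposition~\ref{prop:semilin-to-thin}), and one where it never is (the ``neutral superleaf'' case, which requires an elaborate grammar with forbidden-set nonterminals $V_{a',X,F}$ and a double-traversal argument to prove correctness).

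There are several concrete gaps in your sketch. First, the claim that every derivation reaching a large $b$ decomposes as initial skeleton / iterations of one simple cycle $\sigma$ / terminating skeleton along a single spine is not justified, and in a branching top component it is not clear it can be justified: subtrees hanging off any purported spine may themselves be rooted at top nonterminals and be arbitrarily deep and branching, so they are neither ``bounded-depth subderivations'' nor simulable by the lower $H_V$. This is exactly the structural difficulty that forces the paper to bound the depth of the \emph{current branch} only, by forbidding cycles of left effect zero in partial derivations, rather than bounding the whole derivation. Second, pumping a single positive cycle $\sigma$ only populates a progression with period $|\sigma|$, but the set $\reach{G}(a)$ eventually equals $(a + r + d\Z) \cap [T,\infty)$ where $d$ is the gcd of all simple-cycle effects; $d$ may be strictly smaller than $|\sigma|$, so your $H$ would miss infinitely many reachable $b$ and the condition $\reach{G}\cap S = \reach{H}\cap S$ would fail. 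The paper repairs this with a Bezout combination of several cycles (the $\sigma_i$ in Lemma~\ref{lem:line-linear-set}), which is absent from your argument. Third, and most fundamentally, your proposal has no analogue of the ``no successful supernode'' case. When the counter never rises above the threshold $A+CD'$ and only lower nonterminals remain reachable, there is no cycle to pump; the paper must nevertheless produce a thin $H$, which it does by enumerating superleaves and adding nonterminals that reinsert the left-zero-effect cycles with the ``forbidden set'' discipline, and proving correctness by a simultaneous Euler-tour / supertree traversal. Your ``pigeonhole plus normalization'' step, which strips sub-blocks and replaces them with $\sigma$-iterations, would in general change the reachability relation and does not substitute for this construction.
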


It is easy to see that by Lemma~\ref{lem:small-lines} we can also take care about horizontal lines and construct
thin 1-GVAS $H$ which $S$-exactly under-approximates $G$ for $S = [T, \infty) \times \set{a}$. To obtain it,
we need to consider the reverse of $G$, namely $G_\rev$. Recall that by Claim~\ref{cl:reverse} we have $(s, t) \in \reach(G)$
if and only if $(t, s) \in \reach(G_\rev)$. Thus by applying Lemma~\ref{lem:small-lines} to $G_\rev$ we get an analog
of Lemma~\ref{lem:small-lines} for $G$, but for sets $S = [T, \infty) \times \set{a}$.

Therefore, by Lemmas~\ref{lem:far-from-axis}~and~\ref{lem:small-lines} we can compute a bound $B'$ being the maximum
of $B$ from Lemma~\ref{lem:far-from-axis} and thresholds $T$ from Lemma~\ref{lem:small-lines} such that
for each $X \in \topp(G)$ we have a thin 1-GVAS $H_X$ which $S$-exactly under-approximates $G$ for $S = \N^2 \setminus [0,B']^2$.
Notice that $B'$ is bounded by some exponential function of size of $G$, call this function $h_1$. Moreover, the size of each $H_X$
is bounded by $s := f_1(\size(G)) + \sum_{V \in \low(G)}\size(H_V)$ for some triply-exponential function $f_1$
and the construction takes at most exponential time from Lemma~\ref{lem:far-from-axis} plus at most $2 \cdot (B+1) \cdot \size(G)$ invocations of Lemma~\ref{lem:small-lines}.
This is altogether at most $h_2(\size(G)) \cdot g(s)$ time for some exponential function $h_2$
and $g$ being the time complexity of thin 1-GVAS reachability.
To finish the argument we formulate the following lemma.

\begin{lemma}\label{lem:bounded-area}
Let $G$ be a top-branching 1-GVAS and $B \in \N$.
Suppose that for each $V \in \low(G)$ we have a thin 1-GVAS $H_V$ of same reachability relation.
Suppose also that for each $X \in \topp(G)$ we have a thin 1-GVAS $H_X$ which $S$-exactly
under-approximates $G_X$ for $S = \N^2 \setminus [0,B]^2$.
Then one can compute a thin 1-GVAS $H$ equivalent to $G$.
Moreover, there is a doubly-exponential function $f_2$ such that
the size of $H$ and the time of its construction are at most 
$s := f_2(\size(G) + B) + \sum_{V \in \low(G)}\size(H_V) + \sum_{X \in \topp(G)}\size(H_X)$.
\end{lemma}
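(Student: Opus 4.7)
The plan is to build $H$ by combining the thin under-approximation $H_S$ (which by hypothesis already realises $\reach{G} \cap S$ for $S = \N^2 \setminus [0,B]^2$) with a bounded catalogue of additional rules that cover the remaining pairs in $[0,B]^2$. First I would substitute in every rule of $G$ each lower nonterminal $V \in \low(G)$ by the initial nonterminal of its thin equivalent $H_V$; since $\reach{H_V} = \reach{G_V}$, this preserves the reachability relation and turns every sub-derivation under a lower nonterminal into a thin one. I then introduce a fresh initial nonterminal $S''$ together with the rule $S'' \to \mathrm{init}(H_S)$, which by $S$-exactness already covers all pairs of $\reach{G}$ outside $[0,B]^2$.

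To cover pairs in $[0,B]^2$, I look at any $G$-derivation $\tau$ with root input and output both in $[0,B]$ and define its \emph{top skeleton} $\sigma$ as the maximal rooted subtree of $\tau$ whose internal nodes are top nonterminals with both input and output in $[0,B]$. Every leaf of $\sigma$ is then either a terminal of $\tau$, a lower-nonterminal node (for which the initial of the thin $H_V$ produces exactly the right sub-reachability), or a top-nonterminal node with at least one of its input or output lying outside $[0,B]$ (for which the initial of $H_X$ does, again by $S$-exactness). For every such $\sigma$ I read off its yield as a sequence of integer terminals and initial nonterminals of the relevant $H_V$'s and $H_X$'s, and add to $H$ the rule $S'' \to (\text{yield})$, binarised with fresh helper nonterminals.

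The main obstacle, and the source of the doubly-exponential bound, will be bounding the number and size of skeletons that need to be enumerated. The key normalisation is a cutting argument: if two nodes $p_1 \prec p_2$ on a single root-to-leaf path of $\sigma$ carry the same triple (nonterminal, input, output), then replacing the subtree at $p_1$ by the subtree at $p_2$ yields a strictly smaller valid $G$-derivation with the same root input and output, since the prefix and suffix of the left-to-right counter trip are untouched and the spliced subtree was already valid from a counter value equal to the common input. Iterating produces a skeleton in which no path repeats a triple, so the depth of $\sigma$ is at most $K := |\topp(G)| \cdot (B+1)^2$ and its size is at most $2^K$. Counting tree shapes, rules applied at each internal node, input/output annotations from $[0,B]$ and leaf labels from a polynomially-sized set then gives at most $(\mathrm{poly}(\size(G)+B))^{2^K}$ distinct rules, which is doubly exponential in $\size(G)+B$.

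Correctness, thinness and the stated complexity bound then follow cleanly. Soundness ($\reach{H} \subseteq \reach{G}$) holds because every $H$-derivation that uses a skeleton rule unfolds, via $\reach{H_V} \subseteq \reach{G_V}$ and $\reach{H_X} \subseteq \reach{G_X}$, into an actual $G$-derivation with the same running counter. Completeness ($\reach{G} \subseteq \reach{H}$) splits into pairs outside $[0,B]^2$, covered by $H_S$, and pairs inside, covered by normalising the witness to a skeleton that we enumerated and then plugging in $H_V$ and $H_X$ sub-derivations matching the original witness — which exist because $H_V$ is equivalent to $G_V$ everywhere and $H_X$ agrees with $G_X$ on $S$. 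The new nonterminals $S''$ and the binarisation helpers appear only on the left of their own rules and derive only into terminals and sub-grammar initials of the thin grammars $H_S, H_V, H_X$, so none of them derives two copies of itself; combined with the thinness of the $H_V$'s, $H_X$'s and $H_S$ this makes $H$ thin. Crucially the construction never invokes the reachability oracle $g$; its time is dominated by the mechanical enumeration of skeletons and fits the bound $s = f_2(\size(G)+B) + \sum_V \size(H_V) + \sum_X \size(H_X)$ for a suitable doubly-exponential $f_2$.
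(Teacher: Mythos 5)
Your proposal is essentially the paper's construction: a bounded-depth prefix (your skeleton) is packaged into a single initial rule, the subtrees below are delegated to the thin 1-GVAS $H_V$ and $H_X$, and the depth is bounded by the cutting argument on repeated (nonterminal, input, output) triples yielding the same doubly-exponential count. The paper is marginally more permissive (its rule set is $\{\,S\to\alpha \mid \alpha \text{ is the yield of a $G$-derivation of depth at most } (B+1)^2|\Nn(G)|\,\}$ without restricting the prefix's shape or annotating it, and the outside-$[0,B]^2$ region is covered by the trivial depth-$0$ derivation rather than an explicit rule $S''\to\mathrm{init}(H_{\S(G)})$), but these are cosmetic differences and the correctness and size arguments coincide.
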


The proof of Lemma~\ref{lem:bounded-area} is shown in Section~\ref{sec:bounded-area}.
By Lemma~\ref{lem:bounded-area} applied to $G$ and constant $B'$
we can compute a thin 1-GVAS $H$ equivalent to $G$, as needed in Lemma~\ref{lem:top-branching}.
We need to show the bounds on its size and on the computation time, as required in Lemma~\ref{lem:top-branching}.

The size of $H$ is bounded by $f_2(\size(G) + B') + \sum_{V \in \low(G)}\size(H_V) + \sum_{X \in \topp(G)}\size(H_X)$.
As the size of $H_X$ is at most $f_1(\size(G)) + \sum_{V \in \low(G)}\size(H_V)$ for each $X$ then
\begin{align*}
\size(H) & \leq s_2 :=  f_2(\size(G) + B') + \sum_{V \in \low(G)}\size(H_V) \\
& + (\size(G)-1) \cdot (f_1(\size(G)) + \sum_{V \in \low(G)}\size(H_V)) \\
& \leq h_3(\size(G)) + \size(G) \cdot \sum_{V \in \low(G)}\size(H_V)
\end{align*}
for some triply-exponential function $h_3$. Indeed $h_3(n) \leq f_2(n + h_1(n))) + (n-1) \cdot f_1(n)$.
As $f_2$ is doubly-exponential, $h_1$ is exponential and $f_1$ is triply-exponential we get that $h_3$ is triply-exponential.
Also, the whole construction works in time at most $h_2(\size(G)) \cdot g(s)$ needed to construct all the thin 1-GVAS $H_X$
plus $s_2$. 

Let $h_4(n) := h_3(n) + h_2(n) = f_2(n + h_1(n))) + (n-1) \cdot f_1(n) + h_2(n)$.
We claim that Lemma~\ref{lem:top-branching} works for triply-exponential function $f := h_4$.
Clearly $h_3(\size(G)) \leq h_4(\size(G))$, so
\[
\size(H) \leq s_4 := h_4(\size(G)) + \size(G) \cdot \sum_{V \in \low(G)}\size(H_V)
\]
We also have that the whole construction works in time at most
\[
h_2(\size(G)) \cdot g(s) + s_2 \leq (s_2 + h_2(\size(G))) \cdot g(s) \leq s_4 \cdot g(s_4),
\]
as
\[
s_2 + h_2(\size(G)) \leq h_3(\size(G)) + h_2(\size(G)) +  \size(G) \cdot \sum_{V \in \low(G)}\size(H_V) \leq s_4
\]
and
\[
s \leq f_1(\size(G)) + \size(G) \cdot \sum_{V \in \low(G)}\size(H_V) \leq s_4.
\]
This finishes the proof of Lemma~\ref{lem:top-branching}.

\section{Proof of Lemma~\ref{lem:far-from-axis}}\label{sec:far-from-axis}
Let $G$ be a top-branching 1-GVAS $G$.
We prove Lemma~\ref{lem:far-from-axis} by showing that we can compute in exponential time an exponential $B \in \N$
such that $\reach{G} \cap [B, \infty)^2$ is semilinear and its representation can be computed in exponential time.

The following proposition shows that the above easily implies Lemma~\ref{lem:far-from-axis}.

\begin{proposition}\label{prop:semilin-to-thin}
For each semilinear, diagonal relation $R \subseteq \N^2$ one can construct a thin 1-GVAS $G$ of size polynomial
wrt. the size of the representation of $R$, such that $\reach{G} = R$.
\end{proposition}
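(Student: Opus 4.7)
The plan is to decompose $R$ into linear sets, build for each one a linear (in the grammatical sense) 1-GVAS whose reachability relation is the diagonal closure of that linear set, and then glue them together under a common initial nonterminal. The diagonality hypothesis on $R$ is exactly what makes this gluing yield $R$ on the nose rather than something strictly larger.

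Fix a semilinear representation $R = L_1 \cup \cdots \cup L_k$ with $L_i = (u_i, v_i) + P_i^*$ and $P_i = \{(p_{i,1}, q_{i,1}), \ldots, (p_{i,m_i}, q_{i,m_i})\} \subseteq \N^2$. For each $i$ I would introduce two nonterminals $S_i$ and $T_i$ together with the rules $S_i \to (-u_i)\, T_i$, $T_i \to v_i$, and, for every $j \in [1, m_i]$, the rule $T_i \to (-p_{i,j})\, T_i\, q_{i,j}$. Then I would add a fresh global initial nonterminal $S$ together with a rule $S \to 0 \cdot S_i$ for every $i$. The length-three rules are binarized through helpers $T'_{i,j}$ via $T_i \to (-p_{i,j})\, T'_{i,j}$ and $T'_{i,j} \to T_i\, q_{i,j}$. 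Every rule of the resulting grammar has at most one nonterminal on its right-hand side, so every sentential form derived from any nonterminal contains at most one nonterminal, which rules out the pattern $X \Rightarrow \alpha X \beta X \gamma$; hence every nonterminal is thin. The total size is polynomial in the bit size of the given representation of $R$.

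For correctness, consider an $S_i$-derivation started at input $a$. Its yield has the shape $-u_i,\, -p_{i,j_1}, \ldots, -p_{i,j_K},\, v_i,\, q_{i,j_K}, \ldots, q_{i,j_1}$ for some sequence $j_1, \ldots, j_K$. Because the entries $p_{i,j}, q_{i,j}$ are nonnegative, the counter only decreases during the subtraction phase and only grows afterwards, so the run is valid at $a$ iff $a \geq u_i + \sum_l p_{i,j_l}$; in that case the output equals $b = a - u_i - \sum_l p_{i,j_l} + v_i + \sum_l q_{i,j_l}$. Letting $k_{i,j}$ count the number of indices $l$ with $j_l = j$, and setting $\Delta = a - u_i - \sum_l p_{i,j_l} \geq 0$, we obtain $(a, b) = (u_i, v_i) + \sum_j k_{i,j}(p_{i,j}, q_{i,j}) + (\Delta, \Delta)$, which lies in $L_i + \{(\Delta, \Delta) \mid \Delta \in \N\}$. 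Conversely every element of this set is realised by a suitable sequence. Therefore $\reach{G} = \bigcup_i \bigl(L_i + \{(\Delta,\Delta) \mid \Delta \in \N\}\bigr) = R + \{(\Delta,\Delta) \mid \Delta \in \N\} = R$, where the last equality uses that $R$ is diagonal.

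The construction is largely mechanical, and the only conceptual point is noticing that each linear summand naturally produces its diagonal closure rather than itself and that the diagonality of $R$ is precisely what absorbs this slack. The rest, namely the binarization of length-three rules and the polynomial size estimate, is standard bookkeeping.
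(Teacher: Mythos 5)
Your construction is essentially the same as the paper's: one linear grammar per linear summand of $R$, with a single nonterminal acting as a cycle that subtracts the left-coordinate of a period before it and adds the right-coordinate after it, and thinness following because no right-hand side contains two nonterminals. The only (minor, and arguably favourable) deviation is in where diagonality enters: the paper first reduces to the case of a \emph{linear diagonal} summand and then takes $(1,1)$ to be one of the periods, whereas you apply the construction to an arbitrary linear summand $L_i$, observe that it automatically yields the diagonal closure $L_i + \{(\Delta,\Delta) \mid \Delta\in\N\}$, and let the global diagonality of $R$ absorb the slack when taking the union — this sidesteps the small implicit step of decomposing a diagonal semilinear set into diagonal linear pieces.
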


\begin{proof}
It is enough to show a thin 1-GVAS for each linear, diagonal relation $R \subseteq \N^2$, as a finite union of thin 1-GVAS
is also a thin 1-GVAS. Let $R = b + P^*$, where $b = (\ell, r)$, $P = \set{p_1, \ldots, p_k}$ and for each $i \in [1,k]$
we have $p_i = (\ell_i, r_i)$. Notice that we can for free assume that one of the periods equals $(1,1)$, as the relation $R$
needs to be diagonal. We define the following 1-GVAS $G$. It has two nonterminals $X$ and $Y$, the initial one is $X$.
The rules are:
\[
X \to -\ell \, Y \, r \hskip 2cm Y \to 0 \hskip 2cm Y \to -\ell_i \, Y \, r_i \hskip 0.5cm \text{ for each } i \in [1,k].
\]
It is easy to observe that indeed $\reach{G} = R$.
\end{proof}

It is therefore enough to compute the mentioned semilinear representation of $\reach{G} \cap [B, \infty)^2$.
To show it, we will use Lemmas~\ref{lem:triangle}~and~\ref{lem:diagonal-lines}.
First, we define an upper-triangle.
For $a,\delta \in \N$, we call \emph{upper-triangle} a set of the form $\ut(a, \delta) := \{(x,y) \in \N^2 \mid x \ge a, y \ge x+ \delta\}$.
It is the upwards infinite triangle on the right of the vertical line $x = a$ and above the diagonal line $y = x + \delta$. 

The following lemma takes care of an upper triangle.

\begin{lemma}\label{lem:triangle}
For each top-branching 1-GVAS with reachability relation $R$ one can compute in exponential time some
at most exponential $a, \delta \in \N$
such that $R \, \cap \, \ut(a, \delta)$ is diagonal and semilinear and its representation is computable in exponential time.
\end{lemma}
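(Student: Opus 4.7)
The plan is to argue that inside the upper triangle $\ut(a,\delta)$ with $a,\delta$ chosen exponentially large in $\size(G)$, the $\N$-reachability relation $R$ of $G$ coincides with the $\Z$-reachability relation $R_\Z$ of $G$ (i.e.\ the reachability relation obtained by dropping the nonnegativity constraint on the counter). Since $\Z$-PVAS, equivalently $\Z$-GVAS, are known to have effectively semilinear reachability relations, this immediately yields an exponential-size semilinear description of $R\cap \ut(a,\delta)$ computable in exponential time. Diagonality is automatic since every GVAS reachability relation is diagonal.

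First, I would compute a semilinear representation of $R_\Z$ using the known algorithm for $\Z$-GVAS reachability; this can be done in exponential time and produces a representation of exponential size. Intersecting later with the Presburger-definable set $\ut(a,\delta)$ is then effective and preserves semilinearity.

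Second, I would set $a$ and $\delta$ equal to an exponential bound $M$ controlling the maximal ``dip'' a normalized witness can incur, and then establish $R_\Z \cap \ut(a,\delta) \subseteq R\cap \ut(a,\delta)$ (the reverse inclusion being immediate from $R\subseteq R_\Z$). Given $(x,y)\in R_\Z \cap \ut(a,\delta)$ with a $\Z$-witness $\tau$, the idea is to lift $\tau$ into a genuine $\N$-derivation by splicing in cycles of strictly positive global effect wherever the counter profile of $\tau$ threatens to go below zero. The branching property of the initial nonterminal $S$ is crucial here: from a fixed derivation $S \Rightarrow \alpha S \beta S \gamma$ one extracts pumping schemes that plant new subtrees at will, and by iterating these one generates arbitrarily many positive-effect cycles that can be inserted into $\tau$ without altering its overall structure. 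Because every $\Z$-witness can be brought to a bounded family of ``shapes'' (again by the semilinear description of $R_\Z$), the total dip one has to compensate is bounded by an exponential $M$, and so taking $a,\delta \ge M$ suffices.

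I expect the main obstacle to be this lifting step: one must argue carefully that the branching rule provides enough flexibility to compensate \emph{every} dip in $\tau$ simultaneously, and that the slack required to do so can be controlled by a single exponential bound $M$ depending only on the sizes of the rules and of the simple cycles and derivations of the top component. Once this structural pumping argument is in place, the final computation of a semilinear representation of $R\cap \ut(a,\delta) = R_\Z \cap \ut(a,\delta)$ reduces to standard effective closure of semilinear sets under intersection with Presburger-definable sets, and the claimed exponential time and size bounds follow.
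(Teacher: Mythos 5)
Your high-level route is genuinely different from the paper's and is in fact a plausible one: identify $R\cap \ut(a,\delta)$ with $R_\Z\cap\ut(a,\delta)$ (where $R_\Z$ is the $\Z$-reachability relation) and then invoke effective semilinearity of $\Z$-GVAS reachability. The paper instead fixes $d$ to be the gcd of the simple-cycle effects, proves (Claim~\ref{cl:same-residuum}) that all derivations of a top-branching 1-GVAS have the same effect $r$ modulo $d$, and then directly characterizes $R$ on the triangle as $\{(x,y)\in\ut(a,\delta)\mid y\equiv x+r\bmod d\}$ via a synthesis argument (Lemma~\ref{lem:line-linear-set}) that uses a pumping cycle and Bezout's theorem to realize every residue. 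Since $R\subseteq R_\Z\subseteq\{(x,y)\mid y\equiv x+r\bmod d\}$, the paper's characterization does imply the equality $R\cap\ut = R_\Z\cap\ut$ you are aiming for, so the target of your inclusion is correct; the two routes meet at the same set.

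However, the argument you sketch for the key inclusion $R_\Z\cap\ut(a,\delta)\subseteq R$ has a genuine gap. You propose to take a $\Z$-witness $\tau$ for $(x,y)$ and ``splice in cycles of strictly positive global effect wherever the counter profile threatens to go below zero.'' But inserting a cycle of positive global effect strictly increases the output of the derivation, so the resulting tree witnesses $(x,y')$ for some $y'>y$, not $(x,y)$. To preserve the effect you would have to simultaneously insert compensating negative cycles, and a top-branching 1-GVAS need not have any cycle of negative effect at all; in that case this balancing act simply cannot be performed. The correct route is not to lift a preexisting $\Z$-witness but to \emph{synthesize} a fresh $\N$-witness of the prescribed effect $y-x$, which is exactly what the paper's Lemma~\ref{lem:line-linear-set} does: it builds one derivation containing a positive-left-effect pumping cycle, uses Bezout's theorem on the simple-cycle effects to manufacture sub-derivations hitting each residue class modulo the pumping-cycle effect, and then pumps to reach the target. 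Your sketch never engages with this modular structure, which is precisely where the work is.

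A second, related gap is the claim that ``the total dip one has to compensate is bounded by an exponential $M$'' because $\Z$-witnesses can be ``brought to a bounded family of shapes'' by semilinearity of $R_\Z$. Semilinearity of the relation is a statement about the set of input--output pairs, not about the profiles of witnessing derivations; it gives no bound on dips, and in general a $\Z$-witness of large effect $y-x$ has yield of length $\Omega(y-x)$ and can incur an arbitrarily large dip. Showing that some witness with exponentially bounded dip exists for every $(x,y)\in R_\Z\cap\ut(a,\delta)$ is essentially equivalent to the inclusion you are trying to prove, so this step begs the question. In short: the equality you target is correct and would give a clean alternative proof, but establishing it requires the residuum/Bezout argument (or an equivalent synthesis argument using the branching structure), not a lift-and-repair of an arbitrary $\Z$-witness.
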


We discuss the proof of Lemma~\ref{lem:triangle} below in Section~\ref{sec:triangle}.

Notice now that, similarly as in Section~\ref{sec:overview} we can apply the same technique
to $G_\rev$ and show that $\reach{G} \cap \mbox{LT}(a, \delta)$ is also diagonal, semilinear and its representation is computable
in exponential time, where $\mbox{LT}(a, \delta) = \{(x,y) \in \N^2 \mid y \geq a, x \geq y + \delta\}$.

Once we have covered $\mbox{UT}(a, \delta)$ and $\mbox{LT}(a, \delta)$ to cover the whole area $[B, \infty)^2$
for some $B$ we need to cover diagonal lines. The following, easy to show, lemma takes care of all the diagonal lines from some point on.

\begin{lemma}\label{lem:diagonal-lines}
For each 1-GVAS $G$ with reachability relation $R$ and each $\Delta \in \Z$ such that
$R(x, x + \Delta)$ for some $x \in \N$, there is an exponential $a \in \N$, computable in exponential time, such that $R(a, a + \Delta)$.
\end{lemma}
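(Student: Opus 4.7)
The plan is to produce a complete $\S(G)$-derivation $\tau$ of effect $\Delta$ whose size is at most exponential in $\size(G)+\size(\Delta)$; the required $a$ can then be taken as the absolute value of the minimum prefix sum of the yield of $\tau$, which by the triangle inequality is bounded by $\size(\tau)\cdot M$, where $M$ is the largest absolute value of a terminal in $G$, and is therefore exponential. The premise only supplies some derivation witnessing $R(x,x+\Delta)$ that could be arbitrarily large, so the core task is to replace it with a small derivation of the same effect.

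To do so, I would exploit the effective semilinearity of the set $E(G)\subseteq \Z$ of effects of complete $\S(G)$-derivations of $G$. Treating the underlying context-free grammar as a CFG over a finite alphabet of distinct integer labels, Parikh's theorem gives a semilinear representation of the Parikh image of its language; composing with the linear map that sums labels with integer weights yields a semilinear representation $E(G)=\bigcup_i\bigl(b_i+\sum_j \N\cdot p_{i,j}\bigr)$. Inspecting the classical proof of Parikh's theorem, each base $b_i$ arises as the effect of a derivation of depth at most $|\Nn(G)|$ (by pigeonhole on any root-to-leaf path), and each period $p_{i,j}$ arises as the global effect of a simple cycle $X\To \alpha X \beta$ of similarly polynomial size. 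In particular all $b_i$ and $p_{i,j}$ are polynomially bounded in $\size(G)$.

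Since $R(x,x+\Delta)$ holds, $\Delta\in E(G)$, so for some index $i$ the linear equation $\Delta-b_i=\sum_j k_j p_{i,j}$ has a solution over $\N^m$. By the small model property for integer linear programming, such a solution exists with every $k_j$ bounded exponentially in $\size(G)+\size(\Delta)$. I then build $\tau$ by taking the polynomial-size trunk derivation witnessing effect $b_i$ and inserting, at its designated cycle positions, each corresponding simple cycle of effect $p_{i,j}$ exactly $k_j$ times; this preserves the derivation structure and yields a complete $\S(G)$-derivation of effect $\Delta$ and exponential size, from which the bound on $a$ is read off as above.

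Everything is effective in exponential time: the semilinear description of $E(G)$ can be extracted from $G$ in polynomial time (via the standard polynomial Parikh construction), the bounded ILP is solved in exponential time by the small model bound, and $a$ is then obtained from $\tau$. The main pitfall is preventing a multi-exponential blowup: the argument only works because the Parikh bases and periods are polynomial in $\size(G)$, the coefficients $k_j$ are singly-exponential by the small model property, and the maximum terminal value is singly-exponential — so all three factors combine to a singly-exponential bound on $a$ rather than compounding into a tower.
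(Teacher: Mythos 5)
Your strategy — effective Parikh semilinearity of the effect set $E(G)$, followed by an ILP small-model bound to extract a short derivation of effect $\Delta$, and then taking $a$ as the minimal prefix sum — is a legitimate alternative to the paper's argument, which is also ``inspired by Parikh'' but bounds the simple-cycle multiplicities directly by a counting trick (pairing off $|\Delta_2|$ negative cycles against $|\Delta_1|$ positive ones to show that cycles of one sign are few). Both reach a singly exponential bound; yours offloads the counting to a citation, the paper's is self-contained.

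There is, however, a real error in your size accounting that your final paragraph then leans on. A derivation of depth $|\Nn(G)|$ does \emph{not} have polynomial size: with binary rules it may have up to $2^{|\Nn(G)|}$ leaves, so the trunk derivations and the simple cycles have \emph{exponential} size, and the bases $b_i$ and periods $p_{i,j}$ are \emph{exponentially} bounded in $\size(G)$, not polynomially. For the same reason the semilinear description of $E(G)$ has exponentially many linear sets and cannot be extracted in polynomial time as you claim. You then write that ``the argument only works because the Parikh bases and periods are polynomial in $\size(G)$'' — this attributes the success of the argument to a premise that is false, so as written the proof does not actually establish the claimed bound. Fortunately the corrected figures still close the argument: the ILP coefficients have polynomial \emph{bit-length}, so the small-model theorem still yields multiplicities $k_j$ singly exponential in $\size(G)+\size(\Delta)$, and the total size of $\tau$ is trunk plus (number of inserted cycles)$\times$(cycle size), a product of finitely many singly-exponential factors, hence still singly exponential, and likewise $a \le \size(\tau)\cdot M$ is singly exponential. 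You should rewrite the size claims (and the polynomial-time extraction claim) to say exponential and redo the final arithmetic; the conclusion survives, but the current justification rests on a false statement.
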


\begin{proof}
We aim at showing that for every $\Delta \in \Z$, there is always a derivation $\tau$ of size at most exponential in $\Delta$
and $\size(G)$ such that the effect of $\tau$ equals $\Delta$. Let then $a$ be the sum of the absolute values
of all the nonterminals in $\tau$, it is as well at most exponential. Then $\tau$ is valid at $a$, so $R(a, a+\Delta)$, as needed.

Therefore it remains to show that for each $\Delta \in \Z$ there is a small (meaning here exponential in $|\Delta|$ and $\size(G)$)
derivation of the effect $\Delta$.
Our proof is inspired by the proof of Parikh theorem about semilinearity of the Parikh image of a context-free grammar~\cite{DBLP:journals/jacm/Parikh66}.
Let a derivation $\tau$ be \emph{irreducible} if removing any simple cycle from $\tau$ decreases the set of nonterminals
in it. Let $G$ contain $n$ nonterminals. Then any path in any irreducible derivation has length at most $n^2$.
Otherwise, if there is a path of length at least $n^2+1$, then there is a nonterminal $X$ on that path occurring at least $n+1$ times.
Therefore, some of at least $n$ $X$-cycles in between these $X$-labelled nodes can be removed without decreasing the set
of nonterminals occurring in the derivation. Thus, any irreducible derivation is of at most exponential size, as summarised in this claim (used also later in Section~\ref{subsec:lemma_from_supertree}).

\begin{claim}\label{cl:irreducible-size}
Every irreducible derivation is of at most exponential size with respect to the number of nonterminals in the grammar.
\end{claim}
 
Take now a derivation $\tau$ of minimal size such that the effect of $\tau$ equals $\Delta$.
It can be decomposed into an irreducible derivation $\tau'$ with the same set of nonterminals as $\tau$
and a set of simple cycles $\sigma_1, \ldots, \sigma_m$.
Note that for any subset of these simple cycles one can obtain a derivation $\tau''$ obtained from $\tau'$ by
pasting into it the chosen cycles. This is because the set of nonterminals in $\tau$ and in $\tau'$ is the same,
so every cycle removed from $\tau$ can be pasted into $\tau'$.
For each $i \in [1,m]$, the effect of the simple cycle $\sigma_i$ is at most exponential; let $M$ be the maximal absolute value of such an effect.
Observe that the effect of each $\sigma_i$ is non-zero, as otherwise one can paste into $\tau'$ all the cycles
without this zero-effect cycle and obtain a derivation of effect $\Delta$, but of size smaller than $\tau$.
Notice also, that there cannot be two numbers $\Delta_1 < 0$ and $\Delta_2 > 0$ such that there are at least $M$ cycles
of the effect $\Delta_1$ and at least $M$ cycles of the effect $\Delta_2$. Indeed, if this would be the case, then
we remove $|\Delta_2|$ cycles of the effect $\Delta_1$ and $|\Delta_1|$ cycles of the effect $\Delta_2$
and do not change the total effect. Therefore, either for each $i \in [1,M]$ there is less than $M$ cycles of that effect,
or for each $i \in [-M,-1]$ there is less than $M$ cycles of that effect. Assume wlog. the first option. Thus there are at most $M^2$
positive cycles, so the total effect of positive cycles is at most $M^3$. This means that the number of negative cycles should also
be bounded, as each of them decreases the total effect by at least one. Indeed, let $K$ be the maximal absolute value of the effect
of any irreducible derivation. Then, the effect of the irreducible derivation plus all the positive cycles is at most $K + M^3$.
Recall that the total effect of the irreducible derivation plus all the cycles is $\Delta$.
Thus, the number of negative cycles is at most $M^3 + K + |\Delta|$, which is also exponential.
Therefore, the derivation $\tau$ consists the irreducible derivation $\tau'$, of size at most exponential, plus at most
$M^3 + K + |\Delta| + M^2$ simple cycles, each of size at most exponential. Thus the size of $\tau$ is exponential, which finishes
the proof.
\end{proof}

For each $\Delta \in \Z$, let $a_\Delta$ be the constant delivered by Lemma~\ref{lem:diagonal-lines}.
Let Lemma~\ref{lem:triangle} deliver a semilinear representation for $\reach{G} \, \cap \, \ut(a, \delta)$ and its
symmetric version deliver a semilinear representation for $\reach{G} \, \cap \, \mbox{LT}(a', \delta')$.
We set $B$ to be the maximum of $a$, $a'$ and $a_\Delta + |\Delta|$ for all $\Delta \in [-\delta', \delta]$.
It is easy to see that the upper-triangle $\ut(a, \delta)$, the lower-triangle $\mbox{LT}(a', \delta')$
and the lines of the form $(x, x+\Delta)$ for $x \geq a_\Delta$ cover the whole area $[B, \infty)^2$.
Moreover, the semilinear representation of this sum in indeed exponential, which implies
that the semilinear representation of it intersected with $[B, \infty)^2$ is also exponential.
Therefore, application of Proposition~\ref{prop:semilin-to-thin} finishes the proof of Lemma~\ref{lem:far-from-axis}.

\subsection{Proof of Lemma~\ref{lem:triangle}}\label{sec:triangle}
Let $G$ be a 1-GVAS and $R$ its reachability relation.
Let us fix $d \in \N$ to be the greatest common divisor of all the effects of the simple cycles of $G$. Notice that $d$ divides the effect of every cycle of $G$, as every cycle can be decomposed into simple cycles. Also, for every top nonterminal $X$, $G_X$ has the same cycles as $G$, so in particular also the same gcd.
The following claim makes an important observation about top-branching 1-GVAS.

\begin{claim}\label{cl:same-residuum}
    In a top-branching 1-GVAS $G$, all the derivations have the same effect modulo $d$.
\end{claim}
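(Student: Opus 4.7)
The plan is to exploit the branching property of the top component in order to embed any complete $S$-derivation inside a \emph{complete $S$-cycle}, whose global effect is then forced to be a multiple of $d$. I may assume without loss of generality that every nonterminal of $G$ is productive, since removing non-productive ones does not change $\reach{G}$ nor the set of simple cycles. In particular, $S := \S(G)$ is productive, and by the top-branching hypothesis together with Claim~\ref{cl:reachable-branching} it is itself branching.

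First I would fix a branching derivation $S \To \alpha\, S\, \beta\, S\, \gamma$ and, using productivity, replace every nonterminal appearing in $\alpha, \beta, \gamma$ by a chosen complete terminal subderivation. This produces a particular $S$-derivation $\rho : S \To w_1\, S\, w_2\, S\, w_3$ whose only nonterminal leaves are the two distinguished copies of $S$, and whose constant terminal sum I denote by $e_\rho := \sum w_1 + \sum w_2 + \sum w_3$. Next, given any complete $S$-derivation $\tau$ with effect $e_\tau$, I would plug $\tau$ into the left $S$-leaf of $\rho$ and keep the right $S$-leaf distinguished. The resulting tree $\sigma$ is then a complete $S$-cycle whose global effect equals $e_\rho + e_\tau$. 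Since every cycle decomposes into simple cycles, as noted just before the statement, $d$ divides the global effect of $\sigma$, and therefore $e_\tau \equiv -e_\rho \pmod d$, a residue that does not depend on $\tau$. Since $\S(G)$ and every other nonterminal of the top component are branching (again by Claim~\ref{cl:reachable-branching}), the same argument applies with $S$ replaced by any top nonterminal, giving the claim in full generality.

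The only delicate step is the construction of $\rho$: one has to be sure that the branching derivation of $S$ can be completed in such a way that the two distinguished $S$-leaves are the only remaining nonterminal leaves, because otherwise plugging in $\tau$ would not produce a genuine complete cycle to which the simple-cycle decomposition applies. This is exactly what the preliminary productivity reduction buys, so no real obstacle remains, and the congruence $e_\rho + e_\tau \equiv 0 \pmod d$ does all the work.
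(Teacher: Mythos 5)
Your proof is correct and follows the same strategy as the paper's: use a branching derivation of the initial nonterminal to embed an arbitrary complete $S$-derivation inside a complete $S$-cycle, whose global effect is then a multiple of $d$. The paper plugs two derivations $v_1,v_2$ into the two copies of $S$ and decomposes the resulting derivation in two ways, whereas you plug a single $\tau$ into one copy and keep the other as the distinguished leaf, which pins the common residue explicitly to $-e_\rho \bmod d$ --- a purely presentational difference. One small factual slip in your side remark: trimming non-productive nonterminals can delete simple cycles (any rule mentioning such a nonterminal disappears), so it is not true that "the set of simple cycles" is unchanged; this is harmless here, since removing cycles can only replace $d$ by a multiple $d'$ of it, and a congruence modulo $d'$ is a fortiori a congruence modulo $d$, but the justification should not be stated as you wrote it.
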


\begin{proof}
    Let $S$ be the initial nonterminal of $G$. Since the top component is branching, there is a derivation of the form $S \Rightarrow u_1 S u_2 S u_3$ for some $u_1,u_2,u_3 \in \Z^*$. 
    Let $S \Rightarrow v_1$ and $S \Rightarrow v_2$ be two complete derivations. 
    If we insert them into the previous derivation, we get the derivation $S \Rightarrow u_1 v_1 u_2 v_2 u_3$, which can be decomposed both as the derivation $S \Rightarrow v_1$ inserted into the cycle $S \Rightarrow u_1 S u_2 v_2 u_3$ and as the derivation $S \Rightarrow v_2$ inserted into the cycle $S \Rightarrow u_1 v_1 u_2 S u_3$. 
    Therefore $\sum v_1$ and $\sum v_2$ have same residuum modulo $d$.
\end{proof}

We denote by $r$ the residuum modulo $d$ of all derivations of $G$, and by $r_X$ that of $G_X$. Note that, contrary to $d_X$ which is equal to $d$ when $X$ is a top nonterminal, $r$ and $r_X$ are different in general: in the 1-GVAS of rules $X \to XY1, Y \to X1$ and $X \to 0$, we have $d =2$, and $X$-derivations have even effects while $Y$-derivations have odd effects.

The following characterisation if very useful in the rest of our paper.
In the sequel, call a 1-GVAS \emph{infinitary} if it satisfies the five equivalent properties
listed in Proposition~\ref{prop:branching-top-component}.

\begin{proposition} \label{prop:branching-top-component}
    The following are equivalent for a 1-GVAS $G$:
    \begin{enumerate}
        \item The supremum of effects of cycles in $G$ is infinite.
        \item There exists a simple cycle with positive effect.
        \item There exists a simple $S$-cycle with a positive global effect and a positive left effect, which has size
        at most exponential wrt. the size of $G$.
        \item There exists a simple $S$-cycle with a positive global effect and a positive left effect.
        \item There exists $a \in \N$ such that $\vert R(a) \vert = \infty$
    \end{enumerate}
\end{proposition}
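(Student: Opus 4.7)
The plan is to establish the cyclic implication $(4) \Rightarrow (5) \Rightarrow (2) \Leftrightarrow (1)$, coupled with the cheap equivalence $(3) \Leftrightarrow (4)$ and the trivial $(4) \Rightarrow (2)$, and then close the loop with the one genuinely difficult direction $(1) \Rightarrow (4)$. The top-branching hypothesis of the surrounding section is used only in that last step; indeed, without it the example $S \to X$, $X \to -1 \cdot X \cdot 2 \mid 0$ satisfies $(1)$ and $(2)$ but neither $(4)$ nor $(5)$.

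First the cheap bookkeeping. Any simple derivation has no repeated nonterminal on any root-to-leaf path, so its depth is at most $|\Nn(G)|$ and its size is at most exponential in $\size(G)$; this proves $(3) \Leftrightarrow (4)$ and makes the size bound in $(3)$ automatic once the existence statement of $(4)$ is established. For $(1) \Leftrightarrow (2)$ I would reuse the decomposition argument from the proof of Lemma~\ref{lem:diagonal-lines}: any cycle of very large effect decomposes into an irreducible cycle of bounded size (Claim~\ref{cl:irreducible-size}), hence of bounded effect, plus a multiset of pasted simple cycles; if every simple cycle had nonpositive effect, the total would be bounded, contradicting $(1)$. The reverse is by nesting a single positive simple cycle.

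For $(4) \Rightarrow (5)$: given $\sigma = S \Rightarrow \alpha S \beta$ simple with left effect $L > 0$ and global effect $L + R > 0$, fix any completion $S \Rightarrow \gamma$ and consider the nested iterates $S \Rightarrow \alpha^n \gamma \beta^n$, whose effects $n(L+R) + \sum \gamma$ are pairwise distinct. To check validity from a single large $a$, I track prefix sums: because $L > 0$, the minimum along the $\alpha^n$ block is at least $a - m_\alpha$; the block $\gamma$ contributes at least $a + nL - m_\gamma$; and on the $\beta^n$ block the minimum is attained in the last sub-block and equals $a + (n-1)(L+R) + L + \sum \gamma - m_\beta$, which is nondecreasing in $n$. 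Choosing $a$ large enough to absorb $m_\alpha, m_\beta, m_\gamma$ makes every iterate valid, so $R(a)$ is infinite. For $(5) \Rightarrow (2)$: infinite $R(a)$ forces valid $S$-derivations from $a$ with unbounded effect, hence unbounded size; the same irreducible-plus-simple-cycles decomposition then forces at least one pasted simple cycle to have strictly positive effect.

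The main obstacle is $(1) \Rightarrow (4)$, producing a simple $S$-cycle with both positive global and positive left effect. Simplicity of an $S$-cycle is restrictive: after removing the distinguished leaf no path may repeat a nonterminal, so $S$ may appear only at the root and at the distinguished leaf, and the assembled cycle must admit a spinal description. The strategy is to use top-branching in two ways: (i) the shortest graph-path from $S$ to itself through the top SCC gives a simple $S$-cycle $\sigma_S$; (ii) a shortest branching derivation $S \Rightarrow \delta_1 S \delta_2 S \delta_3$ certifies that the top SCC carries arbitrary extra material. Starting from a simple positive cycle $\sigma_X$ guaranteed by $(2)$, one plants $\sigma_X$'s contribution into the spine of $\sigma_S$ at a nonterminal $Y$ that appears there, by replacing one expansion of $Y$ with an off-spine simple derivation that internally uses $\sigma_X$ once (a single copy is all simplicity allows). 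Designating the rightmost available $S$-leaf as distinguished places the planted positive contribution to the left of the distinguished leaf, forcing the left effect of the assembled $S$-cycle to inherit $\sigma_X$'s positive global effect up to bounded corrections; a shortest witness of the whole construction then collapses any unintended repetition and yields a simple $S$-cycle satisfying $(4)$. The case split on whether $X$ lies in $\topp(G)$ or in $\low(G)$ only changes which spine-rewriting is admissible, and I expect the simultaneous control of simplicity and of the sign of the left effect to be the technical heart of the proof.
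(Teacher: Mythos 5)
Your cycle $(4)\Rightarrow(5)\Rightarrow(2)\Leftrightarrow(1)$ together with $(3)\Leftrightarrow(4)$ and $(1)\Rightarrow(4)$ is a legitimate rearrangement of the paper's chain $5\Rightarrow1\Rightarrow2\Rightarrow3\Rightarrow4\Rightarrow5$, and your proofs of the easy arrows $(4)\Rightarrow(5)$, $(5)\Rightarrow(2)$ and $(1)\Leftrightarrow(2)$ match the paper's in substance. The place to look hard is your observation that, read literally, $(3)$ and $(4)$ are logically identical: a simple cycle has a simple derivation after removing the distinguished leaf, hence bounded depth and exponential size, so the size bound in $(3)$ adds nothing. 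You noticed this but treated it as a convenience. It is in fact a warning sign: the paper treats $3\Rightarrow4$ as a genuine weakening (``drop the size bound''), and the paper's own construction for $2\Rightarrow3$ inserts $n$ copies of the positive cycle $D_1\colon V\Rightarrow v_1Vv_2$ into the branching derivation $D_2\colon S\Rightarrow u_1Su_2Su_3$, yielding $S\Rightarrow u_1v_1^nw_1w_3w_2v_2^nu_2Su_3$ for an exponential $n$. That cycle has $n+1$ occurrences of $V$ on a single path, so it is emphatically not simple. The word ``simple'' in conditions $(3)$ and $(4)$ is a slip; what is actually proved and actually used downstream (Lemma~\ref{lem:line-linear-set} only needs $\gamma$ to be an $S$-cycle of controlled size with positive left and global effect) is the non-simple version.

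This matters for your $(1)\Rightarrow(4)$, which is where the real work sits. As sketched, it cannot succeed under the literal reading. You correctly observe that simplicity forbids repeating the positive cycle $\sigma_X$, so you only get ``one copy'', and you hope the bounded corrections from the rest of the spine do not swamp it. They can. Take the top-branching $1$-GVAS with rules $S\to S\,Y$, $Y\to S\,0$, $S\to -1\,Z$, $Z\to S\,2$, $S\to 0$ (the binarisation of $S\to SS0\mid -1\,S\,2\mid 0$). It is top-branching via $S\Rightarrow SS0$, the cycle $S\Rightarrow -1\,S\,2$ has effect $+1$, and nesting it makes cycle effects unbounded, so $(1)$ and $(2)$ hold (and $(5)$ as well, using the branching rule to double repeatedly). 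Yet every simple $S$-cycle in this grammar must avoid a second $S$ on any path after removing the distinguished leaf; this rules out starting with $S\to SY$ (the $Y\to S\,0$ expansion reintroduces $S$ on the path) and leaves only the one-rule cycle $S\Rightarrow -1\,Z\Rightarrow -1\,S\,2$, whose left effect is $-1<0$. So a simple $S$-cycle with positive left effect does not exist here, and the literal $(4)$ is false while $(1)$ is true. Your ``plant $\sigma_X$ once into the spine of $\sigma_S$ and take a shortest witness'' strategy has no way around this; shortest witnesses cannot fabricate a positive left effect that no simple $S$-cycle possesses.

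The fix is to drop simplicity from $(3)$ and $(4)$ and instead carry the exponential size bound in $(3)$ explicitly. Then your $(1)\Rightarrow(4)$ becomes the paper's $(2)\Rightarrow(3)$, and the argument is exactly what you can no longer do under simplicity: nest $n$ copies of the positive simple cycle $D_1$, for $n$ exponential, inside one $S$-slot of the branching derivation $D_2\colon S\Rightarrow u_1Su_2Su_3$, keeping the other $S$ as the distinguished leaf. Because $\sum v_1+\sum v_2>0$, both the left partial sum and the global effect grow linearly in $n$ and eventually overcome the bounded constants $\sum u_i,\sum w_i$; an exponential $n$ suffices, giving the size bound of $(3)$. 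So the technical heart you correctly anticipated is indeed a nesting argument, but it requires relaxing simplicity, not ingeniously preserving it.
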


\begin{proof}
    The implication $5 \Rightarrow 1$ is immediate.
    
    The implication $1 \Rightarrow 2$ is easily shown by contraposition: if no simple cycle has positive effect, then the derivations of maximal effects are the simple derivations, and there are only finitely many of them.

    \textbf{$2 \Rightarrow 3$:} By hypothesis there is a derivation $D_1: V \Rightarrow v_1 V v_2$ with $\sum v_1 + \sum v_2 > 0$. Take also derivations of the form
    \begin{itemize}
        \item $D_2: S \Rightarrow u_1 S u_2 S u_3$ with $u_1,u_2,u_3 \in \Z^*$
        \item $D_3: S \Rightarrow w_1 V w_2$ with $w_1, w_2 \in \Z^*$
        \item $D_4: V \Rightarrow w_3$ with $w_3 \in \Z^*$
    \end{itemize}
    One can easily that all the derivations $D_1$, $D_2$, $D_3$ and $D_4$ can be assumed to be of at most exponential size.
    Indeed, for derivations $D_2$, $D_3$ and $D_4$ it is immediate. For $D_1$ is suffices to take any derivation
    $D: V \Rightarrow v_1 V v_2$ with $\sum v_1 + \sum v_2 > 0$ and remove simple cycles with nonpositive effect
    as long as they exist. Then remove simple cycles with positive effect until the total effect is positive. As each simple cycle
    is at most exponential then the effect at the end is at most exponential and the derivation $D_1$ consist therefore of a simple
    derivation and at most exponentially many simple cycles (as each of them increases the total effect).

    Now, we insert $D_4$ into $D_3$ and in between of them $n$ copies of $D_1$. Inserting this block on the left into $D_2$
    yields a derivation of the form $S \Rightarrow u_1 v_1^n w_1 w_3 w_2 v_2^n u_2 S u_3$. For $n$ big enough both
    effect of $u_1 v_1^n w_1 w_3 w_2 v_2^n u_2$ and of $u_1 v_1^n w_1 w_3 w_2 v_2^n u_2 u_3$ is positive. It is
    easy to see that an exponentially big $n$ suffices, which finishes the proof of the implication.
    
    \textbf{$3 \Rightarrow 4$:} This implication is immediate.

    \textbf{$4 \Rightarrow 5$:} Take $S \Rightarrow u S v$ a cycle with $\sum u > 0$ and $\sum u + \sum v > 0$, $S \Rightarrow w$ any complete derivation and $a \in \N$ large enough so that the run $u$ is valid from $a$. There is $n_0 \in \N$ such that all the derivations $S \Rightarrow u^n w v^n$ with $n \ge n_0$ have different effects and yield runs valid from $n$. Hence, $\vert R(a) \vert = \infty$.
\end{proof}


With Proposition~\ref{prop:branching-top-component} in hand we can continue the proof of Lemma \ref{lem:triangle}.
Recall that $G$ is a 1-GVAS and $R$ is its reachability relation.
We can compute all the simple cycles to determine whether $G$ is infinitary.
If it is not, then the effects of derivations generated by $G$ are upperbounded, and the maximal ones are reached by simple derivations. 
Therefore in that case it is enough to compute all the simple derivations of $G$ to find $\delta \in \N$ such that $\mbox{UT}(0, \delta) = \emptyset$.
\newline

Now, suppose that $G$ is infinitary. 
We want to find an upper triangle where the reachability relation is a computable semilinear set $L$.
Recall that, by definition of $d$ and $r$, $R \subseteq \{(x,y) \in \N^2 \mid y \in x + r + d \Z\}$. We will choose an upper triangle where this overapproximation is actually reached. This will prove the Lemma~\ref{lem:triangle}.

In fact, it even suffices to find a threshold $T \in \N$ such that the overapproximation is reached on the vertical line $x=a$ above threshold $T$, namely $R(a) \, \cap \, [T, \infty) = (a + r + d\Z) \, \cap \, [T, \infty)$. Indeed, recall that reachability relations are diagonal, that is, stable by addition of $(1,1)$. Therefore, if the overapproximation is reached on the vertical line $x=a$ above threshold $T$, then it is reached on the upper triangle $\text{UT}(a,T-a)$.

We show it by the use of Lemma~\ref{lem:line-linear-set}.
We formulate Lemma~\ref{lem:line-linear-set} separately as it will be also used in Section~\ref{subsec:lemma_from_supertree}
in a similar reasoning.
More concretely, to show Lemma \ref{lem:triangle} we use Lemma \ref{lem:line-linear-set} in the following way.
We apply it to $G$ and $a$, with $X$ equal to starting nonterminal of $G$,
with the cycle $\gamma$ from condition 3 in Proposition~\ref{prop:branching-top-component},
and with $\tau$ being any simple $X$-tree.

\begin{lemma} \label{lem:line-linear-set}
    Let $G$ be a top-branching 1-GVAS and $X$ be a top nonterminal.
    Let $\gamma$ be an $X$-cycle of a positive left effect, and positive global effect $p$ and let $a \in \N$
    be such that there is a complete derivation $\tau$ valid at $a$
    and containing the cycle $\gamma$ (say, between nodes $N$ and $M$).
    Then, there is a threshold $T \in \N$ such that $R(a)  \, \cap \, [T, \infty) = (a + r + d\Z)  \, \cap \, [T, \infty)$.
    Moreover, $T$ is of size exponential in $\size(G)$ and polynomial in $p,  s_{\text{left}}$ and $s_{\text{right}}$,
    where $s_{\text{left}}$ is the sum of the terminals on the left of the subtree of $\tau$ rooted at $N$, and
    $s_{\text{right}}$ the sum of the absolute values of the terminals on its right.
\end{lemma}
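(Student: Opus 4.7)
The forward inclusion $R(a) \cap [T, \infty) \subseteq (a + r + d\Z) \cap [T, \infty)$ is immediate from Claim~\ref{cl:same-residuum}: every complete $S$-derivation has effect congruent to $r$ modulo $d$, so every $b \in R(a)$ lies in $a + r + d\Z$ regardless of $T$. The content of the lemma is the reverse inclusion, which I prove by starting from the witness derivation $\tau$ and modifying it to realise any target effect in $\eff(\tau) + d\Z$ above a sufficiently large threshold.

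The first ingredient is pumping $\gamma$: replacing $\gamma$ in $\tau$ by a nest of $n$ copies of $\gamma$ yields a complete derivation of effect $\eff(\tau) + (n-1)p$ which remains valid at $a$, since the positive left effect of $\gamma$ only raises the counter reached by each successive nested pump, and the positive global effect $p$ only raises the counter past node $M$. This already shows that $\{a + \eff(\tau) + kp : k \geq 0\} \subseteq R(a)$; however this is a single residue class modulo $p$, whereas I want every residue class modulo $d$ in $a + r + d\Z$.

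To fill the remaining gaps I select simple cycles $\sigma_1, \ldots, \sigma_m$ of $G$ whose effects $e_1, \ldots, e_m$ satisfy $\gcd(e_1, \ldots, e_m) = d$; by Claim~\ref{cl:irreducible-size} each can be chosen of at most exponential size in $\size(G)$. A standard numerical-semigroup argument, using that $p$ is itself a positive multiple of $d$, shows that from some exponentially-bounded $K_0$ onward every multiple of $d$ in $[K_0, \infty)$ can be written as $\alpha p + \sum_i \beta_i e_i$ with $\alpha, \beta_i \in \N$ and coefficients bounded exponentially in $\size(G)$. To realise such a combination as an actual derivation I must insert each $\sigma_i$, which is a cycle on some nonterminal $Y_i$, at a node of the derivation labelled $Y_i$. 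Using that $X$ is a top nonterminal of a top-branching 1-GVAS and that all nonterminals are reachable from $X$, I graft onto $\tau$, via a single branching expansion of the form $X \Rightarrow \alpha' X \beta' X \gamma'$, a small auxiliary derivation $\tau^{\text{aux}}$ that exposes a leaf labelled $Y_i$ for every required $i$, without disturbing the occurrence of $\gamma$ to be pumped.

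Combining these ingredients: pump $\gamma$ sufficiently many times so that the counter at every insertion point exceeds the maximum counter dip inside the corresponding $\sigma_i$, and so that the surplus $(n-1)p$ above the pumped subtree absorbs any negative left-effect of the inserted $\sigma_i$'s; then insert the prescribed multiplicities $\beta_i$ of each $\sigma_i$. The resulting derivation is valid at $a$ and has effect $\eff(\tau) + \eff(\tau^{\text{aux}}) + (n-1)p + \sum_i \beta_i e_i$, and the numerical-semigroup argument guarantees that these effects exhaust $(a + r + d\Z) \cap [T, \infty)$ for a sufficiently large $T$. The main obstacle and the source of the quantitative bound is the bookkeeping: the number of pumpings needed to absorb every counter dip scales linearly with the coefficients $\beta_i$, the maximal dips of the $\sigma_i$'s, and the validity margins determined by $s_{\text{left}}$ and $s_{\text{right}}$; verifying that these bounds keep $T$ exponential in $\size(G)$ and polynomial in $p$, $s_{\text{left}}$ and $s_{\text{right}}$ is the heart of the proof.
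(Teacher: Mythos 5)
Your proposal is essentially the paper's argument: the forward inclusion from Claim~\ref{cl:same-residuum}, pumping $\gamma$ to get one residue class modulo $p$, and then a Bezout-type combination of simple-cycle effects to fill the remaining classes of $d\Z$ modulo $p$. The paper executes the insertion slightly more simply than you do: rather than grafting an auxiliary derivation via a branching expansion $X \Rightarrow \alpha' X \beta' X \gamma'$, it just replaces the subtree rooted at $M$ (the distinguished leaf of $\gamma$) by an $X$-tree $\sigma$ that already contains every nonterminal, then pastes $i\cdot k_j'$ copies of each simple cycle into $\sigma$ to form $\sigma_i$, and finally pumps $\gamma$ above $M$ enough times to restore validity from input $a$. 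Since $\sigma$ contains every nonterminal, the branching expansion is unnecessary; the one relevant copy of $\gamma$ is already above $M$ and is what you pump. Your ``numerical-semigroup'' framing, with $\alpha$ unbounded and the $\beta_i$ bounded, is the same as the paper's step of passing from Bezout coefficients $k_j \in \Z$ to nonnegative $k_j'$ satisfying $d \equiv \sum_j k_j' c_j \pmod p$; phrasing it as a Sylvester--Frobenius bound is a cosmetic difference (and a little delicate since the $c_j$ can be negative, which is why the paper works modulo $p$ directly). Two small inaccuracies to watch: citing Claim~\ref{cl:irreducible-size} for the size of simple cycles is the wrong reference (simple cycles are small because their depth is at most $|\Nn(G)|$, not via irreducibility), and the phrase ``the surplus $(n-1)p$ above the pumped subtree absorbs any negative left-effect'' conflates the global surplus with pointwise counter validity, which is really controlled by the number of $\gamma$-copies and the bound $s_{\text{right}}$. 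Neither is a structural gap; the bookkeeping you defer is exactly what the paper also only sketches in its last sentence.
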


\begin{proof}
Recall that the effect of every $X$-derivation has residuum $r$ modulo $d$. Therefore for any number $T \in \N$
we have $R(a)  \, \cap \, [T, \infty) \subseteq (a + r + d\Z)  \, \cap \, [T, \infty)$. Our aim is therefore to find $T$ big enough such that
the converse inclusion also holds.

    First, observe that the run yielded by $\tau$ remains valid from input $a$ if we insert more copies of the cycle $\gamma$ at node $M$ (or node $N$). It also remains valid if we replace the tree rooted at $M$ by another $X$-tree and insert enough copies of $\gamma$.

    By replacing the tree rooted at $M$ by some simple $X$-tree, and inserting arbitrarily many copies of the cycle $\gamma$ above, we already generate derivationss witnessing that $R(a) \cap [T, \infty) \supseteq (a + r + p\Z) \cap [T, \infty)$ for some $T$.
    We would like to have the same inclusion, but with the smaller period $d$ instead of $p$.
    For that, it suffices to find a way to increase the effect of our derivations by $d$, or simply by some sufficiently small value in $d + p \Z$.

    Here is how we proceed. 
    Let $k \in \N$ be the integer such that $p = dk$. We need to create for each $i \in [0, k-1]$ a derivation $\tau_i$
    which is valid from the input $a$ and has effect in $i \cdot d + p \Z$.    
    Let $c_1,\ldots,c_m$ be the effects of the simple cycles in $G$. By Bezout's theorem, there are $k_1, \ldots, k_m \in \Z$ such that $d = \sum_{j=1,\ldots,m} k_j c_j$. By adding a multiple of $p$ large enough to each $k_j$, we find $k_1', \ldots, k_m' \in \N$ such that $d \equiv \sum_{j=1,\ldots,m} k_j' c_j \mod p$. This intuitively means that combining $k'_j$ copies of cycles of effects $c_j$ gives us the
    effect in $d + p \Z$. 
    All those numbers can be computed in time exponential in $\size(G)$ 
    using the extended Euclidean algorithm.
    Let $\sigma$ be an $X$-tree where every nonterminal occurs at least once.
    There are such $\sigma$ of size exponential in $\size(G)$.
    For $i \in [0,k-1]$ let $\sigma_i$ be a copy of $\sigma$ where we insert, for every $j \in [1,m]$, exactly $i \cdot k_j'$
    copies of a cycle of the effect $c_j$.
    Then the effect of $\sigma_i$ is in $i \cdot d + p\Z$. Using $\sigma_i$ we can pretty easily get a required threshold $T$.
    We create the needed derivation $\tau_i$ as a copy of $\tau$, where we replace the tree rooted at $M$ by $\sigma_i$, and then insert above the least number of copies of $\gamma$ necessary to make the whole derivation valid from input $a$.
    Let us take $T$ to be the maximal effect of a derivation $\tau_i$, for $i \in [0,k-1]$.
    Then indeed we have $R(a) \cap [T, \infty) = (a + r + d\Z) \cap [T, \infty)$, as required.
    

    To bound the size of $T$, there only remains to bound the number of times that the cycle $\gamma$ has to be inserted in the $\tau_i$.
    This number is at most $s_{\text{right}}$, plus some exponential function of $\size(G)$. Hence, $T$ has the desired size.



\end{proof}

\section{Proof of Lemma \ref{lem:small-lines}}\label{sec:mainproof}

In this section, we prove Lemma \ref{lem:small-lines} in the case where $G$ is infinitary, namely it satisfies the property of Proposition \ref{prop:branching-top-component}. Otherwise, the proof is trivial: there is a bound $\Delta \in \N$ on the effects of the derivations produced by $G$, so for all $a \in \N$, $R_G(a)\vert_{\ge a + \Delta} = \emptyset$. This bound is reached at simple derivations, so it is easily computable.



\subsection{General idea} \label{subsec:general-idea}

Recall that we have some input value $a \in \N$, a top-branching 1-GVAS $G$ and for every lower nonterminal $V$ a thin 1-GVAS $H_V$ equivalent to $G_V$. Our goal is to construct a thin 1-GVAS $H$,
which is almost-equivalent to $G$ when the input is exactly $a$. More concretely, it can output the same
big values as $G$ and some small values as $G$, but should output nothing more. It should also output nothing more with other input values.

\paragraph*{Searching through derivations}
Our general idea to construct $H$ is to inspect derivations of $G$ and simplify them.
For some very special $G$ this might work directly, even without simplifications.
If, for example, there are only finitely many derivations of $G$ and we are able to compute them all,
then $H$ can be simply constructed as follows:
it has an initial nonterminal $S$ and for each output $b$ produced by $G$ it has a rule $S \to (-a)ab$. (We need the $-aa$ because the effect $b$ might not be achievable with an input smaller than $a$, and $H$ must output nothing more than $G$ on every input.)
Obviously such a grammar is thin, but also obviously this is a very naive approach.

In general $G$ may have infinitely many derivations. Thus, grammar $H$ cannot just simply list them all.
Here we need some observations how we can search through derivations and during the construction
of a derivation realise that we can produce an almost-equivalent thin 1-GVAS (by almost-equivalent
be mean $H$ such that $H \under_{\set{a} \times \N_{\geq T}} G$ for some computable threshold $T$, where
$\N_{\geq T} = \set{n \in \N \mid n \geq T}$). Intuitively, we will construct derivations in the left-to-right manner.
To make this precise we define an \emph{Euler Tour} of a derivation. Recall that each derivation of $G$ is a full binary tree.

\begin{wrapfigure}{l}{0.4 \textwidth}
\centering
\begin{tikzpicture}[
    sibling distance=20mm, level distance=15mm]

\node[circle,fill, minimum size=1mm, inner sep=0pt] (root) {} 
    child {
        node[circle,fill, minimum size=1mm, inner sep=0pt] (left) {} 
            child { node[circle,fill, minimum size=1mm, inner sep=0pt] (leftleft) {} } 
            child { node[circle,fill, minimum size=1mm, inner sep=0pt] (leftright) {} } 
    }
    child {
        node[circle,fill, minimum size=1mm, inner sep=0pt] (right) {} 
    };

\node[left=2mm of root] (1) {1};
\node[right=2mm of root] (10) {10};

\node[left=2mm of left] (2) {2};
\node[right=2mm of left] (7) {7};

\node[left=2mm of right] (8) {8};
\node[right=2mm of right] (9) {9};

\node[left=2mm of leftleft] (3) {3};
\node[right=2mm of leftleft] (4) {4};

\node[left=2mm of leftright] (5) {5};
\node[right=2mm of leftright] (6) {6};

\draw[red, thick] (1.west) to (3.west);

\draw[red, thick] (3.west) arc[start angle=180, end angle=360, radius=0.72cm];
\draw[red, thick] (5.west) arc[start angle=180, end angle=360, radius=0.72cm];
\draw[red, thick] (8.west) arc[start angle=180, end angle=360, radius=0.72cm];

\draw[red, thick] (4.east) arc[start angle=180, end angle=0, radius=0.28cm];
\draw[red, thick] (7.east) arc[start angle=180, end angle=0, radius=0.28cm];

\draw[red, thick] (6.east) to (7.east);

\draw[red, thick, ->] (9.east) to (10.east);

\end{tikzpicture}
\end{wrapfigure}

The \emph{Euler Tour} of a full binary tree $\tau$, written $\mbox{ET}(\tau)$, is a sequence of actions, where each action is either the first visit or the last visit of a node. For a leaf $L$, we have $\mbox{ET}(L) := \text{first}(L), \text{last}(L)$ and for a tree $\tau$ of root $N$, of left subtree $\tau_1$ and right subtee $\tau_2$ we have $\mbox{ET}(\tau) := \text{first}(N) \cdot \mbox{ET}(\tau_1) \cdot \mbox{ET}(\tau_2) \cdot \text{last}(N)$. In the picture above for the five-node derivation the sequence of actions is indicated by the numbers from $1$ to $10$.

We will be constructing derivations following the order of the Euler Tour. At each moment of such a construction the part to the left
of the currently processed node will be fully constructed, while the part to the right will be only partially constructed. More precisely,
we call the currently processed node the \emph{current node} and the path from the root to the current node the \emph{current branch}.
The part to the right of the current branch consists only of children of the nonterminals on the current branch.
We call such a partially constructed derivation a \emph{partial derivation}.
Example of a partial derivation is on Figure~\ref{fig:partial-deriv}, the input $a$ is assumed to be $10$.

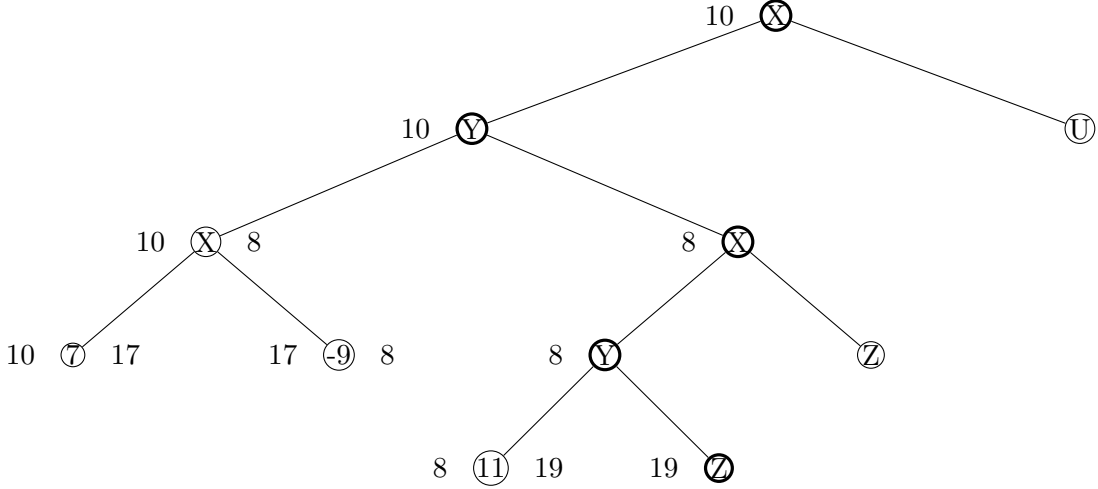
\begin{figure}[h]
\caption{Partial derivation}
\label{fig:partial-deriv}
\begin{tikzpicture}[
    level distance=15mm,
    level 1/.style={sibling distance=8cm},
    level 2/.style={sibling distance=7cm}, 
    level 3/.style={sibling distance=3.5cm}, 
    level 4/.style={sibling distance=3cm}]

\node[circle, very thick, draw, minimum size=1mm, inner sep=0pt] (root) {X} 
    child {
        node[circle, very thick, draw, minimum size=1mm, inner sep=0pt] (L) {Y}
            child { node[circle, draw, minimum size=1mm, inner sep=0pt] (LL) {X} 
            	child { node[circle, draw, minimum size=1mm, inner sep=0pt] (LLL) {7} } 
            	child { node[circle, draw, minimum size=1mm, inner sep=0pt] (LLR) {-9} }
            }
            child { node[circle, very thick, draw, minimum size=1mm, inner sep=0pt] (LR) {X}
            	child { node[circle, very thick, draw, minimum size=1mm, inner sep=0pt] (LRL) {Y}
 	           	child { node[circle, draw, minimum size=1mm, inner sep=0pt] (LRLL) {11} } 
      		      	child { node[circle, very thick, draw, minimum size=1mm, inner sep=0pt] (LRLR) {Z} }      
		} 
            	child { node[circle, draw, minimum size=1mm, inner sep=0pt] (LRR) {Z} }            
            }
    }
    child {
        node[circle, draw, minimum size=1mm, inner sep=0pt] (R) {U}
    };

\node[left=2mm of root] (1) {10};
\node[right=2mm of root] (10) {};

\node[left=2mm of L] (2) {10};
\node[right=2mm of L] (7) {};

\node[left=2mm of R] (8) {};
\node[right=2mm of R] (9) {};

\node[left=2mm of LL] (3) {10};
\node[right=2mm of LL] (4) {8};

\node[left=2mm of LLL] (3) {10};
\node[right=2mm of LLL] (4) {17};

\node[left=2mm of LLR] (3) {17};
\node[right=2mm of LLR] (4) {8};

\node[left=2mm of LR] (5) {8};
\node[right=2mm of LR] (6) {};

\node[left=2mm of LRL] (5) {8};
\node[right=2mm of LRL] (6) {};

\node[left=2mm of LRLL] (5) {8};
\node[right=2mm of LRLL] (6) {19};

\node[left=2mm of LRLR] (5) {19};
\node[right=2mm of LRLR] (6) {};

\end{tikzpicture}
\end{figure}

In the example on Figure~\ref{fig:partial-deriv} the current node is the $Z$-labelled node at the lowest level.
The nodes at the current branch are
emphasised by a thick circle. Notice that nodes on the current branch have only their input defined, nodes on the left of the current
branch have both input and output defined, while nodes on the right of the current branch have neither input nor output defined.

To guarantee termination of our algorithm we need to find situations in which we can stop building a derivation and claim that
all the derivations, which can be constructed from the current derivation can be expressed by some rule of some thin 1-GVAS.
We present now two such situations. In fact, in the algorithm presented below, they need to be modified a bit, but the intuition
behind them will still be the same. Therefore, we simplify the conditions here and later comment how they should be modified
to accordingly deal with some special situations.

\paragraph*{Two stop conditions}
The first situation in which we can stop expanding a partial derivation is when the current nonterminal and all the nonterminals on the right
side of the current branch are lower nonterminals. For example, if both $U$ and $Z$ on Figure~\ref{fig:partial-deriv} would be lower nonterminals then
we stop expanding this partial derivation. This is because if in the constructed grammar we add a rule $X \to (-10)(10) 9 Z Z U$
it cares of all the derivations, which can be constructed from our current derivation, and also, more importantly, it contains only lower nonterminals
on the right hand side. Therefore a grammar constructed from such rules is always thin.

The second situation in which we can stop expanding a partial derivation is when the current nonterminal is a top nonterminal
and additionally the current input value is high enough. For example, if $Z$ on Figure~\ref{fig:partial-deriv} would be a top nonterminal and
value $19$ high enough with respect to $G$, then we stop expanding this partial derivation.
The intuition is that if the current nonterminal, say $Z$, is a top nonterminal, then we can
use condition 3 in Proposition~\ref{prop:branching-top-component} to guarantee that there is a $Z$-cycle $\sigma$
with a positive global effect and a positive left effect. This would allow us to pump up the counter value.
However, if the input value of the current node is not high enough then $\sigma$ may not be fireable in the current situation.
Fortunately, because of condition 3 in Proposition~\ref{prop:branching-top-component}
there is an exponential bound on the size of $\sigma$, depending only on the size of 1-GVAS.
If the current input value is at least as big as the maximal drop of the counter, which can be 
experienced along the left part of $\sigma$, then we can safely fire $\sigma$. Moreover, then we can fire $\sigma$ arbitrarily many times,
so indeed pump the input counter arbitrarily high. It is not immediate to see, but in such a situation we are able to directly
construct a thin 1-GVAS $H$ almost-equivalent to $G$ when starting at input $a$.

The idea is that if we have such a pumping-up fireable cycle then the set of all the possible output values of all the derivations
with input $a$ is semilinear and computable. Then we use Proposition~\ref{prop:semilin-to-thin} to construct a thin 1-GVAS
with the relation $\set{a} \times S$, where $S$ is the set of all the possible output values. In order to see why having a pumping-up
cycle we use Lemma~\ref{lem:line-linear-set}. Intuitively, it says that having a pumping-up cycle is enough to get all the possible
effects of derivations from some not very high threshold (an exponential one).
To summarise the above considerations, the following claim shows that we may stop expanding the current partial derivation
as soon as we encounter a top nonterminal with big input value.

\begin{claim}\label{cl:high-input-branching}
There is a bound $A \in \N$, exponential in the size of $G$, such that
for any partial derivation $\tau$ with current node labelled by a top nonterminal and current input value
at least $A$ the set of all the output values of derivations of $G$ equals the set of all the output values
obtained from $\tau$ by pumping-up a single cycle of at most exponential size and pasting some small cycles of at most
exponential size. Moreover, all the cycles are computable in exponential time,
so in consequence having a top nonterminal and input value at least $A$ one can compute the set of $R_G(a) \cap [T, \infty)$
for some exponential $T$ in exponential time.
\end{claim}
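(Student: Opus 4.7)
The plan is to reduce the claim to Lemma~\ref{lem:line-linear-set}, applied with the current top nonterminal $X$ playing the role of the nonterminal there. Concretely, I want to produce (i) an $X$-cycle $\gamma_X$ of positive left effect and positive global effect, of at most exponential size, and (ii) a complete derivation of $G$, valid from input $a$, that contains $\gamma_X$. Once these are in hand, Lemma~\ref{lem:line-linear-set} immediately yields $R_G(a)\cap[T,\infty)=(a+r+d\Z)\cap[T,\infty)$ for some exponential $T$; combined with Claim~\ref{cl:same-residuum} (which gives the opposite inclusion), this is the promised semilinear description, trivially representable in exponential space and (if desired) convertible to a thin 1-GVAS via Proposition~\ref{prop:semilin-to-thin}.

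First, I would build $\gamma_X$ for every top nonterminal $X$. Since $G$ is infinitary and top-branching, Claim~\ref{cl:reachable-branching} places $X$ in the same branching top component as $S$, and the argument of the implication ``$2\Rightarrow 3$'' in Proposition~\ref{prop:branching-top-component} goes through verbatim with $X$ in place of $S$: take a simple cycle of positive effect (from condition~2), embed it between the two occurrences of $X$ in a branching derivation $X\Rightarrow u_1Xu_2Xu_3$, and insert enough copies until both the left and global effects become positive. As in the original proof, only an exponential number of copies are needed, so $\gamma_X$ has exponential size and is computable in exponential time. Define $A$ to be the maximum, over all top nonterminals $X$, of the counter drop along the left part of $\gamma_X$, together with the counter drop of any simple derivation of $G$; by Claim~\ref{cl:irreducible-size} this is exponentially bounded in $\size(G)$.

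With $A$ fixed and a partial derivation $\tau$ whose current $X$-node has input $c\geq A$ in hand, I assemble a complete derivation $\tau^*$ of $G$ valid from $a$ as follows. Insert $\gamma_X$ at the current node: validity is preserved because the left drop of $\gamma_X$ is at most $A\leq c$. The output at the bottom of $\gamma_X$ is at least $c$ (positive global effect), so any simple $X$-derivation safely fills the $X$-hole left below $\gamma_X$. For each remaining unfilled right subtree hanging off the current branch of $\tau$, plug in any simple derivation of the appropriate nonterminal; the input inherited from the right side of $\gamma_X$ is again at least $c\geq A$, which dominates any simple-derivation drop. Lemma~\ref{lem:line-linear-set} then applies to $G$, $X$, $\gamma_X$, $a$ and $\tau^*$, delivering the exponential threshold $T$ and the equality $R_G(a)\cap[T,\infty)=(a+r+d\Z)\cap[T,\infty)$. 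The small auxiliary cycles used to hit each residue modulo $p$ arise from the proof of Lemma~\ref{lem:line-linear-set} itself via the extended Euclidean algorithm applied to the simple-cycle effects, so they are likewise of exponential size and computable in exponential time.

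The main obstacle I anticipate is purely bookkeeping: tightly certifying that $|\gamma_X|$ and $A$ are exponentially bounded in $\size(G)$, uniformly over all top nonterminals $X$, while faithfully adapting the ``$2\Rightarrow 3$'' construction away from the initial nonterminal. The conceptual content is already packaged in Proposition~\ref{prop:branching-top-component} and Lemma~\ref{lem:line-linear-set}; the new observation is simply that encountering a top nonterminal with a high input in mid-derivation is, as far as pumping is concerned, as good as restarting the reachability analysis from the initial nonterminal.
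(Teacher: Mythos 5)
Your overall route is the paper's route: produce an $X$-cycle $\gamma_X$ of positive left and global effect via the $2\Rightarrow 3$ implication of Proposition~\ref{prop:branching-top-component}, embed it in a complete derivation starting from the given partial derivation, and invoke Lemma~\ref{lem:line-linear-set}. The paper's own formal version of this reasoning appears in Section~\ref{subsec:lemma_from_supertree} (the successful-supernode case, Case~1), and your plan reproduces its skeleton.

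However, there is a genuine gap in your validity argument. You insert a \emph{single} copy of $\gamma_X$ at the current node and then assert that ``the input inherited from the right side of $\gamma_X$ is again at least $c \geq A$''. This is false in general. After the left part of $\gamma_X$ and the $X$-hole filler of effect $e$, the counter sits at $c + L + e$; after the right part it sits at $c + L + R + e$, where $L+R \geq 1$ is the (positive) global effect but $e$ can be as low as $-A$. So the value after traversing $\gamma_X$ is only guaranteed to be at least $c + 1 - A$, which is around $1$, not $c$. Moreover, you never verify that the counter stays nonnegative \emph{during} the right traversal of $\gamma_X$, whose internal drop is not covered by your definition of $A$ (you only included the left drop and simple-derivation drops). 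The paper sidesteps both issues by inserting \emph{enough copies} of $\gamma$ at the current node to make the whole derivation valid from $a$: each copy raises the floor by $L \geq 1$ on the way down, and the accumulated positive global effect eventually dominates all subsequent drops. If you want to keep the single-copy construction, you must inflate $A$ to also bound the right-side drop of $\gamma_X$ and add a cushion for the downstream simple derivations — this remains exponential, but the argument as written does not establish validity. (Also a minor terminological slip: the statement ``the output at the bottom of $\gamma_X$ is at least $c$'' follows from the \emph{positive left effect}, not the global effect.)
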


Notice now that the two described situations are pretty different in style. In the first situation we may stop expanding the current
partial derivation, but we need to expand others to guarantee that all the possible derivations have been considered. On the other hand,
encountering the second situation guarantees that we may stop searching for any other derivations, as we can directly compute
a thin 1-GVAS $H$ almost-equivalent to $G$.

\paragraph*{Bounding the current node depth}
The two above stop conditions are however not sufficient for termination of the algorithm. One may expand a partial derivation
further and further and never meet any of those conditions. Therefore, to guarantee termination, we design a technique,
which allows us to consider only partial derivations with the current node being on a rather small depth.
To understand the need for this technique, consider a rule of the form $X \to X 1$. If we apply this rule in our partial derivation
not much changes. We have a new node, but it is labelled by the same nonterminal and has the same input as its parent.
So, if we don't deal with such cases appropriately, we can expand our partial derivation forever. The idea to avoid this problem
is to not allow for cycles with the left effect equal zero in the partial derivation.
We call cycles with the left effect equal zero the left-zero-effect cycles.
Of course, we cannot just disallow such cycles in the derivations, because sometimes they are needed.
However, we can disallow them in the partial derivation.
Intuitively, instead of adding a left-zero-effect $X$-cycle when we visit an $X$-labelled node
for the first time in the Euler Tour, we add the same $X$-cycle while visiting the same node for the last time in the Euler Tour.
That simplifies our partial derivations much, but adds to the complication of the algorithm, as while going upwards in the Euler Tour we
need to add some cycles. At the moment however, we focus on the simplification.

In the process of expanding the partial derivations we treat nodes labelled by lower nonterminals a bit like nodes labelled by terminals. Indeed, for every $V \in \low(G)$, with have a thin 1-GVAS $H_V$ equivalent to $G_V$, and by~\cite{DBLP:conf/fsttcs/AtigG11} we have a reachability algorithm for them. Therefore, we focus on the situation when all the nonterminals labelling the ancestors of the current node are top nonterminals; this is indeed the case in which we deal in the algorithm.
Notice now that by Claim~\ref{cl:high-input-branching} if any of the input values on the current branch is at least $A$
then we can finish our algorithm. If this is not the case all the nodes on the current branch, above the current node,
have input values in the interval $[0, A-1]$.
Notice, that there can be no two nodes on the current branch labelled by the same nonterminal and with the same input value,
as that would mean that we have created a cycle in between them with the left effect being zero. Thus, the number of the nodes on the
current branch is bounded by $A \cdot \size(G)$. In other words, depth of the current node is always at most $A \cdot \size(G)$.

\paragraph*{Searching through the partial derivations}
Our algorithm searches through the partial derivations. It organises these partial derivations in a tree, which we call a \emph{supertree},
to distinguish it from the derivation, which are also trees. Nodes of the supertree are called \emph{supernodes},
branches of the supertree are called \emph{superbranches} etc. In each supernode of the supertree there is a partial derivation.
For each possible expansion of this partial derivation we create a child of its supernode and place there the expanded derivation.
Expansions can be, roughly speaking, of two types: either we proceed downwards or upwards the partial derivation.
We proceed downwards if the current node in the partial derivation has specified input, but unspecified output.
If the current node is $X$-labelled we then look for all the rules $X \to \alpha$ and expand each independently, so for each one
create a new superchild. If the newly created node in such a superchild has the same nonterminal and input value as some of its
ancestors then we call this supernode a \emph{failure}, as it contains a left-zero-effect cycle. Otherwise we treat it as mentioned till now.
On the other hand, if both the input and output of the current node are specified we proceed upwards.
This is now a bit challenging, as, recall, we need to paste the left-zero-effect $X$-cycles on our tour upwards.

Seemingly the step of going upwards can be problematic, as we can add as many left-zero-effect cycles, as we want.
However, it turns out that if the output value is big enough then we can always finish our algorithm, a bit similarly
as described in the Claim~\ref{cl:high-input-branching}. It is important to emphasise that we can do this even
if we are currently going upwards and the current node is not labelled by a top nonterminal.
The idea is that either from the current partial derivation we can reach later a top nonterminal or we cannot.
Importantly, we should take into account the left-zero-effect cycles, which while added can possibly insert some top nonterminal.
We take into account at every such supernodes all the left-zero-effect cycles, which are currently available, namely have
been encountered in any supernode of our supertree, which was created up to that point (recall that such supernodes are called failures).
It is not hard to check whether a top nonterminal is reachable from the current partial derivation. 
We just need to see what is the nonterminal is the current node, what are the nonterminals hanging on the right
of the current branch and what are the nonterminals on the main branch.
For example in Figure~\ref{fig:partial-deriv} the current nonterminal is $Z$, the nonterminals on the right are $Z$ and $U$
and the nonterminals on the current branch are $X$ and $Y$. If the current nonterminal or any of the nonterminals on the right is a top nonterminal,
then we can clearly reach a top nonterminal in the future. However, there is also another option. It might happen that $U$ and
$Z$ are lower nonterminals, but there is some left-zero-effect $X$-cycle, which produces on the right a top nonterminal.
Thus, we also need to check what can be reached from the nonterminals on the main branch by the use of currently available cycles.

There are two options. The first option is that only lower nonterminals can be reached. In that case we can stop exploring this
partial derivation and create appropriate thin grammar taking care of this situation (details will be explained in the construction).
The second option is that some top nonterminal can be reached. It is not hard to see that if such a nonterminal can be reached
using the nonterminals on the main branch (the number of them is bounded by $A \cdot \lvert \topp(G) \rvert$) and the simple cycles,
which we add then it can be reached rather quickly. More concretely, there is a computable bound, denoted below by $D'$,
such that if a top nonterminal can be reached then it can be reached in at most $D'$ moves. This means that in our current
partial derivation we are going upwards (so the output is specified) and the output is bigger than $A + C \cdot D'$, where $C$
is the maximal drop we can do in one move then we can reach the reachable top nonterminal with the counter value at least $A$.
This, in turn, means that are done, we can compute the whole set of reachable outputs by Claim~\ref{cl:high-input-branching}.

\paragraph*{Concluding the argument}
Having all these observations it is not hard to conclude the argument. Our algorithm builds a supertree, which is intuitively
just searching through all the possible partial derivations. It is not hard to show that the supertree is finite, as, intuitively
the size of the partial derivation grows along a superbranch and this size is bounded. Details are presented in Section~\ref{subsec:finite_supertree}.

In the constructed supertree we have finitely many supernodes and a partial derivation in each one.
If in some of these partial derivations we can reach a top nonterminal with input value at least $A$
then we use Claim~\ref{cl:high-input-branching} to directly compute the set of reachable outputs.
A supernode corresponding to such a situation is called in our algorithm a \emph{success} supernode.
Otherwise, if there is no success supernode in the whole supertree we consider all the supernodes in which the partial
derivation can reach only lower nonterminals. In the algorithm below, such supernodes are the neutral superleaves. (They are called neutral because their status is neither failed, nor successful.)
For each neutral superleaf we compute a thin 1-GVAS corresponding to all the derivations which can be created
from expanding the particular derivation of this supernode. Then the union of all of these thin 1-GVAS is the thin 1-GVAS
equivalent to the input 1-GVAS $G$.

\subsection{The supertree} \label{subsec:supertree}


In this subsection, we give a formal definition of the supertree. For that, we need two constants: 
\begin{itemize}
\item $A$ is a constant such that, for every top nonterminal $X$ of $G$, there is a cycle of positive left effect and positive global effect whose left part yields a run valid at $A$.
($A$ exists because we assume that $G$ is infinitary, namely satisfies the properties of Proposition \ref{prop:branching-top-component}.)
\item $C$ is a constant such that all terminals of $G$ are greater than $-C$ and all nonterminals can produce a complete derivation of effect at least $-C$ and valid at $C$.
\end{itemize}
We will also write $D := A \cdot \lvert \text{Top}(G) \rvert$ 
and $D' := D + (A \cdot \lvert \text{Top}(G) \rvert)^2$. It will represent the depth of the current nodes of partial derivations at different moments in the proof.
\newline

We need some discussion about the formalisation of trees. Binary trees are usually formalised as prefix-closed subsets of $\{0,1\}^*$,
but this definition is not well-suited for our purpose. Indeed, we manipulate derivation trees, some of which are expansions of others. For example, $S(X(X_1,X_2),Y)$ and $S(X(X_1',X_2'),Y)$ are two expansions of $S(X,Y)$. We would like to say that the nodes of label $X$ in those three derivation trees are ``the same node, but in different trees'', and yet ``not the same node'' as the node of label $X$ in $S(V,X)$ for instance.

The implicit formalisation of derivation trees in the remaining of this section is the following. We have an infinite set $\mathcal I$ of identifiers (natural numbers for instance). A binary tree is a triple $(I,E_\text{left},E_\text{right})$ where $I \subseteq \mathcal I$ is a finite set of identifiers and $E_\text{left},E_\text{right}$ are binary relations on $I$. We write $E := E_\text{left} \cup E_\text{right}$. We ask that $E_\text{left},E_\text{right}$ be the left-child and right-child relations of a full binary tree of nodes $I$, that is:
\begin{itemize}
\item There is a unique node $R$, called the root, such that for all $N \in I$, $(N,R) \notin E$.
\item For every other node $M$, there is a unique node $N$ such that $(N,M) \in E$. 
\item $E_\text{left} \cap E_\text{right} = \emptyset$
\item There is no $E$-cycle. That is, there is no sequence $N_1,...,N_k$ such that it holds that
$(N_1,N_2),\dots, (N_{k-1},N_k),(N_k,N_1) \in E$.
\end{itemize}
When $N \in I \cap I'$, then we will say that $N$ is a \emph{common node} of the trees $(I,E_\text{left},E_\text{right})$ and $(I',E_\text{left}',E_\text{right}')$. In the following, we will often say things like ``a copy of $(I,E_\text{left},E_\text{right})$, where $N$ has two children''. It will mean a tree of the form $(I \uplus \{N_1,N_2\},E_\text{left} \uplus \{(N,N_1)\},E_\text{right} \uplus \{(N,N_2)\})$ where $N_1,N_2 \in I$ are two identifiers that are still not used in any other trees we defined.
\newline

Now we can give the formal definition of the supertree. Each supernode has 
\begin{itemize}
\item a status, which can be \emph{neutral}, \emph{successful} or \emph{failed}
\item a derivation with input and output counters specified at some nodes, but not necessarily everywhere. For brevity, we call those derivations "partial derivations". We define them below.
\end{itemize}

A \emph{partial derivation} $\tau$ is a binary tree $(I,E_\text{left},E_\text{right})$ equipped with a labelling function 
$\text{label}: 
I \to \mathcal N(G) \cup \Z$ and partial input and output functions viewed as functions $\text{in, out}: 
I \to \N \cup \{\bot\}$.

We require flow conditions to be satisfied, namely
\begin{itemize}
    \item for every internal node $N$ with left child $N_L$ and right child $N_R$, we have $\text{in}(N_L) = \text{in}(N),\text{out}(N_R) = \text{out}(N)$, and $\text{in}(N_R) = \text{out}(N_L)$
    \item for every leaf $N$ labelled by a terminal $c$, $\text{out}(N) = \text{in}(N) + c$.
\end{itemize}
However, leaves are not necessarily all labelled by nonterminals.
Moreover, there must be a unique action $\mbox{Act}$ in the Euler Tour of $(I,E_\text{left},E_\text{right})$ such that for every node $N$:
\begin{itemize}
\item $\text{in}(N)$ is specified if and only if $\text{first}(N) \le \text{Act}$
\item $\text{out}(N)$ is specified if and only if $\text{last}(N) \le \text{Act}$
\item $N$ has children if and only if it is labelled by a top nonterminal and $\text{first}(N) < \text{Act}$.
\end{itemize}
The action $\mbox{Act}$ is called the \emph{current action} of the partial derivation, the associated node is called the \emph{current node}, and the branch from the root to the current node is called the \emph{current branch}. The three above properties
say that we construct partial derivations following the order of a Euler Tour. At the first visit, we define the inputs and expand the top nonterminals, and at the last visit we define the outputs. Note that only top nonterminals are expanded, so there are leaves labelled by lower nonterminals. A node of partial derivation is called \emph{unvisited} if neither its input nor its output is specified. For brevity, we sometimes refer to ``top/lower nodes'' to denote nodes labelled by top/lower nonterminals.

We say that a partial derivation $\tau$ is \emph{produced} by some GVAS $G$ if the derivation obtained by dropping the counters in $\tau$ is produced by $G$, and, for every leaf $N$ of $\tau$ labelled by a nonterminal $V$, $\text{in}(N) \xrightarrow{V} \text{out}(N)$.

For every $a' \in \N$ and $X \in \text{Top}(G)$, we call\emph{$(a',X)$-cycle} a partial derivation where both the current node and the root have input $a'$, label $X$ and unspecified output. Its current branch and current node are also called \emph{main branch} and \emph{distinguished leaf} to match the terminology of cycles. \emph{Partial cycles} are partial derivations that are $(a',X)$-cycle for some $a' \in \N$ and $X \in \text{Top}(G)$.
\emph {Simple partial cycles} are partial cycles where the root and the current node is the only pair of nodes in the main branch having same inputs and same labels. For $a' \in \N$ and $X \in \text{Top}(G)$, for all $(a',X)$-cycles $\gamma$ and $\gamma'$, we write $\gamma \equiv \gamma'$ if the main branches of $\gamma$ and $\gamma'$ induce the same sequence of inputs, labels, and labels of (potential) children on the right of the main branch. More precisely, if we write $N_0, ... N_k$ the sequence of nodes of the main branch of $\gamma$ from the distinguished leaf to the root, and $N_0', ... N_{k'}'$
the corresponding sequence in $\gamma'$, we ask that
\begin{itemize}
    \item $k' = k$
    \item for every $i \in [0,k], \text{in}(N_i), \text{label}(N_i) = \text{in}(N_i'), \text{label}(N_i')$
    \item for every $i \in [0,k-1]$, $N_i$ is a left child if and only if $N_i'$ is a left child; and in that case the right child of $N_{i+1}$ has the same label as the right child of $N_{i+1}'$.
\end{itemize}

So relation $\equiv$ is an equivalence relation. For fixed $a' \in \N$ and $X \in \topp(G)$, there are finitely many classes of simple $(a',X)$-cycles.

Given a partial derivation $\tau$ with current node $N$, whose parent is denoted $N'$, and a $(\text{in}(N),\text{label}(N))$-cycle $\gamma$ of root $M$, we call \emph{inserting $\gamma$ in $\tau$} the partial derivation obtained by
\begin{itemize}
    \item taking a copy of $\gamma$ where all nodes are renamed to unused identifiers, except the current node which is renamed to $N$
    \item taking the union of this with $\tau$ (union of the set of nodes and union of the child relations)
    \item removing the edge from $N'$ to $N$ and adding an edge of the same type from $N'$ to $M$.
\end{itemize}
Observe that $\gamma$ is inserted above the current node.
\newline

We need one last definition. As explained in Subsection \ref{subsec:general-idea}, one of our two mechanisms to ensure finiteness of the supertree is to treat as final results the partial derivations without an unvisited node labelled by a top nonterminal. We must however take into account the partial cycles that we can insert later and that could contain an unvisited top node. To express concisely what we will do, we define a graph $\mathcal G(\Gamma)$ for every set $\Gamma$ of partial cycles. Its vertices are all the pairs $(a',X)$ for $a' \in [0,A-1]$ and $X \in \mbox{Top}(G)$. There is an edge from $(a_1,X_1)$ to $(a_2,X_2)$ if there is an $(a_1,X_1)$-cycle in $\Gamma$ that contains $(a_2,X_2)$. We say that a partial derivation $\tau$ \emph{can reach an unvisited top node with the help of $\Gamma$} if there is a pair $(a_1,X_1) \in [0,A-1] \times \mbox{Top}(G)$ such that $a_1$ is the input and $X_1$ is the label of a node in $\tau$ with unspecified output, and $(a_1,X_1)$ can reach in $\mathcal G(\Gamma)$ a vertex $(a_2,X_2)$ such that there is an $(a_2,X_2)$-cycle in $\Gamma$ containing an unvisited top node.
\newline

We define the supertree in a smallest-fixpoint style. However, due to inconsequential nondeterminism, the supertree is not strictly the smallest structure satisfying a given property, but rather a minimal one. We will say ``the supertree'' throughout the section to mean ``one fixed supertree''. Definitions by smallest fixpoints are usually written in two parts: first, a list of basic objects that the set must contain (or the root of a tree), then a list of functions under which the set must be closed (or rules for adding children to a node). In our case, it is more convenient to formulate the definition in three parts: we also have a \emph{stop condition}. The rules for adding superchildren only apply at supernodes that don't satisfy the stop condition.
condition for adding superchildren are only required for supernodes that don't meet the stop condition. 


The supertree is a minimal tree $\mathcal T$, made of supernodes, which contains the \emph{basic supernode} and such that \emph{rules for adding superchildren} are satisfied at all the supernodes that don't satisfy the \emph{stop condition}.
\begin{itemize}
    \item \textbf{Basic supernode:} The superroot, which has neutral status
     and contains a single-node partial derivation. The label of the unique node in the partial derivation is $\mathcal S (G)$, and the input is $a$.
    \item \textbf{Stop condition:} We define a set $\Gamma(\mathcal T)$ of partial cycles. Consider all the failed supernodes $S$ of $\mathcal T$ (of partial derivation denoted $\tau$ and current node denoted $N$), such that $N$ has only its input specified, and has a strict ancestor $N'$ of same input and same label. For each such supernode, $\Gamma(\mathcal T)$ contains the subtree of $\tau$ rooted at $N'$. (This subtree is a partial cycle).
    
    We stop at a supernode $S$ if its partial derivation $\tau$ cannot reach an unvisited top node with the help of $\Gamma(\mathcal T)$ and if the current action in $\tau$ is not the first visit of a top node.
    
    \item \textbf{Rules for adding superchildren:} For every neutral supernode $S$, of partial derivation written $\tau$, if $S$ doesn't satisfy the stop condition, then the following hold:
\end{itemize}

\begin{enumerate}
    \item If the current action is the first visit of a node $N$ labelled by the terminal $c$, then
    \begin{itemize}
        \item If $\text{in}(N) + c \ge 0$, then $S$ has a superchild 
        whose partial derivation is a copy of $\tau$, where the output of $N$ is set to $\text{in}(N) + c$. Its status is successful if $\text{in}(N) + c \ge A + CD'$, and neutral otherwise.
        \item Otherwise, $S$ has a superchild with 
        the same partial derivation, but with status ``failed''.
    \end{itemize}
    \item If the current action is the first visit of a node $N$ labelled by the lower nonterminal $V$, then 
    \begin{itemize}
        \item If $G_V$ can cover $A + CD'$ starting from $\text{in}(N)$, then $S$ has a superchild with 
        the same partial derivation, but with successful status.
        \item Otherwise, for every $b \in \N$ such that $\text{in}(N) \xrightarrow{V} b$, $S$ has a superchild $S_b$ with neutral status
        whose partial derivation is a copy of $\tau$, where the output of $N$ is set to $b$.
    \end{itemize}
    \item If the current action is the first visit of a node $N$ labelled by the top nonterminal $X$, then $S$ has a superchild $S_{\alpha_1 \alpha_2}$
      for every rule $X \to \alpha_1 \alpha_2$ of $G$. Its partial derivation is a copy of $\tau$, where $N$ has a left child of same input as $N$, label $\alpha_1$, and unspecified output, and a right child of unspecified input and output, and of label $\alpha_2$. Its status is failed if $N-1$ has a strict ancestor of input $\text{in}(N_1)$ and label $\alpha_1$, and neutral otherwise.
    \item If the current action is the last visit of a left child, then $S$ has a superchild 
    whose partial derivation is a copy of $\tau$, where the input of the sibling of $N$ is set to $\text{out}(N)$. Its status is successful if the sibling of $N$ is labelled by a top variable and has an input greater than $A$, and neutral otherwise.
    \item If the current action is the last visit of a right child
    \begin{itemize}
        \item $S$ has a superchild 
        whose partial derivation is a copy of $\tau$, where the output of the parent of $N$ is set to $\text{out}(N)$. Its status is ``failed'' if the parent of $N$ has same input, label and output as one of its ancestor, and neutral otherwise.
        \item 
        $S$ has a superchild $S_{\mathcal C}$ for each equivalence class $\mathcal C$ of simple $(\text{in}(N),X)$-cycles satisfying the following two conditions:
        \begin{itemize}
            \item There is a failed supernode whose partial derivation's current node is the distinguished leaf of a simple $(\text{in}(N),X)$-cycle of class $\mathcal C$.
            \item For any representative $\gamma$ of $\mathcal C$, inserting $\gamma$ in $\tau$ doesn't introduce a repetition of pairs input-label among the strict ancestors of the current node.
        \end{itemize} 
        The partial derivation $\tau_C$ of $S_C$ is obtained by inserting the cycle $\gamma$ in $\tau$; and 
        $S_C$ has a neutral status.
    \end{itemize}
\end{enumerate}



\subsection{Size of the supertree} \label{subsec:finite_supertree}

In this subsection, we prove that the supertree is finite. This immediately implies that the supertree is computable, because coverability in 1-GVAS \cite{DBLP:conf/icalp/LerouxST15} and reachability in thin GVAS \cite{DBLP:conf/fsttcs/AtigG11} are decidable.
More precisely, we will prove that the size of the supertree is triply exponential.

\begin{lemma} \label{lem:size-supertree}
    The number of supernodes in the supertree is at most triply exponential in $\size(G)$. Moreover, all the partial derivations contained in the supertree are of size at most doubly exponential in $\size(G)$.
\end{lemma}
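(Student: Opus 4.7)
The plan has three parts: bound the depth of every partial derivation occurring in the supertree, bound the length of every superbranch, and bound the branching factor at every supernode; then combine these to get the triply exponential count.

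First I would prove that in any partial derivation $\tau$ appearing in a neutral supernode, every node has depth $O(D)$. The key observation is that all internal nodes are labelled by top nonterminals, and the failure clause of rule~3 together with the insertion condition of rule~5 ensures that, at every moment, the strict ancestors of the current node in $\tau$ have pairwise distinct (input, label) pairs. Since a top left child inherits its parent's input and rule~4 turns any supernode whose top right child has input exceeding $A$ successful (and thus childless), the possible (input, label) pairs along the current branch lie in $(\{a\} \cup [0,A]) \times \topp(G)$, a set of size $O(D)$. A given node $M$ can have its depth modified only when an insertion takes place with $M$ on the current branch, i.e.\ between $M$'s first and last visits; at $M$'s last visit, $M$ is the current node and the no-repetition argument applied there bounds $M$'s depth by $O(D)$, and no subsequent insertion affects $M$. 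Hence every node ends at depth $O(D)$, and $\tau$ has size at most $2^{O(D)}$, which is doubly exponential in $\size(G)$.

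Next I would bound the length of every superbranch. Each superedge either advances the Euler tour by one action (rules~1--4 and the first superchild of rule~5) or inserts a simple partial cycle (the second superchild of rule~5). The first kind contributes at most $2\cdot 2^{O(D)}$ edges along any branch, since the Euler tour of a derivation of size $N$ has $2N$ actions, and each insertion strictly enlarges the partial derivation whose size stays bounded by $2^{O(D)}$, contributing at most $2^{O(D)}$ further edges. Thus each superbranch has length at most doubly exponential in $\size(G)$. Finally I would count the branching factor at each supernode. Rules~1, 4, and the first superchild of rule~5 give at most one child; rule~3 gives $|\Pp(G)|$ children, polynomially many; rule~2 gives at most $A+CD'$ children, since a non-successful supernode has $\reach{G_V}(\text{in}(N)) \subseteq [0, A+CD')$, which is exponential in $\size(G)$. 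Rule~5's second superchild produces one child per equivalence class of simple $(\text{in}(N),X)$-cycles; such a class is determined by the sequence of (input, label, sibling-label) triples along its main branch, whose length is $O(D)$ by simplicity, and each triple has exponentially many choices, yielding at most $(\textrm{exponential})^{O(D)}$, i.e.\ doubly exponential, classes. Multiplying the doubly exponential branching factor by the doubly exponential superbranch length gives the desired triply exponential bound on the total number of supernodes.

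The main obstacle I expect is the depth argument: one must carefully justify that a node's depth grows monotonically over its lifetime and that the no-repetition guarantee for strict ancestors of the current node applies at precisely the right moment (the node's last visit) to bound its final depth, despite an arbitrary number of intervening cycle insertions. Once this is secured, the counting of superbranch length and of the branching factor (where rule~5's second superchild is the only subtle case) is routine.
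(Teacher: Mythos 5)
Your high-level plan (bound depth, then superbranch length, then branching factor) matches the paper, and your branching-factor analysis is essentially the paper's. However, your depth argument has a genuine gap. You claim that ``a given node $M$ can have its depth modified only when an insertion takes place with $M$ on the current branch \dots and no subsequent insertion affects $M$.'' This is false. An insertion in the second superchild of Rule~5 takes place at the last visit of a right child $N$, and it splices the cycle $\gamma$ \emph{between $N$ and its parent}, so that $N$ together with its entire (already fully visited) subtree is pushed down by the length of $\gamma$'s main branch. If $N$ is a strict ancestor of $M$, then $N$'s last visit occurs strictly \emph{after} $M$'s last visit, and that insertion increases $M$'s depth even though $M$ has already left the current branch. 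Because this can happen at every right-child ancestor of $M$ as the Euler tour climbs back towards the root, nodes in completely visited subtrees can accumulate repeated $(\mathrm{in},\mathrm{label})$ pairs along their root path (the no-repetition discipline only protects the strict ancestors of the \emph{current} node at the moment of each insertion). So the final depth of $M$ need not be $O(D)$.

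The paper resolves this not by bounding every node's depth by $D$, but by splitting the nodes of $\tau$ into three categories. Ancestors of the current node and the unvisited right children hanging off them are each bounded by $D+1$ using Claim~\ref{cl:depth-current-node}. The nodes with both input and output specified live in at most $D+1$ subtrees rooted at left children of the current branch, and there the relevant invariant is that no node shares the same $(\mathrm{in},\mathrm{label},\mathrm{out})$ triple with a node above it on its branch — this is exactly what the failure clause in the first point of Rule~5 enforces — giving a depth bound of $A\cdot|\topp(G)|\cdot(A+CD')$, which is exponential but strictly larger than $D$. This still yields a doubly exponential bound on $|\tau|$, and from there the rest of your counting works. (Also a minor slip: the total number of supernodes is the branching factor raised to the power of the superbranch length, not their product; it is this exponentiation that lifts the two doubly-exponential bounds to a triply exponential one.)
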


\begin{proof}

The arity of the supertree is at most doubly exponential in $\size(G)$. Indeed, the supernodes that can have the largest number of superchildren are those who create their superchildren with Rule 5. They have one superchild corresponding to no cycle insertion, and at most one superchild for each equivalence class of simple $(a',X)$-cycle, $(a',X)$ being the input and the label of the current node of their partial derivation. 

There only remains to bound the length of the superbranches and the size of the partial derivations. The same argument will give us both results; it relies on the key observation made in the following claim.

\begin{claim} \label{cl:depth-current-node}
    In each neutral supernode $S$ of the supertree, the partial derivation has no repeated input-label pairs among the ancestors of the current node. Consequently, the depth of the current node is at most $D$.
\end{claim}

\begin{proof}
    We prove the claim by induction of the depth of $S$. The partial derivation of the superroot has only one node, so obviously it has no repeated input-label pairs at all. 

    Now, suppose that the partial derivation $\tau$ of some supernode $S$ in the supertree has no repeated input-label pairs among the strict ancestors of the current node. We show that it is still the case for the partial derivation $\tau'$ of any neutral superchild $S'$ of $S$. If $S'$ is created by Rule 1, 2 or 4, then it has the same current node and current branch as $\tau$, so there is nothing to prove. If $S'$ is created by Rule 3, then the only new node added to the current branch doesn't introduce a repetition, since $S'$ has neutral status. If $S'$ is created by the first point of Rule 5, the current branch only becomes smaller, so no repetition can appear. Finally, if $S'$ is created by the second point of Rule 5, the constraint on cycle insertion explicitly forbids the introduction of a repetition.
\end{proof}

Take a superbranch, and let $(\tau_i)_{i \in I}$ be the sequence of partial derivations along this branch (at this point, $I$ can be of the form $[0,n]$ for some $n \in \N$ or equal to $\N$, since we have not proved that superbranches are finite). We will give a uniform bound on the number of nodes in any of the $\tau_i$ for $i \in I$. Note that $\tau_{i+1}$ is obtained from $\tau_i$ only by adding new nodes or defining previously unspecified counters; nodes are never deleted and specified counters are not changed. Moreover, there is always a change from $\tau_i$ to $\tau_{i+1}$, except when the status of the supernode $i+1$ is failed, in which case it is a superleaf. The length of the superbranch is thus at most three times the bound on the number of nodes in the $(\tau_i)_{i \in I}$, plus one.

We distinguish three kinds of nodes and bound their number separately.
\begin{itemize}
    \item Nodes where only the input value is specified: they are ancestors of the current node, which by Claim~\ref{cl:depth-current-node} is at depth at most $D$, so there are at most $D + 1$ of them.
    \item Nodes where neither the input nor the output is specified: they are right children of an ancestor of the current node, so there are at most $D + 1$ of them.
    \item Nodes where both input and output are defined: they are inside a subtree rooted at a left child of an ancestor of the current node. There are at most $D + 1$ such subtrees. Moreover, if a node has the same input, the same label and the same output as one of its ancestors, then the supernode corresponding to the last visit of this node has failed status (by the first point of Rule 5) and the superbranch ends at this supernode. If we put this supernode aside for simplicity's sake, all the $D + 1$ subtrees have depth at most $A \cdot \lvert \text{Top}(G) \rvert \cdot (A + CD')$, where $A, C$ and $D'$ are all exponential in $\size(G)$. All in all, the number of nodes with both input and output specified is doubly exponential in $\size(G)$.
\end{itemize}
Altogether, the number of nodes in the partial derivation $(\tau_i)_{i \in I}$ is at most doubly exponential, so the length of the superbranch is also at most doubly exponential. This gives a triply exponential bound on the number of supernodes in the supertree.
\end{proof}

As for time complexity, if $s$ is the size of the supertree, then computing it takes at most $s \cdot g(\sum_{V \in \low(G)} \size(H_V) + f(\size(G)))$  where $g$ is the time complexity of the reachability problem for thin 1-GVAS and $f$ is some exponential function.
Indeed, every time we call the thin reachability algorithm, it is on one of the thin 1-GVAS $H_V$ for $V \in \low(G)$ and with an input value at most exponential in $\size(G)$; moreover a new supernode is created at every such call. 
We also have at most one call per supernode to a coverability algorithm, on the same 1-GVAS. The complexity of such a call is also bounded by $g(\sum_{V \in \low(G)} \size(H_V) + f(\size(G)))$, since coverability is a special case of reachability.

\subsection{Proof of Lemma \ref{lem:small-lines} from the supertree} \label{subsec:lemma_from_supertree}

In this subsection, we use the supertree to prove Lemma \ref{lem:small-lines}. The approach diverges significantly depending on the existence of a successful supernode.


\subsubsection{Case where there is a successful supernode}

First, suppose that there is a successful supernode $S$, of partial derivation called $\tau$ and current node called $N$. There are three possibilities:
\begin{enumerate}
\item $N$ is labelled by a top nonterminal $X$ and has an input $a'$ greater than $A$. (This is the case if the supernode $S$ was created by Rule 4.)
\item $N$ has output greater than $A + CD'$. (This is the case if the supernode $S$ was created by Rule 1.)
\item $N$ is labelled by a lower nonterminal $V$ which, from its input, can cover $A + CD'$. (This is the case if the supernode $S$ was created by Rule 2.)
\end{enumerate}
In each case, we will apply Lemma \ref{lem:line-linear-set} to compute a semilinear representation of $R_G(a)$ from some threshold $T$, and then use Proposition \ref{prop:semilin-to-thin}. We start with Case 1. Recall that, by definition of $A$, there is an $X$-cycle $\gamma$ with positive left and global effect, of size exponential in $\size(G)$, whose left part is valid at $A$.
We remove the counters from $\tau$ and insert a simple derivation at every nonterminal leaf to obtain a complete derivation. Then, we insert at node $N$ enough copies of $\gamma$ to make the derivation valid at $a$. Call $\tau'$ the derivation obtained this way. We apply Lemma \ref{lem:line-linear-set} to the derivation $\tau'$ and the occurrence of $\gamma$ rooted at node $N$. We obtain $T$ such that $R_G(a) \cap [T, \infty) = (a + r + d\Z) \cap [T, \infty)$. Moreover, $T$ is exponential in $\size(G)$ and polynomial in the global effect of $\gamma, s_{\text{left}}$ and $s_{\text{right}}$, where $s_{\text{left}} = a' - a$ is the effect of the leaves of $\tau'$ on the left of the subtree rooted at $N$, and $s_{\text{right}}$ is the sum of the absolute values of the leaves on its right. Clearly, $s_{\text{right}}$ and the effect of $\gamma$ are exponential in $\size(G)$. The input $a'$ at $N$ is also exponential in $\size(G)$, because it is copied from an output counter, which was itself bounded by $A + CD'$. Hence, $s_{\text{left}}$ is also exponential in $\size(G)$, and this concludes the proof of Lemma \ref{lem:small-lines} in Case 1.
\newline

Now, we reduce Case 2 to Case 1. The idea is to propagate the large output value of $N$ until the input value of a top node. Recall that the stop condition, at a supernode $S'$ of partial derivation $\tau'$, intuitively says that we can still reach (the first visit of) a top node in the Euler Tour of $\tau'$. We had to formulate it in a particularly lengthy and intricate way (for the supertree to be easily computable), but here we only need the following simpler property, which the true stop condition immediately implies. Let $\mathcal G$ be the graph of a vertex set $[0,A-1] \times \text{Top}(G)$ and such that there is an edge from $(a_1,X_1)$ to $(a_2,X_2)$ if there is a simple $(a_1,X_1)$-cycle $\gamma$ containing, in its main branch, a node of input $a_2$ and of label $X_2$.
\newline

\noindent \textbf{Simple stop condition (at $S'$):} The current action is not the first visit of a top node. Additionally, there is no vertex $(a', X')$ in the graph $\mathcal G$ such that the following two properties hold
\begin{itemize}
  \item $(a', X')$ is reachable in $\mathcal G$ from the input-label pair of some strict ancestor of the current node
  \item there exists a simple $(a',X')$-cycle containing an unvisited top node.
\end{itemize}

This condition didn't hold at the superparent of $S$ (otherwise, $S$ would not exist). This implies that it doesn't hold at $S$ neither, because $S$ has the same partial derivation as its parent, except that the output the current node is now specified. Therefore, there is a path in $\mathcal G$ from the input-label pair $(a_1,X_1)$ of some ancestor of the current node, and to a vertex $(a_2,X_2)$ such that there is a simple $(a_2,X_2)$-cycle containing an unvisited top node. This path can be taken of length at most $A \cdot \lvert \text{Top}(G)\rvert$. Thus, by inserting at most $A \cdot \lvert \text{Top}(G)\rvert$ simple partial cycles in the partial derivation $\tau$ of $S$, we can construct a partial derivation $\tau'$ where the current node has depth at most $D' = D + (A \cdot \lvert \text{Top}(G)\rvert)^2$ and where there is an unvisited top node $M$.

To finish the reduction, we continue the Euler Tour of $\tau'$ from its current node until the first visit of $M$, but with new rules: we propagate counter values and, every time we encounter a leaf labelled by a nonterminal, we insert a derivation of size exponential in $\size(G)$, of effect bounded by $C$ in absolute value, and valid at $C$. (Such derivations exist by definition of $C$.) This gives us a partial derivation $\tau''$ as in Case 1. The input of the current node $M$ is at most $b$ plus some exponential function of $\size(G)$. Thus, for complexity bounds, we only need to ensure that $b$ is not too large. It is indeed the case: the value $b$ is of the form $a + c$, where $c$ is a terminal of $G$ and $a$ is the input of a leaf labelled by $c$ in the partial derivation of the supernode $S$. In turn, $a$ was copied from the input of a parent node, which is labelled by a top nonterminal, and therefore is bounded by $A$ (otherwise, the grandparent of $S$ would be a successful supernode, and $S$ would not exist).
\newline


Finally, there is an easy reduction from Case 3 to Case 2: apply the thin reachability algorithm to see whether, from input of the current node $N$, in the thin 1-GVAS $H_V$, we can reach $A + CD'$, or $A + CD' + 1$, or $A + CD' + 2$, ... until you find a reachable value $b$. You will eventually find one, because $A + CD'$ is coverable, but it is difficult to control its size. We will adopt two very different strategies to avoid this issue, depending on whether $G$ has a negative cycle or not.

\paragraph*{If $G$ has no negative cycle:}

We do the reduction as described above. 
The following lemma ensures that $b$ is of size exponential in $\size(G)$.

\begin{lemma} \label{lem:1GVAS-without-negative-cycle}
    Let $G$ be a 1-GVAS without negative cycles. Suppose that $G$ can cover the value $B \in \N$ from an input value $a < B$. Then, there is $b \in \N$ of size at most polynomial in $B$ and exponential in $\size(G)$ such that $a \xrightarrow{G} b$ and $B \le b$.
\end{lemma}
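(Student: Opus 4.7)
The plan is to show that $b$ can be chosen to exceed $B$ by at most some quantity exponential in $\size(G)$; this, together with $a < B$, yields the required size bound.

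Fix a valid derivation $\tau$ from $a$ with $b = a + E(\tau) \ge B$ that first minimizes $E(\tau)$ and then breaks ties by total size. By the Parikh-style decomposition used in the proof of Claim~\ref{cl:irreducible-size}, $\tau$ is obtained by pasting simple cycles $\sigma_1, \ldots, \sigma_m$ into an irreducible skeleton $\tau_0$ of exponential size. Let $K$ bound the sum of absolute values of the terminals in $\tau_0$, and let $M$ uniformly bound $|E(\sigma_i)|$, $|L(\sigma_i)|$ and $|R(\sigma_i)|$; both $K$ and $M$ are exponential in $\size(G)$. The no-negative-cycle hypothesis gives $E(\sigma_i) \ge 0$ for every $i$.

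The core of the argument is a dichotomy for each positive-effect simple cycle $\sigma_i$. The derivation $\tau \setminus \sigma_i$, obtained by splicing out $\sigma_i$ (replacing its outer $X$-subtree by its inner $X$-subtree), has effect $E(\tau) - E(\sigma_i)$. If $E(\sigma_i) \le b - B$, then $\tau \setminus \sigma_i$ still covers $B$, so by minimality of $\tau$ it must be invalid at $a$. Since removing $\sigma_i$ shifts all partial sums past its outer $X$-subtree downward by $E(\sigma_i)$, invalidity forces the existence of a leaf past $\sigma_i$ in the Euler order whose counter in $\tau$ is strictly below $E(\sigma_i) \le M$; call it a low leaf of $\sigma_i$. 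Consequently, either some $\sigma_i$ satisfies $E(\sigma_i) > b - B$, directly giving $b < B + M$, or every positive-effect $\sigma_i$ is followed by a low leaf.

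In the second case, pick the positive-effect cycle $\sigma^\star$ whose outer $X$-subtree ends latest in the Euler order, and let $\ell^\star$ be a corresponding low leaf with counter $c^\star < M$ in $\tau$. The choice of $\sigma^\star$ guarantees that every cycle of $\tau$ whose leaves appear past $\ell^\star$ is 0-effect: cycles lying entirely past $\ell^\star$ would otherwise beat $\sigma^\star$'s Euler end-time, and ancestor cycles of $\ell^\star$ extending past $\sigma^\star$ would similarly end strictly later. The suffix effect $b - c^\star$ then splits into contributions from leaves of $\tau_0$ in the suffix (bounded in absolute value by $K$), 0-effect simple cycles lying entirely in the suffix (contributing $0$), and right-parts of 0-effect ancestor cycles at $\ell^\star$ (each of absolute value at most $M$). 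Using the minimum-size tiebreak, distinct ancestor cycles at $\ell^\star$ are witnessed by distinct necessity leaves along the current path, and an analogue of the path-length bound in the proof of Claim~\ref{cl:irreducible-size} gives an exponential bound on their number, so their total contribution is bounded by some $K'$ exponential in $\size(G)$. Hence $b \le c^\star + K + K' < M + K + K'$, and combined with the first case we obtain $b - B \le M + K + K'$, as required. The main technical obstacle is this last step, namely controlling the nesting depth of 0-effect ancestor cycles at $\ell^\star$: without the minimum-size tiebreak, arbitrarily many effectless nested cycles could accumulate at a single point, and the plan relies on the tiebreak to assign each such cycle a distinct validity witness along the exponentially bounded path.
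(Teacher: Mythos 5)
Your approach is genuinely different from the paper's: the paper proceeds by induction on the depth of the dag of components, bounding counter values on a prefix $\pref(\theta)$ of the derivation via a ``no-jump'' property and a case analysis on the left/right/global effects of a cycle between two occurrences of a top nonterminal; its resulting bound on $b$ is \emph{linear} in $B$ (with an exponential coefficient). You instead pick an effect-then-size minimal derivation $\tau$ covering $B$, use the Parikh-style decomposition into an irreducible skeleton and simple cycles, and run a dichotomy per cycle, aiming for the much stronger bound $b - B \le M + K + K'$ that is independent of $B$ altogether. This is a plausible framework, but as written it has a real gap.

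The gap is in the step ``every cycle of $\tau$ whose leaves appear past $\ell^\star$ is 0-effect.'' Removing a simple cycle $\sigma_i$ with left effect $L(\sigma_i)$ and global effect $E(\sigma_i)$ shifts the partial sums of the leaves \emph{inside} the inner $X$-subtree by $-L(\sigma_i)$, and only the partial sums \emph{after} the whole outer subtree by $-E(\sigma_i)$. Consequently, when $L(\sigma_i) > 0$, invalidity after removal may occur strictly inside $\sigma_i$'s inner subtree, so the ``low leaf'' you obtain need not lie in the suffix at all. If the low leaf $\ell^\star$ of your extremal cycle $\sigma^\star$ lies inside $\sigma^\star$'s inner subtree, then $\sigma^\star$ itself is a positive-effect ancestor cycle of $\ell^\star$ whose right part lies past $\ell^\star$, directly contradicting the claim; moreover, positive-effect cycles nested inside $\sigma^\star$ and past $\ell^\star$ do not ``beat $\sigma^\star$'s Euler end-time'' and are also unconstrained by your choice of $\sigma^\star$. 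Thus the suffix from $\ell^\star$ may receive contributions from positive-effect cycles, which your accounting (``skeleton $\le K$, 0-effect cycles contribute $0$, ancestor right parts $\le K'$'') leaves out. Separately, the bound $K'$ on the contributions of 0-effect ancestor cycles is asserted via a ``distinct necessity leaves'' argument and an analogy to Claim~\ref{cl:irreducible-size}, but that claim bounds nonterminal repetition on a path, not the number of cycles needed for validity; those are different phenomena and the analogy is not substantiated. (A helpful observation you could add: a 0-effect cycle with $L \le 0$ can always be removed from a size-minimal $\tau$ without breaking validity, so all surviving 0-effect ancestor cycles of $\ell^\star$ have $L > 0$ and hence $R < 0$, contributing nonpositively to the suffix---but this does not help with the positive-effect ancestors that Gap~1 leaves in play.) As a result, the claimed bound $b - B \le M + K + K'$ (independent of $B$) is not established; note the paper itself only obtains, and only needs, a bound of the form $b \le \exp(\size(G)) \cdot B + \exp(\size(G))$.
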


The proof of Lemma~\ref{lem:1GVAS-without-negative-cycle} is technical and independent from the main reasoning, so we defer it to Appendix~\ref{sec:app2}.

\paragraph*{If $G$ has a negative cycle:}
In this case, we can compute a small value $T$ such that $R_G(a) \cap [T, \infty) = (a + r + d\Z) \cap [T, \infty)$, independently of the values $b \ge A + CD'$ that the thin 1-GVAS $H_V$ can reach from input $a'$. The idea is that, even if those values $b$ are very large, we can insert negative cycles later to obtain derivations of small size.

We know that there is $b \ge A + CD'$ such that $G_V$ can reach $b$ from the input of $N$ (but we don't compute it.) If we apply the reduction from Case 2 to Case 1 to a copy of $\tau$ where the output of $N$ is set to $b$, then we obtain a partial derivation $\tau'$, produced by $G$, whose current node $M$ has input $a' \ge A$.
Then, we can do what we did in Case 1 (insert at every nonterminal leaf of $\tau'$ a derivation of effect at most $C$ in absolute value, and valid at $C$; insert at node $M$ a cycle $\gamma$ with positive left and global effect, of size exponential in $\size(G)$, whose left part is valid at $A$; then insert more copies of $\gamma$ at $M$ until the complete derivation is valid from input $a$) to obtain a complete derivation $\theta$.
Note that we only proved the existence of $\theta$, we didn't compute it. From its existence, we deduce the existence of a similar complete derivation $\theta'$, but with $s_{\text{left}}$ exponential in the size of $G$. More precisely, $\theta'$ is obtained from $\theta$ by inserting an $X$-cycle $\gamma_{X,-}$ of negative effect at $M$, which we describe now.

The cycle $\gamma_{X,-}$ will be constructed by taking derivations $D: X \Rightarrow u_1 X u_2 X u_3$ and $D': X \Rightarrow v$ with $u_1,u_2,u_3,v \in \Z^*$ of size exponential in $\size(G)$, inserting $D'$ on the left in $D$, and in between many copies of a cycle $\gamma_{-}$ of negative effect, which remain to choose. The goal is that the left part of the cycle $\gamma_{X,-}$ decreases the input of $M$ from something close to $b$, which is arbitrary large, to something slightly larger than $A$. (``Close to'' and ``slightly larger'' here mains ``within a distance exponential in $\size(G)$''.) For that, we need the effect of $\gamma_{-}$ to be exponential in $\size(G)$, so that we can adjust precisely the effect of $\gamma_{X,-}$ by inserting the appropriate number of copies of $\gamma_{-}$. We also need the nonterminals of $\gamma_{-}$ to be split in an appropriate way between its left and right part. Imagine for instance that we had a $\gamma_{-}$ of the form 
$V: \Rightarrow -10 V 9$ for some nonterminal $V$. In that case, $\gamma_{X,-}$ could not decrease the input value at $M$ by more than (approximately) 10\% without making the counters fall below zero. We will thus choose $\gamma_{-}$ with a negative global effect, but a nonnegative left effect. In the next paragraph we show that such a cycle always exists in our situation.

By hypothesis, $G$ has negative cycles. It also has negative simple cycles, as all cycles can be decomposed into simple cycles. Let $\gamma_{\text{simple},-}: V \Rightarrow u V v$ be a negative simple cycle ($V$ can be a top or a lower nonterminal, $u, v \in \Z^*$, and $\sum u + v < 0$).
We reuse the derivations $D: X \Rightarrow u_1 X u_2 X u_3$. We also need new derivations $D_1: X \Rightarrow v_1 V v_2$ and $D_2: V \Rightarrow w$ with $v_1,v_2,w \in \Z^*$ of size exponential in $\size(G)$.
(We don't really need them if $V = X$, but they don't hurt, so we don't treat the case $V=X$ separately.)
This time, we first insert a copy of $D$ into another copy of $D$, so that we obtain a derivation with three $X$ leaves. In the rightmost one, insert the block formed by $D_2$ inserted into $D_1$, with $n$ copies of $\gamma_{\text{simple},-}$ in between. Insert a similar block in the leftmost $X$-leaf, but replacing the copies of $\gamma_{\text{simple},-}$ by $m$ copies of a cycle of positive left effect, positive global effect, and size exponential in $\size(G)$ (such a cycle exist, by condition 3 of Proposition \ref{prop:branching-top-component}). For $m$ and $n$, choose the least values making the left part of $\gamma_{-}$ nonnegative, and the right part negative.




\subsubsection{Case where there is no successful supernode} \label{subsubsec:no-success}

In this paragraph, we suppose that the supertree has no successful supernode. Recall that our goal is to use the supertree to prove Lemma~\ref{lem:small-lines}, namely to find a thin 1-GVAS $H$ such that $H \under_{\{a\} \times [T,\infty)} G$ for some $T \in \N$. Actually, in this case, we will have $T=0$. We have at our disposal a thin 1-GVAS $H_V$ equivalent to $G_V$ for every lower nonterminal $V$. To simplify the expressions, we suppose that the initial nonterminal of $G_V$ is also $V$, and that the sets of nonterminals of the $H_V$ are pairwise disjoint.

The thin 1-GVAS $H$ is obtained by adding, with the help of the superleaves, a thin component on top of the $H_V$. To provide some intuition before the formal definition of $G$, we start with a simplified but incorrect version, and we modify it progressively towards the correct definition. In the following, the symbol $\uplus$ is used rather than $\cup$ to stress that we take the union of disjoint sets.
\begin{itemize}
    \item A natural idea would be to merge all the $H_V$, and add a new initial nonterminal $\mathcal S(H)$ with a rule $\mathcal S(H) \to \alpha$ for every $\alpha$ which is the yield of the partial derivation of some neutral superleaf. We can do this because neutral superleaves are supernodes where the stop condition apply, so they only contain lower nonterminals. Clearly, with this definition, $H$ is thin. Moreover, partial derivations contained in some supernode of the supertree are derivations produced by $G$, so $H$ under-approximates $G$.
    \item However, with the above definition, we don't have $R_H(a) \subseteq R_G(a)$, even from some threshold. Indeed, superleaves capture all derivations with counters produced by $G$ from input $a$, but only if we take into account the cycles of left effect zero that could have been inserted if we had finished the Euler Tour. A solution is to add, for every $a' \in [0,A-1]$ and every $X \in \topp(G)$ a new nonterminal $V_{a',X}$ allowing to insert the right part of $(a',X)$-cycles. For a simple $(a_0,X_0)$-cycle as represented on Figure~\ref{fig:def-H}, we would have a rule $V_{a_0,X_0} \to \beta_1 \, V_{a_1,X_1} \, V_{a_2,X_2}  \, \beta_3  \, V_{a_3,X_3}  \, V_{a_4,X_4}$. The initial rules associated to a superleaf $S$ would be $\mathcal S(H) \to \beta_1  \, V_{a_1,X_1}  \, \beta_2  \,  V_{a_2,X_2} \cdots$ where $a_1, X_1$ are the input and label of the parent of the current node, and $\beta_1$ is the label of its right child if this right child is out of the current branch, and $0$ otherwise; $a_2, X_2$ and $\beta_2$ are defined similarly for the grandparent of the current node.
    \item Unfortunately, the nonterminals $V_{a',X}$ are not necessarily thin. To enforce thinness, we insert cycles following the same discipline as in the definition of the supertree. Namely, we have a nonterminal $V_{a',X,F}$ for every $a' \in [0,A-1]$, every $X \in \topp(G)$ and every $F \subseteq [0,A-1] \times \topp(G)$. $F$ is the set of \emph{forbidden nonterminals}; $V_{a',X,F}$ insert $(a,X)$-cycles that don't contain them and that also enforce the discipline. The example on Figure~\ref{fig:def-H} becomes $V_{a_0,X_0,F} \to \beta_1 V_{a_1,X_1,F\uplus \{(a_2,X_2), (a_3,X_3), (a_4,X_4)\}} V_{a_2,X_2,F \uplus \{(a_3,X_3),(a_4,X_4)\}} \beta_3 V_{a_3,X_3,F \uplus \{(a_4,X_4)\}} V_{a_4,X_4,F}$.
\end{itemize}

\begin{figure}[H]
    \centering
    \begin{tikzpicture}[scale = 0.6]
    \definecolor{mycolor}{rgb}{0,0.7,0}
            \node  (0) at (0, 5) {$N_4 : (a_4, X_4, \bot)$};
            \node  (1) at (-2.5, 3.25) {};
            \node  (2) at (2.25, 2.75) {$N_3 : (a_3, X_3,\bot)$};
            \node  (3) at (0, 0.5) {$N_2 : (a_2, X_2, \bot)$};
            \node  (4) at (4.5, 0.5) {$\beta_3$};
            \node  (5) at (-2.5, -1.25) {};
            \node  (6) at (2.25, -1.25) {$N_1 : (a_1, X_1, \bot)$};
            \node  (7) at (-0.25, -3.5) {$N_0 : (a_0, X_0, b_0)$};
            \node  (8) at (4.75, -3.5) {$\beta_1$};
            \node  (9) at (-3.75, 1.75) {};
            \node  (10) at (-1.5, 1.75) {};
            \node  (11) at (-3.5, -2.25) {};
            \node  (12) at (-1.75, -2.25) {};
            \draw (0) to (1.center);
            \draw (0) to (2);
            \draw (2) to (3);
            \draw (2) to (4);
            \draw (3) to (6);
            \draw (3) to (5.center);
            \draw (7) to (6);
            \draw (6) to (8);
\fill[mycolor] (1.center) -- (9.center) -- (10.center) -- cycle;

\fill[mycolor] (5.center) -- (11.center) -- (12.center) -- cycle;
    \end{tikzpicture}
    \caption{This represents a partial derivation of current node $N_0$ (and an $(a_0, X_0)$-cycle if $(a_0,X_0) = (a_4,X_4)$). The notations represented there are the same as in the definitions. The green triangles represent subtrees where input and output are specified everywhere. What only matters about those trees is their yields.}
    \label{fig:def-H}
\end{figure}
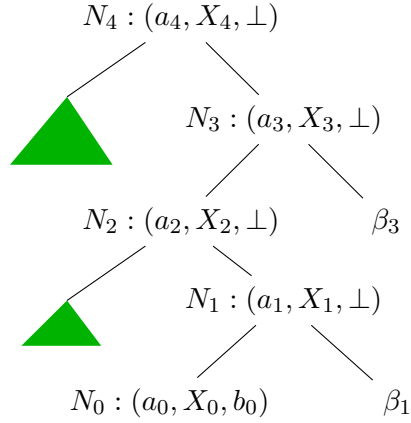

Now we define $H$ more formally.

\begin{itemize}
    \item The nonterminals of $H$ are the nonterminals of all the $H_V$, $V \in \low(G)$, a fresh initial nonterminal $\mathcal S(H)$, and a nonterminal $V_{a',X,F}$ for every $a' \in [0,A-1]$, every $X \in \topp(G)$ and every $F \subseteq [0,A-1] \times \topp(G)$.
    \item $H$ has all the production rules of the $H_V$.
    \item We say that a set $F \subseteq [0,A-1] \times \topp(G)$ and an equivalence class $\mathcal C$ of simple partial cycles are \emph{compatible} if the main branch of any representative of $\mathcal C$ doesn't have an input-label pair in $F$.
    For every $a_0 \in [0,A-1]$, every $X_0 \in \topp(G)$ and every $F \subseteq [0,A-1] \times \topp(G)$, for every equivalence class $\mathcal C$ compatible with $F$, 
    there is a corresponding rule. To define it, let $N_0,N_1,...,N_k$ be the sequence of nodes on the main branch of any representative of $\mathcal C$, read from bottom to top. For $i \in [1,k]$, let $\beta_i$ be the  label of the right child of $N_i$ if it is out of the main branch, and $0$ otherwise. We also write $a_i$ for the input of $N_i$ and $X_i$ for its label, this time for $i \in [0,k]$ (so that $a_0,X_0 = a_k, X_k)$.
For readability, define for a moment $Y_i := V_{a_i,X_i,F \uplus F_i}$ for all $i \in [1,k]$.
    The corresponding rule is $V_{a_0,X_0,F} \to \beta_1 Y_1 \beta_2 Y_2 \dots \beta_k Y_k$
    where, for all $i \in [1,k]$, $F_i := \{(a_j,X_j) \mid i < j \le k \}$.
    \item For every neutral superleaf $S$ of partial derivation $\tau$, there is a corresponding initial rule. We use the same notations $k$, $a_i, X_i$ and $\beta_i$ as above. 
    Decompose the yield of $\tau$ into $\alpha \beta_1 ... \beta_k$ (hence $\alpha$ is the word formed by the labels of the leaves of $\tau$ whose input and output is specified). For readability, define for a moment $Z_i := V_{a_i,X_i,F_i}$ for all $i \in [1,k]$.
    The initial production rule associated to $S$ is $\mathcal S(H) \to \alpha \beta_1 Z_1 \beta_2 Z_2 \dots \beta_k Z_k$ with the same $F_i$ as above.
\end{itemize}

It is clear that $H$ is thin. Moreover, the size of what is added to the $H_V$ to obtain $H$ is triply exponential in $\size(G)$. Indeed
\begin{itemize}
    \item The number of new nonterminals is at most doubly exponential (because $A$ is at most exponential in $\size(G)$.)
    \item Every $V_{a',X,F}$ is the left-hand side of doubly-exponentially many rules, each of size at most exponential.
    \item Similarly, there are at most triply-exponentially many initial rules (one for each superleaf), each of size at most doubly exponential.
    This is because the superbranches have at most doubly exponential depth, so the partial derivations contained in superleaves have at most doubly-exponentially many leaves.
\end{itemize}

\begin{lemma} \label{lem:correction-mainproof}
    $H \under_{\{a\} \times \N} G$
\end{lemma}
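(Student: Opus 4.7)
The lemma requires two things: (i) soundness $\reach{H} \subseteq \reach{G}$ and (ii) completeness at input $a$, i.e.\ $\reach{G}(a) \subseteq \reach{H}(a)$; together with (i), completeness gives the required equality $\reach{H} \cap (\{a\} \times \N) = \reach{G} \cap (\{a\} \times \N)$.

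For soundness, the plan is to induct on derivations in $H$, using the following semantic interpretation for the new nonterminals. Every nonterminal inherited from some $H_V$ behaves exactly as $V$ in $G$ by the hypothesis that $H_V$ is equivalent to $G_V$. For the fresh nonterminals, the invariant to maintain is: every derivation $V_{a', X, F} \Rightarrow_H^* w$ has $w$ equal to the yield of the right part of some $(a', X)$-cycle of $G$ whose nested inserted sub-cycles do not introduce input-label pairs in $F$ along their main branches. Each rule of $V_{a_0, X_0, F}$ is attached to an equivalence class $\mathcal{C}$ of simple $(a_0, X_0)$-cycles: the class representative supplies the backbone $\beta_1 \dots \beta_k$, and each $Y_i = V_{a_i, X_i, F \uplus F_i}$ encodes the recursive insertion of further cycles at the main-branch node $N_i$, with $F_i$ forbidding exactly the input-label pairs above $N_i$ so that the discipline propagates correctly into the recursion. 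The initial rule $\mathcal{S}(H) \to \alpha \beta_1 Z_1 \dots \beta_k Z_k$ attached to a neutral superleaf $S$ arises from its partial derivation $\tau$, which is by construction a valid partial derivation of $G$ from input $a$; assembling $\tau$ with the cycles generated from the $Z_i$'s yields a complete valid $G$-derivation with the same yield, which establishes soundness at the initial nonterminal.

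For completeness, given any complete $\mathcal{S}(G)$-derivation $\theta$ from input $a$ to output $b$, the plan is to simulate its Euler Tour through the supertree while extracting every zero-left-effect cycle the tour would close. Concretely, whenever the tour is about to enter a failed supernode (because an input-label pair on the current branch would repeat), the corresponding cycle is \emph{extracted} from $\theta$ and set aside, while the simulation in the supertree continues as if that cycle had not been present in $\theta$. Since by assumption there is no successful supernode, this extraction process must terminate at some neutral superleaf $S$: the initial rule attached to $S$ accounts for the backbone $\alpha \beta_1 \dots \beta_k$ of the compressed tour, and the extracted cycles are generated inductively by the $V_{a_i, X_i, F \uplus F_i}$ nonterminals $Z_i$ occurring in that rule, by applying the same extraction procedure recursively to each extracted cycle.

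The main technical obstacle is verifying that the extract/re-insert bijection is faithful to the forbidden-set discipline. One must show that every cycle extracted at a main-branch node $N_i$ belongs to an equivalence class compatible with the accumulated forbidden set $F \uplus F_i$: its own main branch cannot introduce an input-label pair already present above $N_i$, because otherwise the simulation would have extracted that outer cycle earlier. This matches exactly the compatibility condition between cycle equivalence classes and sets $F$ in the definition of the rules for $V_{a_i, X_i, F \uplus F_i}$, so once this bookkeeping is set up both inductions close, and the $H$-derivation produced has yield equal to that of $\theta$, witnessing $(a, b) \in \reach{H}$.
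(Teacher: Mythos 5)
Your proposal takes essentially the same route as the paper: soundness is established via a semantic invariant on the new nonterminals $V_{a',X,F}$ (the paper actually states this direction more tersely, so your elaboration is useful), and completeness via a simultaneous Euler Tour of the given $G$-derivation $\theta$ and a walk through the supertree, where left-zero-effect cycles are extracted on the way down and accounted for later through the $V_{a',X,F}$ productions. Your observation that the forbidden-set discipline holds "because otherwise the simulation would have extracted that outer cycle earlier" is the same idea the paper packages as: the extracted cycles behave like a pushdown, so when a cycle is reinserted at the last visit of a node $M$, the branch from the root to $M$ coincides with the branch at the first visit of $M$, and hence contains no repetition of input-label pairs.

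One claim in your completeness sketch needs a different justification. You write "Since by assumption there is no successful supernode, this extraction process must terminate at some neutral superleaf." The absence of successful supernodes only tells you the walk can never land on a successful supernode, and the extraction rule steers around failed ones; it does not by itself give termination. The concern is that the double traversal sometimes moves in the supertree without advancing the Euler Tour of $\theta$ — precisely when a cycle is extracted (jumping back to the supernode associated with the first visit of the repeated ancestor) and when a cycle is reinserted — so a priori the walk could oscillate forever. The paper closes this by noting that those two stalling moves cannot occur twice in a row, so the finite Euler Tour of $\theta$ must eventually advance and therefore the process halts. Your plan should include this termination measure (or an equivalent one); without it the argument is incomplete.
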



\begin{proof}

    All the partial derivations contained in some supernodes are, if their counters are removed, a derivation produced by $G$. Therefore, the inclusion $R_{H} \subseteq R_G$ is immediate. The other inclusion, $R_{G}\vert_{\{a\} \times \N} \subseteq R_{H}$, is harder to establish. We prove it by describing an algorithm.
    It takes as input a derivation with counters $\theta$ which is produced by $G$, has input $a$ and has no cycle of left and right effects both equal to zero. We call $b$ its output. The algorithm returns a superleaf $S$ whose associated production rule in $H$ is of the form $\mathcal S (H) \to \alpha$, such that there is $w \in \Z^*$ which is a valid run from $a$ to $b$ and with $\alpha \Rightarrow w$.
    
    The algorithm performs simultaneously an Euler Tour in $\theta$ and a walk through the supertree. Thus, at each step, we have a current action in $\theta$ and a current supernode. At some points in the double traversal, instead of moving to the next action in the Euler Tour of $\theta$, we will extract cycles from $\theta$ for later reinsertion. In general, removing a cycle from a derivation with counters breaks the flow conditions, but here we only remove cycles with left effect zero and reinsert them at the last visit of the root. Consequently, the Euler Tour will never see the temporary breaches of flow conditions.

    We will maintain the following invariant between the current node $M$ of the Euler Tour in $\theta$, and the current supernode $S$, of partial derivation $\tau$ and current node $N$:
    \begin{itemize}
        \item $M$ and $N$ have the same input. Moreover, the current action in the Euler Tour of $\tau$ is the last visit of $M$ if and only if the current action of $\tau$ is the last visit of $N$. In this case $M$ and $N$ have the same output.
        \item The current branches of $\theta$ and $\tau$ induce the same sequence of inputs, labels and of (potential) labels of right children out of the current branch.
        \item The status of $S$ is neutral.
    \end{itemize}
    
    The algorithm initiates the traversal at the first visit of the root of $\theta$ and at the superroot. It terminates when a superleaf is reached and returns this superleaf. 

    We describe below which moves are allowed in the double traversal.
    This description follows naturally the rules that define the supertree. To state them easily, we call $M$ and $\mbox{Act}$ the current node and current action in the Euler Tour of $\theta$, $S$ the current supernode, $\tau$ its partial derivation, $N$ the current node of $\tau$, and $a$ the input value common to both $M$ and $N$ (by the invariant). 
    \begin{enumerate}
    \item If $\mbox{Act} = \text{first}(M)$ (which implies with the invariant that the current action in $\tau$ is $\text{first}(N)$), and if $M$ and $N$ are labeled by a terminal $c$, we move to the unique superchild of $S$, which has neutral status because $a + c$ is the output value of $M$ and therefore nonnegative. We also move to the next action in the Euler Tour.
    \item If $\mbox{Act} = \text{first}(M)$, and if $M$ is labeled by a lower nonterminal $V$, then we know that the output of $M$ is bounded by $A + CD'$, because we assume that there is no successful supernode. We choose as next supernode the superchild of $S$ that sets the output of $N$ to the same output value as $M$. We also move to the next action in the Euler Tour.
    \item If $\mbox{Act} = \text{first}(M)$, and if $M$ is labeled by a top nonterminal $X$
    \begin{itemize}
        \item If $M$ has an ancestor $M'$ with the same input and the same label, then we store a copy of the cycle composed of the descendants of $M'$ that are not strict descendants of $M$. 
        Moreover, we remove this cycle from $\tau$, which means that all the descendants of $M'$ which are not descendants of $M$ are suppressed, and that $M$ becomes a child of the former parent of $M'$. We don't advance in the Euler Tour, but we go back to the last supernode associated to a first visit of $M'$.
        \item Otherwise, we advance in the Euler Tour and we move to the superchild corresponding to the production rule applied to $M$ in $\tau$.
    \end{itemize}
    \item If $\mbox{Act} = \text{last}(M)$ and if $M$ is a left child, we move to the next action in the Euler Tour and to the unique superchild of $S$. (Note that the second point of the invariant implies that $N$ is also a left child.)
    \item If $\mbox{Act} = \text{last}(M)$ and if $M$ is a right child (which implies with the invariant that $N$ is also a right child)
    \begin{itemize}
        \item If there is no stored cycle with special leaf $M$, then we move to the next action in the Euler Tour and to the superchild of $S$ corresponding to no cycle insertion.
        \item Otherwise, retrieve the cycle $\gamma$ with distinguished leaf $M$ in the storage (there can be at most one such cycle due to how they are extracted in Rule 3) and reinsert it in $\tau$, above $M$ (that is: merge the two nodes $M$, then remove the edge between $M$ and its parent $M'$ in $\tau$. The new child of $M'$ is the root of $\gamma$.) We don't advance in the Euler Tour, but we move to the superchild $S_{\mathcal C}$, where $\mathcal C$ would be the equivalence class of $\gamma$ if we turned it into an $(a,X)$-cycle (\emph{i.e.}, if we cut the trees on the right of the main branch and made the counter values after the input of the current node undefined). We prove the existence of $S_{\mathcal C}$.
    \end{itemize}
    \end{enumerate}
    We prove that $S_{\mathcal C}$ always exists in the second point of Rule 5 above, namely, that our constraints on cycles insertion allow for this insertion. Observe that the cycles we extract could actually have been stored on a pushdown. Indeed, if a cycle is extracted above node $M$ at its first visit, then it is inserted again at its last visit. Between those two moments, a complete Euler Tour of the subtree rooted at $M$ is performed, so all the cycles that have been extracted after the cycle above $M$ are reinserted before the last visit of $M$. Therefore, when the cycle above $M$ is reinserted at the last visit of $M$, the branch from the root to $M$ is the same as it were at the first visit of $M$. It can then be shown by induction that the branch from the root to the parent of $M$ contains no repetition of pairs input-label.

    These five rules cover all cases except for when $\text{Act}$ is the last visit of the root. However, the invariant ensures that upon reaching the last visit of the root in the Euler tour, the corresponding supernode is a superleaf; in this case, we stop and return this superleaf. Consequently, the double traversal is never blocked. Furthermore, it terminates because, at every move of the double traversal, the Euler Tour of $\theta$ either advances or stays where it is. It doesn't advance in the first point of Rule 3 and in the last point of Rule 5, which cannot happen twice in a row.

    There only remains to prove that the production rule in $H$ associated to the returned superleaf $S$ is of the form $\mathcal S (H) \to \alpha$, such that there is $w \in \Z^*$ which is a valid run from $a$ to $b$ and with $\alpha \Rightarrow w$. Let $\theta'$ be the derivation with counters and $M$ be the current node of its Euler Tour at the end of the algorithm. Note that $\theta'$ is in general different than the $\theta$ we take as input because some cycles could have been not reinserted yet. These cycles are those for which we haven't done the last visit of the distinguished leaf, as well as those whose distinguished leaf doesn't appear in $\theta'$ (but in some other extracted cycle).

    The rule corresponding to $S$ in $H$ is $\mathcal S(H) \to \alpha \beta_1 V_{\alpha_1, X_1, F_1} \beta_2 V_{a_2,X_2, F_2}\dots V_{a_k,X_k,F_k}$ where $\alpha$ is the word formed by the labels of the leaves of $\tau$ whose input and output are both specified (and the $a_i$, $X_i$ and $F_i$ are the same notations as in the definition of $H$.)
    By the invariant, $\alpha \Rightarrow v$ where $v$ is a valid run from $a$ to $\text{in}(N) = \text{in}(M)$. The invariant also implies that $\theta'$ and $\tau$ have the same current branch and the same right children of the nodes of the superbranch. Therefore, it is not difficult to see that $\beta_1 V_{\alpha_1, X_1, F_1} \beta_2 V_{a_2,X_2, F_2}\dots V_{a_k,X_k,F_k} \Rightarrow w''$ where $w''$ is a valid run from $\text{in}(M)$ to $b$ (just simulate with the rules of $H$ the end of the Euler Tour in $\theta'$; there are only cycle reinsertions and no further cycle extractions, as there are no more first visits of nodes labelled by top nonterminals.)

    \end{proof}


\section{Proof of Lemma~\ref{lem:bounded-area}}\label{sec:bounded-area}

Let $G$ be a top-branching 1-GVAS and $B \in \N$. We suppose that we have 
\begin{itemize}
\item  for each $V \in \low(G)$, a thin 1-GVAS $H_V$ of same reachability relation as $G_V$
\item  for each $X \in \topp(G)$ a thin 1-GVAS $H_X$ which $S$-exactly
under-approximates $G_X$ for $S = \N^2 \setminus [0,B]^2$.
\end{itemize}
We want to compute a thin 1-GVAS $H$ equivalent to $G$.
Similarly as in Section~\ref{subsubsec:no-success}, we suppose that for every $W \in \mathcal N(G)$, the initial nonterminal of $H_W$ is also called $W$. Apart from that, $H_W$ has no common nonterminal with $G$. Moreover, the sets of nonterminals of the 1-GVAS $H_W$ are pairwise distinct. This will considerably simplify the expressions below.





Our strategy is the following: consider a derivation with counters witnessing that $a \xrightarrow{G} b$ for some $a,b \in \N$. Suppose that it contains no cycle of left and right effects equal to zero. Since there is no repetition of input, output and label, every branch contains before depth $(B+1)^2 \lvert \text{Top}(G)\rvert$ a node labelled by a terminal, a lower nonterminal, or a top nonterminal with an input or an output exceeding $B$. For every $V$-derivation with counters produced by $G$, where $V \in \low(G)$, there is a derivation with counters produced by $H_V$ with the same input and output (because $H_V$ and $G_V$ have the same reachability relation). Similarly, for every $X$-derivation with counters produced by $G$, where $X \in \topp(G)$, if its input or output counter exceeds $B$, then there is a derivation with counters produced by $H_X$ with same input and output. 
Therefore, top nonterminals of $G$ are only required in a prefix of bounded depth; below, they can be simulated by their thin under-approximations. The thin 1-GVAS $H$ that we construct guesses this prefix in one rule, so that it only needs the thin under-approximations of top nonterminals of $G$, and the thin equivalent of lower nonterminals.

More precisely, we define $H$ by
\begin{align*}
    \mathcal S(H) & := S \\
    \mathcal N(H) & := \{S\} \uplus \biguplus_{W \in \mathcal N(G)} \mathcal N(H_W) \\
    \mathcal P(H) & := \{ S \to \alpha \mid \alpha \text{ is the yield of a derivation of } G \text{ of depth at most } (B+1)^2 \lvert \mathcal N(G) \rvert \} \\
    & \uplus \biguplus_{W \in \mathcal N(G)} \mathcal P(H_W)
\end{align*}
(where $\uplus$ is used to emphasise that we are taking the union of disjoint sets).


We claim the following.

\begin{claim}\label{cl:g-h-equivalence}
The 1-GVAS $G$ and $H$ have the same reachability relation.
\end{claim}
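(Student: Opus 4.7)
The plan is to prove the two inclusions $\reach{H} \subseteq \reach{G}$ and $\reach{G} \subseteq \reach{H}$ separately. The first is straightforward from the construction; the second requires a pruning argument that exploits the ``both coordinates $\le B$'' restriction.

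For $\reach{H} \subseteq \reach{G}$: any $H$-derivation witnessing $(a,b) \in \reach{H}$ must begin with the initial rule $S \to \alpha$, since $S$ does not appear on any right-hand side in $\mathcal P(H)$. By definition of $\mathcal P(H)$, $\alpha$ is the yield of some (possibly incomplete) $G$-derivation $\tau_{\text{top}}$ of depth at most $(B+1)^2 |\mathcal N(G)|$. The remaining leaves of $\alpha$ are nonterminals $W \in \mathcal N(G)$, each expanded below by rules from $\mathcal P(H_W)$, producing $H_W$-sub-derivations of matching input and output. Since $H_V$ is equivalent to $G_V$ for $V \in \low(G)$ and $H_X \under_S G_X$ implies $\reach{H_X} \subseteq \reach{G_X}$ for $X \in \topp(G)$, each such pair lies in $\reach{G_W}$; replacing every $H_W$-sub-derivation by a complete $G_W$-derivation with the same input and output and grafting them onto $\tau_{\text{top}}$ yields a complete $G$-derivation witnessing $(a,b) \in \reach{G}$.

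For the harder inclusion $\reach{G} \subseteq \reach{H}$, take $(a,b) \in \reach{G}$ witnessed by a complete $G$-derivation with counters $\tau$. First simplify $\tau$ by repeatedly contracting any cycle along a single root-to-leaf path whose left and right effects are both zero: if an ancestor $N'$ and descendant $N$ on one path share label, input, and output, replace the subtree rooted at $N'$ by the subtree rooted at $N$. This preserves the input, the output, and the validity of the run (the surviving prefix sums are a sub-sequence of the original ones, hence still nonnegative). Saturating this procedure yields a derivation $\tau'$ witnessing $(a,b)$ in which no root-to-leaf path has two nodes with the same triple (label, input, output). Now prune $\tau'$: on each path, stop at the first node that is either a terminal, labelled by a lower nonterminal, or labelled by a top nonterminal with input or output exceeding $B$. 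Every internal node of the pruned derivation $\tau_{\text{top}}$ carries a top nonterminal and has both input and output in $[0,B]$; by the no-repetition property, the number of such triples along any path is at most $(B+1)^2 |\topp(G)| \le (B+1)^2 |\mathcal N(G)|$, so $\tau_{\text{top}}$ satisfies the depth bound and its yield $\alpha$ appears as the right-hand side of an initial rule $S \to \alpha$ in $\mathcal P(H)$.

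It remains to realize the nonterminal leaves of $\tau_{\text{top}}$ inside $H$. A leaf labelled by $V \in \low(G)$ with input $a_V$ and output $b_V$ satisfies $(a_V, b_V) \in \reach{G_V} = \reach{H_V}$, so an $H_V$-sub-derivation provides the required run. A leaf labelled by $X \in \topp(G)$ has $(a_X, b_X) \notin [0,B]^2$ by the stopping condition, so using $H_X \under_{\N^2 \setminus [0,B]^2} G_X$ we get $(a_X, b_X) \in \reach{G_X} \cap (\N^2 \setminus [0,B]^2) = \reach{H_X} \cap (\N^2 \setminus [0,B]^2) \subseteq \reach{H_X}$, and again an $H_X$-sub-derivation is available. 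Grafting these onto the initial rule $S \to \alpha$ gives a complete $H$-derivation from $a$ to $b$, completing the proof. The main obstacle is the cycle-contraction step combined with the pigeonhole depth bound; everything else is a direct bookkeeping exercise using the equivalence and under-approximation hypotheses.
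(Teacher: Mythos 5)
Your proof is correct and follows essentially the same route as the paper: one inclusion by replacing $H_W$-subderivations with $G_W$-derivations of equal input/output, the other by pruning a small-depth prefix and realizing its leaves in the appropriate $H_W$. The one place you are more explicit than the paper's own proof of this claim is the preliminary cycle-contraction step (removing on-path node pairs with identical label, input, and output), which the paper assumes having announced it only at the start of Section~\ref{sec:bounded-area}; making it explicit, with the observation that the surviving prefix sums form a subsequence of the original ones, cleanly justifies the pigeonhole depth bound.
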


Notice that the size of $H$ is bounded by $\sum_{V \in \low(G)}\size(H_V) + \sum_{X \in \topp(G)}\size(H_X)$
plus the total size of all the yields of derivations of $G$ of depth at most $C = (B+1)^2 \Nn(G)$.
A yield of a single such derivation is of length at most $2^C$, as derivations are binary.
The number of possible derivations can be doubly-exponential,
but is bounded by some doubly-exponential function of $B$ and $\size(G)$.
Thus, the total size of all the yields of derivations of $G$ of depth $C$ is also at most doubly-exponential in $B$ and $\size(G)$.
Therefore
\[
\size(H) \leq \sum_{V \in \low(G)}\size(H_V) + \sum_{X \in \topp(G)}\size(H_X) + f(B + \size(G))
\]
for some doubly-exponential function $f$, as required in Lemma~\ref{lem:bounded-area}.
Also iterating through all these doubly-exponentially many derivations and constructing the rules of $H$ takes
at most doubly-exponential time in $B$ and $\size(G)$. Therefore $H$ satisfies the bounds for the size and the construction
time of Lemma~\ref{lem:bounded-area}. Thus, to finish the proof of Lemma~\ref{lem:bounded-area} it is enough to show Claim~\ref{cl:g-h-equivalence}.

\begin{proof}[Proof of Claim~\ref{cl:g-h-equivalence}]
In order to show that $\reach{H} = \reach{G}$ we show two inclusions. \\

\paragraph*{Inclusion $\reach{H} \subseteq \reach{G}$.}
Let $\tau_H$ be a $H$-derivation with counters.
We aim at constructing a $G$-derivation with counters $\tau_G$ such that $\tau_G$ and $\tau_H$ have the same input and output.
Let $S \to \alpha$ be the production rule applied at the root of $\tau_H$. 
To construct $\tau_G$, start from a derivation produced by $G$ and of yield $\alpha$. (The existence of such a derivation follows immediately from the definition of $\Pp(G)$.) Then, for every child $N$ of $S$ in $\tau_H$, find a derivation with counters $\sigma$ produced by $G$ with same input, output and label at the root as the subtree rooted at $N$ in $\tau_H$. Such a derivation exists, because for all $W \in \Nn(G)$, $H_W$ under-approximates $G_W$. Insert it in $\tau_G$ at the node corresponding to $N$, and propagate the counters values everywhere. \\

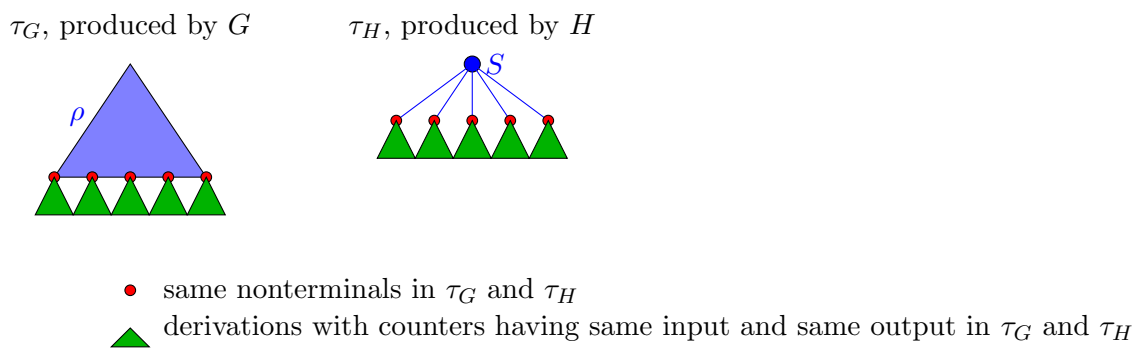
\begin{figure}[H]
    \begin{tikzpicture}
    
  	\definecolor{mycolor}{rgb}{0,0.7,0}

        \node at (3, 5.5) {$\tau_G$, produced by $G$};
    
        
        \filldraw[fill=blue!50, draw=black] (3,5) -- (2,3.5) -- (4,3.5) -- cycle;
        \node[blue] at (2.3, 4.3) {\(\rho\)};
        
        \draw[fill=red] (2,3.5) circle (2pt);  
        \draw[fill=red]   (2.5,3.5) circle (2pt);  
        \draw[fill=red] (3,3.5) circle (2pt);  
        \draw[fill=red] (3.5,3.5) circle (2pt);  
        \draw[fill=red]   (4,3.5) circle (2pt);  
        
        \filldraw[fill=mycolor, draw=black] (2,3.5) -- (1.75,3) -- (2.25,3) -- cycle;  
        \filldraw[fill=mycolor, draw=black] (3,3.5) -- (2.75,3) -- (3.25,3) -- cycle;  
        \filldraw[fill=mycolor, draw=black] (3.5,3.5) -- (3.25,3) -- (3.75,3) -- cycle;  
        
        \filldraw[fill=mycolor, draw=black] (2.5,3.5) -- (2.25,3) -- (2.75,3) -- cycle;  
        \filldraw[fill=mycolor, draw=black] (4,3.5) -- (3.75,3) -- (4.25,3) -- cycle;  
        
        \node at (7.5, 5.5) {$\tau_H$, produced by $H$};
    
    
        \filldraw[fill=blue] (7.5,5) circle (3pt);  
        \node[blue] at (7.8, 5) {$S$};  
    
        \draw[draw=blue] (7.5,5) -- (6.5,4.25);  
        \draw[draw=blue] (7.5,5) -- (7,4.25);  
        \draw[draw=blue] (7.5,5) -- (7.5,4.25);  
        \draw[draw=blue] (7.5,5) -- (8,4.25);  
        \draw[draw=blue] (7.5,5) -- (8.5,4.25);  
    
        \draw[fill=red] (6.5,4.25) circle (2pt);  
        \draw[fill=red]   (7,4.25) circle (2pt);  
        \draw[fill=red] (7.5,4.25) circle (2pt);  
        \draw[fill=red] (8,4.25) circle (2pt);  
        \draw[fill=red]   (8.5,4.25) circle (2pt);  
    
        \filldraw[fill=mycolor, draw=black] (6.5,4.25) -- (6.25,3.75) -- (6.75,3.75) -- cycle;  
        \filldraw[fill=mycolor, draw=black] (7.5,4.25) -- (7.25,3.75) -- (7.75,3.75) -- cycle;  
        \filldraw[fill=mycolor, draw=black] (8,4.25) -- (7.75,3.75) -- (8.25,3.75) -- cycle;  
    
        \filldraw[fill=mycolor, draw=black] (7,4.25) -- (6.75,3.75) -- (7.25,3.75) -- cycle;  
        \filldraw[fill=mycolor, draw=black] (8.5,4.25) -- (8.25,3.75) -- (8.75,3.75) -- cycle;  
    
        \draw[fill=red] (3,2) circle (2pt);
        \node[anchor=west] at (3.3,2) {same nonterminals in $\tau_G$ and $\tau_H$};
    
    
        \filldraw[fill=mycolor, draw=black] (3,1.5) -- (2.75,1.25) -- (3.25,1.25) -- cycle;
        \node[anchor=west] at (3.3,1.5) {derivations with counters having same input and same output in $\tau_G$ and $\tau_H$};
    
    
    \end{tikzpicture}
    \label{fig:sec6}
    
\caption{Correspondence between derivations in $G$ and $H$}
\end{figure}

\paragraph*{Inclusion $\reach{G} \subseteq \reach{H}$.}
Let $\tau_G$ be a $G$-derivation with counters. Similarly as above,
we aim at constructing a $H$-derivation with counters $\tau_H$ such that $\tau_G$ and $\tau_H$ have the same input and output.

Identify in $\tau_G$ a prefix $\rho$ of depth at most $(B+1)^2 \lvert \mathcal N(G)\rvert$, where every leaf of $\rho$ is either a terminal, a lower nonterminal, or a top nonterminal with an input or output exceeding $B$. Such a prefix exists, because no branch contains two nodes with the same inputs, labels and outputs. The derivation $\tau_H$ can be constructed by first applying the rule $S \to \alpha$ and then simulating the remaining part of $\tau_G$. More precisely, for every leaf $N$ of $\rho$ labelled by a nonterminal $W$, find a derivation with counters produced by $G$ with the same inputs, outputs and labels at the root of the subtree rooted at $N$ in $\tau_G$. Such a derivation exists, because for all $W \in \mathcal N(G)$, $H_W$ $\N^2 \setminus [0,B]^2$-exactly under-approximates $G_W$. Insert it in $\tau_G$ at the node corresponding to $N$, and propagate the counters values everywhere.

On Figure~\ref{fig:sec6} we illustrate the correspondence between derivations produced by $G$ and $H$.

\end{proof}


\section{Future research}\label{sec:future}

\paragraph*{Complexity of reachability in 1-GVAS}
It is natural to ask about the complexity of the reachability problem for 1-GVAS.
The best known lower bound for the problem is \pspace-hardness shown in~\cite{DBLP:journals/ipl/EnglertHLLLS21}.
Our work reduces the reachability problem to the thin 1-GVAS case, but unfortunately in~\cite{DBLP:conf/fsttcs/AtigG11}
no complexity upper bound is provided. The problem for thin 1-GVAS with $k$ nonterminals
is reduced there to the reachability problem for $(2k)$-dimensional VASS with nested zero-tests, which definitely have very high complexity
(it is at least \ackermann-hard as VASS reachability problem~\cite{DBLP:conf/focs/Leroux21,DBLP:conf/focs/CzerwinskiO21}).
Therefore, indeed there seems to be no hope to get a good complexity upper bound this way.

However, it is reasonable to conjecture that if there is a derivation $a \trans{X} b$ for some $s, t \in \N$
and a thin 1-GVAS nonterminal $X$ then there is also a small derivation. Actually, to our best knowledge,
in all known examples of thin 1-GVAS if there is a derivation then there is also a derivation of at most exponential size.
We therefore state the following conjecture.

\begin{conjecture}\label{conj:small-derivation}
For each thin 1-GVAS $G$ is there is a derivation of $G$ from $a$ to $b$, for some $a, b \in \N$
then there is also a derivation of $G$ from $a$ to $b$, which is of size at most exponential in size of $G$,
and size of the binary encodings of numbers $a$ and $b$.
\end{conjecture}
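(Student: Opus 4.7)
The plan is to prove Conjecture~\ref{conj:small-derivation} by induction on the depth of the dag of components of $G$, exploiting the layered structure that thinness imposes. Fix any derivation witnessing $a \trans{G} b$ and iteratively shrink it to a minimum-size derivation $\tau$ of effect $b-a$ valid at $a$; the goal is to show that $\tau$ already has size exponential in $\size(G)$ and in the bit-sizes of $a$ and $b$.

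For the base case, $\S(G)$ lies in a sink component of the dag. Thinness forces this component to be linear: for any rule with left-hand side in the component, at most one nonterminal in the right-hand side is in the same component (otherwise we could iterate a derivation $X \To \alpha X \beta X \gamma$, contradicting thinness). So $\tau$ has a single spine of component-nonterminals with purely terminal subtrees on either side, and can be viewed as an accepting run of a one-counter automaton whose transition effects are sums of exponentially bounded terminal sequences. Classical small-model results for one-counter automata then give an exponential bound on the length of a minimal run between two nonnegative endpoints.

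For the inductive step, let $C$ be the top component of $G$ and let $T$ be the maximal subtree of $\tau$ rooted at the root whose internal nodes are labelled by nonterminals of $C$. The leaves of $T$ are either terminals or labelled by nonterminals of strictly lower components. By the induction hypothesis, each off-spine subderivation (viewed with its recorded input/output counter values as in the "derivations with counters" setup from Section~\ref{sec:prelim}) can be replaced by one of size exponential in $\size(G)$ and in its counter endpoints, provided those endpoints are themselves only exponentially large. Hence the problem reduces to bounding (i) the size of $T$ and (ii) the maximum counter value appearing along $T$ in a minimum $\tau$.

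The main obstacle is this joint bound on (i) and (ii). By Claim~\ref{cl:finite-index-thin} the restriction of $\tau$ to $C$ has bounded index, so $T$ has bounded width and decomposes into boundedly many spines. Along each spine, minimality should force that (a) no simple cycle of left-effect zero appears unless it is strictly needed to keep the counter nonnegative at some later position, and (b) no two simple cycles occur with multiplicities that can be traded via Bezout for a strictly smaller valid combination of the same total effect; this is the thin analogue of the Parikh/cycle-elimination argument used in the proof of Lemma~\ref{lem:diagonal-lines}, but complicated by the nonnegativity constraint on the counter. Formalising (a) and (b) carefully should yield that each simple cycle has multiplicity at most exponential, giving an exponential bound both on spine length and on the maximum counter value reached. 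The real danger is that at each level of the dag of components, counter bounds can inflate by an exponential; to keep the total bound exponential rather than tower-like, the inductive hypothesis must track the dependence on counter magnitudes explicitly (precisely as in the statement of the conjecture), and the cycle-analysis step must show that one level of the dag incurs at most a single exponential blowup in counter range rather than a nested one.
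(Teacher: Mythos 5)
This statement is an explicitly open \emph{conjecture} in the paper (Section~\ref{sec:future}); the authors offer no proof and explicitly leave it as future work, so there is no ``paper's own proof'' to compare against. Your proposal is a reasonable outline of an induction on the dag of components, but the two steps you flag yourself as ``should yield'' are not incidental details — they are precisely the content of the conjecture, and as written they are not resolved.

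Concretely, there are two genuine gaps. First, the cycle-trading step (your items (a) and (b)): you invoke a Parikh/Bezout argument ``as in the proof of Lemma~\ref{lem:diagonal-lines}'', but in that lemma the paper is free to choose the starting value $a$ large enough that \emph{all} intermediate nonnegativity constraints are vacuously satisfied; the only thing being controlled is the total \emph{effect}. In your setting the input $a$ is fixed and the counter must stay nonnegative at \emph{every} intermediate position, which is the whole source of difficulty. Removing or swapping cycles can invalidate the run at positions between the removed cycles; the Parikh-style accounting gives no control over that, and you do not supply a replacement. Second, the key bound — that in a size-minimal derivation the intermediate counter values along the top-component prefix $T$ are at most exponential in $\size(G)$ and in the bit-sizes of $a,b$ — is asserted as what the analysis ``must show'' but is never established. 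This is exactly what prevents the dag-depth recursion from degenerating into a tower: your inductive hypothesis gives a bound exponential in the bit-sizes of subderivation endpoints, so if intermediate values at each level could grow by a fresh exponential in \emph{value}, the bit-sizes grow polynomially per level and the final bound stays singly exponential; but proving that the intermediate values are only exponentially large in the first place is precisely the missing core argument. The $\tower$ example the paper gives after Conjecture~\ref{conj:thin-semilinear} ($a \trans{Z} b$ iff $b \le \tower(a)$ with three nonterminals) does not refute the conjecture, since the bit-size of such $b$ is itself huge, but it shows that careless bookkeeping over the component dag does blow up non-elementarily, so this step cannot be waved through.

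A smaller caveat: in the base case your reduction to a one-counter automaton is plausible (thinness does make a sink component linear in the sense that each rule keeps at most one same-component nonterminal), but the ``classical small-model'' bound you cite needs the effects to be spelled out — the off-spine terminal blocks have effects that are themselves exponential in $\size(G)$, so you are in the weighted/succinct-update regime, where one must check that exponential-length runs (not polynomial) are what the small-model theorem delivers and that this matches the bound you need. None of this is fatal for the base case, but it should be made explicit rather than cited.
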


Conjecture~\ref{conj:small-derivation} would immediately imply an exponential space algorithm
for the reachability problem in thin 1-GVAS. In particular, by Corollary~\ref{cor:thin-elementary},
this would deliver an elementary complexity for the reachability problem for 1-GVAS.

\paragraph*{Higher dimensions of GVAS}
Clearly, after providing decidability of the reachability problem for one-dimensional GVAS it is natural to ask
about decidability in general GVAS.

As mentioned in Section~\ref{sec:intro} there is a natural hierarchy of reachability-like problems for PVAS,
so also for GVAS. This is because the coverability problem for d-GVAS reduces to the reachability problem for d-GVAS
and the reachability problem for d-GVAS reduces to the reachability problem for $(d+1)$-GVAS.
Therefore, the next natural problem to consider would be the coverability problem for 2-GVAS. It is a challenging goal.

Another interesting question is about the complexity lower bounds for the reachability problem in GVAS.
Currently there are no known techniques, which allow for showing a complexity lower bound for d-GVAS higher
than immediately inherited from $(d+1)$-VASS (recall that the pushdown may simulate easily one additional VASS counter).
Therefore, in particular we only know \ackermann-hardness for the reachability problem for GVAS,
inherited from~\cite{DBLP:conf/focs/Leroux21,DBLP:conf/focs/CzerwinskiO21}. However, in~\cite{DBLP:conf/csl/LerouxPS14}
it is shown that the reachability set in GVAS can be finite, but of hyperAckermannian-size. That suggests that possibly
the reachability problem for GVAS is \hypackermann-hard.
We formulate here the conjecture, which is a kind of in the air among people working on the reachability-like problems
on VASS and its generalisations. Any step towards proving it or falsifying it would be very interesting.

\begin{conjecture}\label{conj:gvas-complexity}
The reachability problem for GVAS is \hypackermann-complete.
\end{conjecture}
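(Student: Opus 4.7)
The plan is to split Conjecture~\ref{conj:gvas-complexity} into its two directions and attack each with different machinery. For the hardness direction, I would build on the recent \ackermann-hardness constructions for VASS of Leroux~\cite{DBLP:conf/focs/Leroux21} and Czerwiński--Orlikowski~\cite{DBLP:conf/focs/CzerwinskiO21}, combined with the hyper-Ackermannian reachability sets exhibited in~\cite{DBLP:conf/csl/LerouxPS14}. For the upper bound, I would first need an unconditional decidability proof for general GVAS (the present paper only handles one dimension), and then extract complexity bounds from it by careful recursion on the dimension.

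For the lower bound, my plan is to exhibit a \hypackermann-hard reduction in two stages. First, I would recall that a $d$-VASS can faithfully simulate a weak computer whose counter values range over the fast-growing function $F_d$; this is exactly how the known \ackermann-hardness is obtained. Second, I would use the pushdown of a GVAS to iterate this construction over an externally maintained index $k$: the pushdown stores a recursion depth and at each level invokes the previous level's VASS gadget with the freshly incremented counter capacity. Since one recursive call of the pushdown can amplify a bound of order $F_k$ to one of order $F_{k+1}$, and since $k$ can itself be made to grow along a path of length around $F_\omega$, the reachable region of the outermost level attains values on the order of $F_{\omega^\omega}$, which is the \hypackermann\ regime. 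The delicate part is making the amplification tight: a naive stacking only yields $F_{\omega \cdot 2}$-hardness, so the construction must reinject the bound produced by each level back as the \emph{index} (and not merely as the \emph{magnitude}) of the next level.

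For the upper bound, my plan is to lift the strategy of this paper inductively. One would hope to show that any d-GVAS can be turned into an equivalent \emph{thin} d-GVAS (i.e., one in which no nonterminal is branching) at the cost of a primitive recursive, or at worst Ackermannian, blowup per branching component, and then solve reachability for thin d-GVAS by reducing to reachability in some d-VASS with a restricted zero-test discipline along the lines of Atig--Ganty~\cite{DBLP:conf/fsttcs/AtigG11}. If the per-level cost is Ackermannian and the number of nested branching components scales with the dag of components, the total complexity telescopes to \hypackermann. This requires proving an analogue of Lemma~\ref{lem:top-branching} in arbitrary dimension: one must locate, relative to the boundary $\partial \N^d$, a decomposition of the reachability relation into a semilinear \emph{far-from-the-axes} part and a bounded residual part that can be handled by small derivations. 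The semilinearity argument in Section~\ref{sec:far-from-axis} fundamentally uses that $d=1$; extending it seems to require a substitute for Parikh's theorem in the presence of inequality constraints on multi-dimensional counters.

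The main obstacle is unambiguously the upper bound: not only is decidability of reachability in general GVAS open, but the very question of whether hyperAckermannian reachability sets translate into hyperAckermannian (rather than worse) procedures is unresolved. Even if the above inductive scheme on dimension can be made to work for decidability, obtaining a tight \hypackermann\ bound requires showing that one never loses more than an Ackermannian factor per branching component, which in turn seems to require a compositional ``short witness'' statement generalising Conjecture~\ref{conj:small-derivation}. The lower bound, while technically involved, reduces to a careful combinatorial encoding whose feasibility is already supported by~\cite{DBLP:conf/csl/LerouxPS14}; I would expect it to be provable with current techniques, so the genuine conjectural content lies on the upper-bound side.
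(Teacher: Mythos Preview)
The statement you are asked to prove is explicitly labelled a \emph{conjecture} in the paper; the authors do not prove it and indeed present it in the ``Future research'' section as an open problem whose resolution would be ``very interesting''. So there is no proof in the paper to compare against, and your submission is not a proof either---it is a research plan. You are candid about this (you write that decidability of general GVAS reachability is open and that ``the genuine conjectural content lies on the upper-bound side''), so you clearly understand the status of the problem.

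That said, a few remarks on the plan itself. For the lower bound, your idea of using the pushdown to iterate the Ackermann-hardness gadgets is natural, but the sentence ``since $k$ can itself be made to grow along a path of length around $F_\omega$'' is where the real work hides: in the known VASS lower bounds, the index of the fast-growing hierarchy is tied to the \emph{dimension}, not to a counter value, and a pushdown only gives you one extra zero-testable counter, not an unbounded supply of fresh dimensions. Reinjecting a produced value as the \emph{index} of the next level is exactly what current techniques cannot do in VASS-like models, and the hyper-Ackermannian reachability sets of~\cite{DBLP:conf/csl/LerouxPS14} do not by themselves yield a hardness reduction---large reachability sets are necessary but not sufficient for hardness. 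For the upper bound, your proposed induction on dimension via a higher-dimensional analogue of Lemma~\ref{lem:top-branching} runs into the obstacle the paper itself flags in Example~\ref{ex:non-semilinear}: already in dimension two the coverability relation of a branching GVAS fails to be semilinear, so the ``far-from-the-axes'' semilinearity that drives Section~\ref{sec:far-from-axis} has no direct analogue for $d\geq 2$.
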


\paragraph*{Semilinearity of the reachability sets and relations}
It is natural to ask whether the reachability sets and relations of 1-GVAS are semilinear.
Unfortunately, the reachability relations of 1-GVAS are not always semilinear.
It was shown in~\cite{DBLP:conf/csl/LerouxPS14} that already thin 1-GVAS do not have a semilinear reachability relation. A simple example is the following 1-GVAS having two nonterminals $X$ and $Y$ and the following four rules:
\[
Y \to 0 \hskip 1.5cm Y \to -1 \, Y \, 2 \hskip 1.5cm X \to 1 \hskip 1.5cm X \to -1 \, X \, Y.
\] 
It is easy to see that $a \trans{X} b$ if and only if $b \leq 2^a$,
which is clearly a non-semilinear relation.

However, it turns out that 1-GVAS with the top component being branching seem to be much closer
to having semilinear reachability relations.
We used semilinear sets a lot in our reasonings, and, in fact, we almost proved that top-branching 1-GVAS have semilinear reachability relations. In Section~\ref{sec:far-from-axis}, we showed how to compute a semilinear representation of their reachability relations far from the axis, that is, restricted to a set of the form $[B, \infty)^2$ for some $B \in \N$. Later, in Section~\ref{sec:mainproof}, we studied the reachability relation on lines close to the axes, or equivalently, reachability sets from small inputs. In one case (when the corresponding supertree has a successful supernode), we computed a semilinear representation of the reachability set from a certain threshold. Note that the threshold does not matter, as finite sets are semilinear sets, and we now have an algorithm for reachability in 1-GVAS. However, in the other case, we could only provide a thin under-approximation with same reachability set. This yields the following ``semilinearity transfer'' theorem.

\begin{theorem} \label{thm:semilin-transfer}
    If thin 1-GVAS have semilinear reachability sets, then top-branching 1-GVAS have semilinear reachability relations. Moreover, if the former are computable, then the latter also are.
\end{theorem}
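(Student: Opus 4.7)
The plan is to assemble a semilinear representation of $\reach{G}$ by a case analysis on the region of $\N^2$, reusing the three partial decompositions already built in the paper. Let $G$ be a top-branching 1-GVAS, and assume thin 1-GVAS have (computable) semilinear reachability sets.

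For the bulk region, the proof of Lemma~\ref{lem:far-from-axis} in Section~\ref{sec:far-from-axis} already produces, unconditionally, an exponential bound $B \in \N$ together with a semilinear representation of $\reach{G} \cap [B,\infty)^2$. This takes care of everything but the two strips $[0,B-1] \times \N$ and $\N \times [0,B-1]$, which together consist of $2B$ coordinate lines.

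For each of the $B$ vertical lines $\set{a} \times \N$ with $a \in [0, B-1]$, I would run the supertree construction of Section~\ref{sec:mainproof} with input value $a$. Its analysis in Section~\ref{subsec:lemma_from_supertree} splits into two cases. In the successful-supernode case, an appeal to Lemma~\ref{lem:line-linear-set} yields a computable threshold $T_a$ such that $\reach{G}(a) \cap [T_a, \infty) = (a + r + d\Z) \cap [T_a, \infty)$, and the finitely many values in $[0, T_a - 1]$ contribute finitely many singletons, so $\reach{G} \cap (\set{a} \times \N)$ is semilinear and computable. In the no-successful-supernode case, Section~\ref{subsubsec:no-success} constructs a thin 1-GVAS $H_a$ with $H_a \under_{\set{a} \times \N} G$; in particular $\reach{G}(a) = \reach{H_a}(a)$, which is semilinear and computable by the hypothesis on thin 1-GVAS. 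Either way, $\reach{G} \cap (\set{a} \times \N)$ is computably semilinear. For the horizontal lines $\N \times \set{b}$ with $b \in [0, B-1]$, I would apply the same procedure to $G_\rev$ and exploit Claim~\ref{cl:reverse}: $(s,t) \in \reach{G}$ iff $(t,s) \in \reach{G_\rev}$. Since the graph of nonterminals is unchanged by reversal, $G_\rev$ is again top-branching, so the preceding analysis applies and, via the coordinate swap, yields a computable semilinear representation of $\reach{G} \cap (\N \times \set{b})$. Taking the union of all these pieces gives a semilinear representation of $\reach{G}$, and effectiveness is preserved at every step.

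The only real subtlety is checking that the no-successful-supernode branch genuinely gives \emph{exact} (not merely eventual) equivalence on $\set{a} \times \N$ for each small $a$. This is noted explicitly at the start of Section~\ref{subsubsec:no-success}: the thin 1-GVAS $H_a$ built there satisfies $H_a \under_{\set{a} \times \N} G$, i.e., the threshold $T$ of Lemma~\ref{lem:small-lines} can be taken to be $0$, so no values on the line are lost. Everything else is bookkeeping: finite unions of semilinear sets are semilinear, and the construction is effective whenever thin 1-GVAS reachability sets are effectively semilinear. I do not expect any hard mathematical step beyond what has already been done in Sections~\ref{sec:far-from-axis} and~\ref{sec:mainproof}; the content of the theorem is essentially the observation that the sole obstruction to full semilinearity in the construction is the black-box thin 1-GVAS $H_a$, and removing that obstruction via the hypothesis is immediate.
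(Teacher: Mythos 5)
Your proposal is correct and follows the same decomposition the paper itself outlines just before Theorem~\ref{thm:semilin-transfer}: semilinearity of $\reach{G}\cap[B,\infty)^2$ from Section~\ref{sec:far-from-axis}, per-line analysis via the supertree (split into the successful-supernode and no-successful-supernode cases), the reverse $G_\rev$ for horizontal lines, and the observation that the threshold in the successful case is harmless because the finite remainder is computable via the now-established decidability. No gaps; this is essentially the paper's own argument.
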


Actually, this is even an equivalence because, for every $a \in \N$ and every thin 1-GVAS $H$ of initial nonterminal $S$, we can add a new initial nonterminal $S'$ and the two rules $S' \to -a S'S'$, $S' \to S$ on top of $H$ to construct a top-branching 1-GVAS with the same reachability set from the input $a-1$. Therefore (and by Theorem~\ref{thm:main}), thin 1-GVAS and top-branching 1-GVAS have the same reachability sets.

We formulate the following conjecture about reachability sets in thin 1-GVAS.

\begin{conjecture}\label{conj:thin-semilinear}
The reachability sets for thin 1-GVAS are semilinear and the semilinear representation is computable.
\end{conjecture}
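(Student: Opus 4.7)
My plan would be to attempt Conjecture~\ref{conj:thin-semilinear} by induction on the height of the dag of components of the thin 1-GVAS $G$, reducing at each step to a single-component (hence linear) thin 1-GVAS after handling all strictly lower components first. The high-level idea is to compile out lower components using their (inductively assumed) semilinear reachability relations, obtaining a linear grammar with semilinear-jump terminals, and then to prove semilinearity for this restricted form directly.

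For the inductive step, suppose every $G_V$ with $V$ in a strictly lower component has a computable semilinear reachability relation. The one-dimensional setting is convenient: as suggested by Lemma~\ref{lem:far-from-axis}, above some computable threshold the structure of $\reach{G_V}(a)$ stabilizes into a semilinear pattern in $a$, while below the threshold it takes only finitely many computable values. I would replace each leaf nonterminal $V$ in the top component by a finite union of ``semilinear jump'' gadgets, one per linear piece and one per low-input case, yielding a linear thin 1-GVAS $G'$ whose reachability set from $a$ coincides with that of $G$. The first serious technical step is to handle the parametric dependence of these jumps on the current counter value in a way compatible with the subsequent validity analysis.

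For the base case, a single-component thin 1-GVAS is essentially a linear context-free grammar with integer terminals, and derivations yield runs of the nested form $\alpha_1 \cdots \alpha_n\, \gamma\, \beta_n \cdots \beta_1$, where each pair $(\alpha_i,\beta_i)$ is the left/right effect of a simple cycle from a finite set and $\gamma$ is a simple terminating derivation. Ignoring validity, Parikh's theorem immediately gives a semilinear set of reachable final values, since the Parikh image of the cycle-multiset is semilinear and the final value is an affine function of this multiset. The crux is to handle the validity constraint, which turns the problem into a restricted form of one-counter pushdown reachability. I would attempt a case split on whether some combination of simple cycles admits arbitrary positive pumping of the counter: in the pumping case, adapt the techniques of Section~\ref{sec:far-from-axis} to get a semilinear tail together with a finite residue; in the non-pumping case argue that the reachable set is itself bounded and hence finite and computable.

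The main obstacle will be exactly this validity analysis in the base case: semilinearity of reachable sets from fixed inputs in a linear thin 1-GVAS is essentially a one-counter-pushdown semilinearity statement, itself a substantial open problem. A full proof would most likely need to establish Conjecture~\ref{conj:small-derivation} as a companion result, combined with a structural normal form on cycle orderings isolating a canonical valid schedule whenever any valid one exists. An alternative route would be to leverage the Atig--Ganty reduction to VASS with nested zero-tests~\cite{DBLP:conf/fsttcs/AtigG11} and prove semilinearity of reachability sets for the resulting restricted target model; this trades the pushdown difficulty for a zero-test difficulty and is probably no easier, but it may expose different structural invariants that are easier to exploit.
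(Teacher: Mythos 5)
This statement is Conjecture~\ref{conj:thin-semilinear}, which the paper explicitly leaves open in Section~\ref{sec:future}; there is no proof in the paper to compare your attempt against. Your sketch should therefore be judged on whether it plausibly closes the gap, and it does not, for the reason you yourself identify: after the dag-of-components induction, the base case is a single-component (linear) thin 1-GVAS, and semilinearity of its reachability sets from a fixed input under the nonnegativity constraint is essentially the whole problem. Parikh's theorem gives you semilinearity of the unconstrained effects, but carving out the \emph{valid} derivations (those whose counter never drops below zero) from that semilinear set is exactly what is not known. Moreover, the machinery you propose to adapt --- Lemma~\ref{lem:far-from-axis} and the stabilization arguments behind it --- is developed for \emph{top-branching} 1-GVAS and relies on Proposition~\ref{prop:branching-top-component}, whose key ingredient (a cycle with positive left and global effect that can be pumped once the counter is high enough) is extracted from branching, not thinness. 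You cannot import that technology into a linear grammar without a genuinely new argument, so the ``semilinear tail plus finite residue'' structure you hope for in the base case is not established.

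Two further cautions. First, your plan to compile out lower components into ``semilinear jump'' gadgets has to contend with the fact that the jump is parametric in the counter value and must interact correctly with validity of the surrounding run; this is not a cosmetic step and may itself require the very semilinearity normal form you are trying to prove. Second, keep in mind (as the paper notes just above the conjecture) that thin 1-GVAS do \emph{not} in general have semilinear reachability \emph{relations}: the grammar with rules $Y \to 0$, $Y \to -1\,Y\,2$, $X \to 1$, $X \to -1\,X\,Y$ satisfies $a \trans{X} b$ iff $b \le 2^a$. So any argument must genuinely exploit that a single input $a$ is fixed; any step that implicitly uniformizes over $a$ (as Parikh-style arguments tend to do) will contradict this example. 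In short, your proposal is an honest research plan and correctly locates the obstruction, but it is not a proof, and the conjecture remains open as stated. Proving Conjecture~\ref{conj:small-derivation} first, as you suggest, is a reasonable companion target, but it too is open.
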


By Theorem~\ref{thm:semilin-transfer}, Conjecture~\ref{conj:thin-semilinear} would imply semilinearity of the reachability relation in 1-GVAS with branching top component and computability of its reachability relation.
Such a result could improve a lot our understanding of the structure of reachability relation
for 1-GVAS and possibly allow for decidability of some other problems.
For example, it would show how to immediately decide whether two 1-GVAS have the same reachability relation.
Therefore, an interesting research direction is to prove or disprove Conjecture~\ref{conj:thin-semilinear}.

Notice that there is no hope for getting a small representation of the semilinearity set.
Adding to the above 1-GVAS a nonterminal $Z$ with two rules: $Z \to 0$ and $Z \to -1 Z Y$
we get that $a \trans{Z} b$ if and only if $b \leq \tower(a)$. Therefore the size of the reachability set, even with only three nonterminals,
can be non-elementary. Adding more nonterminals we can even get sets of sizes close to ackermannian.
Anyway, it is interesting to ask about semilinearity, even though there is no hope for small representations.

\paragraph*{Semilinearity in higher dimensions}
One might conjecture that the same property holds in some higher dimensions.
However, this is unfortunately not true, as shown by the following Example~\ref{ex:non-semilinear}.
Example~\ref{ex:non-semilinear} provides a 2-GVAS, in which the coverability relation is not semilinear.

\begin{example}\label{ex:non-semilinear}
Let $G$ be a 2-GVAS with two nonterminals $X$ and $Y$
and the following four rules:
\begin{align*}
X \to X \, X \, (-1,0) & \hskip 1cm X \to Y & \hskip 1cm Y \to (0,-2) \, Y  \, (1,1) & \hskip 1cm Y \to (0,0).
\end{align*}

\noindent
We present here only a sketch of an argument showing that the coverability relation for the above 2-GVAS
is not semilinear.
One can easily observe that $(0, a) \trans{Y} (b, c)$ if and only if $b + c = a$ and $b \leq a / 2$.
More generally $(d, a) \trans{Y} (b, c)$ if and only if $b + c = d + a$ and $b \leq d + a/2$.
Therefore $(0, a) \trans{X} (a/2, a/2)$ by the use of the second rule $X \to Y$.
However, to reach value higher then $a/2$ on the first counter one needs to use the first rule, which decreases
value on the first counter.
Notice that $(0, a) \trans{XX} (3a/4, a/4)$, $(0, a) \trans{XXX} (7a/8, a/8)$, etc., of course in the case when $a$
is divisible by $4$ or $8$. And more generally $(0, 2^a) \trans{X\cdots X} (2^a-1,1)$ if the number of $X$-es equals $a$.

We claim that in general, the highest $d$ for which $(0, 2^a) \trans{X} (d, 0)$ is $d = 2^a - a$.
The intuitive reason is that to transfer high value to the second counter we need many $X$-es,
but for production of each $X$ we need to decrease the first counter by $1$ (by the rule $X \to X \, X \, (-1,0)$).
Notice that indeed $(0, 2) \trans{X} (1, 1)$, but not $(0, 2) \trans{X} (2, 0)$.
Assuming that $(0, 2^a) \trans{X} (2^a - a, 0)$ the optimal derivation from $(0, 2^{a+1})$ would start from firing the rules
$X \to X \, X \, (0,-1) \to Y \, X \, (0, -1)$. Then
\[
(0, 2^{a+1}) \trans{Y} (2^a, 2^a) \trans{X} (2^a + (2^a-a), 0) \trans{(-1, 0)} (2^a + 2^a - a - 1,0) = (2^{a+1} - (a+1), 0).
\]
One can show that indeed $b = 2^{a+1} - (a+1)$ is the highest value $b$ such that $(b, 0)$ is reachable from $(0, 2^{a+1})$
and also therefore coverable.
Thus the set of $(x, y) \in \N^2$ such that $(0, x)$ covers $(y, 0)$ contains all the pairs $(2^a, 2^a-a)$ for $a \in \N$,
but does not contain any pair $(2^a, 2^a-a+1)$ for $a \in \N$. Such a set cannot be semilinear, which finishes the argument.
\end{example}

\subsubsection*{Acknowledgments}
We thank Filip Mazowiecki for helpful comments on the first draft of this paper.
We also thank Georg Zetzsche for numerous fruitful discussions about 1-GVAS reachability for recent years.

\bibliographystyle{alpha}
\bibliography{citat}

\newcommand{\etalchar}[1]{$^{#1}$}
\begin{thebibliography}{BFG{\etalchar{+}}15}

\bibitem[AG11]{DBLP:conf/fsttcs/AtigG11}
Mohamed~Faouzi Atig and Pierre Ganty.
\newblock {Approximating Petri Net Reachability Along Context-free Traces}.
\newblock In {\em Proceedings of {FSTTCS} 2011}, volume~13 of {\em LIPIcs},
  pages 152--163, 2011.

\bibitem[BEM97]{DBLP:conf/concur/BouajjaniEM97}
Ahmed Bouajjani, Javier Esparza, and Oded Maler.
\newblock {Reachability Analysis of Pushdown Automata: Application to
  Model-Checking}.
\newblock In {\em Proceedings of {CONCUR} 1997}, volume 1243 of {\em Lecture
  Notes in Computer Science}, pages 135--150. Springer, 1997.

\bibitem[BFG{\etalchar{+}}15]{DBLP:conf/lics/BlondinFGHM15}
Michael Blondin, Alain Finkel, Stefan Goller, Christoph Haase, and Pierre
  McKenzie.
\newblock Reachability in two-dimensional vector addition systems with states
  is {PSpace}-complete.
\newblock In {\em Proceedings of {LICS} 2015}, pages 32--43, 2015.

\bibitem[Bon11]{DBLP:conf/mfcs/Bonnet11}
R{\'{e}}mi Bonnet.
\newblock {The Reachability Problem for Vector Addition System with One
  Zero-Test}.
\newblock In {\em Proceedings of {MFCS} 2011}, volume 6907, pages 145--157.
  Springer, 2011.

\bibitem[CLL{\etalchar{+}}19]{DBLP:conf/stoc/CzerwinskiLLLM19}
Wojciech Czerwinski, Slawomir Lasota, Ranko Lazic, J{\'{e}}r{\^{o}}me Leroux,
  and Filip Mazowiecki.
\newblock The reachability problem for {Petri} nets is not elementary.
\newblock In {\em Proceedings of {STOC} 2019}, pages 24--33. {ACM}, 2019.

\bibitem[CO21]{DBLP:conf/focs/CzerwinskiO21}
Wojciech Czerwinski and Lukasz Orlikowski.
\newblock {Reachability in Vector Addition Systems is Ackermann-complete}.
\newblock In {\em Proceedings of {FOCS} 2021}, pages 1229--1240. {IEEE}, 2021.

\bibitem[EHL{\etalchar{+}}21]{DBLP:journals/ipl/EnglertHLLLS21}
Matthias Englert, Piotr Hofman, Slawomir Lasota, Ranko Lazic,
  J{\'{e}}r{\^{o}}me Leroux, and Juliusz Straszynski.
\newblock {A lower bound for the coverability problem in acyclic pushdown VAS}.
\newblock {\em Inf. Process. Lett.}, 167:106079, 2021.

\bibitem[EN94]{DBLP:journals/eatcs/EsparzaN94}
Javier Esparza and Mogens Nielsen.
\newblock {Decidability Issues for Petri Nets - a survey}.
\newblock {\em Bull. {EATCS}}, 52:244--262, 1994.

\bibitem[FLS18]{DBLP:conf/fsttcs/FinkelLS18}
Alain Finkel, J{\'{e}}r{\^{o}}me Leroux, and Gr{\'{e}}goire Sutre.
\newblock {Reachability for Two-Counter Machines with One Test and One Reset}.
\newblock In {\em Proceedings of {FSTTCS} 2018}, volume 122 of {\em LIPIcs},
  pages 31:1--31:14. Schloss Dagstuhl - Leibniz-Zentrum f{\"{u}}r Informatik,
  2018.

\bibitem[GMZ22]{DBLP:conf/lics/GanardiMZ22}
Moses Ganardi, Rupak Majumdar, and Georg Zetzsche.
\newblock {The Complexity of Bidirected Reachability in Valence Systems}.
\newblock In {\em Proceedings of {LICS} 2022}, pages 26:1--26:15. {ACM}, 2022.

\bibitem[Gut24]{DBLP:conf/icalp/Guttenberg24}
Roland Guttenberg.
\newblock {Flattability of Priority Vector Addition Systems}.
\newblock In {\em Proceedings of {ICALP} 2024}, volume 297 of {\em LIPIcs},
  pages 141:1--141:20. Schloss Dagstuhl - Leibniz-Zentrum f{\"{u}}r Informatik,
  2024.

\bibitem[HIKS02]{DBLP:journals/jcss/HarjuIKS02}
Tero Harju, Oscar~H. Ibarra, Juhani Karhum{\"{a}}ki, and Arto Salomaa.
\newblock {Some Decision Problems Concerning Semilinearity and Commutation}.
\newblock {\em J. Comput. Syst. Sci.}, 65(2):278--294, 2002.

\bibitem[HL11]{DBLP:conf/cav/HagueL11}
Matthew Hague and Anthony~Widjaja Lin.
\newblock {Model Checking Recursive Programs with Numeric Data Types}.
\newblock In {\em Proceedings of {CAV} 2011}, volume 6806 of {\em Lecture Notes
  in Computer Science}, pages 743--759. Springer, 2011.

\bibitem[HU79]{DBLP:books/aw/HopcroftU79}
John~E. Hopcroft and Jeffrey~D. Ullman.
\newblock {\em {Introduction to Automata Theory, Languages and Computation}}.
\newblock Addison-Wesley, 1979.

\bibitem[Laz13]{DBLP:journals/corr/Lazic13}
Ranko Lazic.
\newblock The reachability problem for vector addition systems with a stack is
  not elementary.
\newblock {\em CoRR}, abs/1310.1767, 2013.

\bibitem[Laz19]{DBLP:conf/fsttcs/Lazic19}
Ranko Lazic.
\newblock {Finkel Was Right: Counter-Examples to Several Conjectures on
  Variants of Vector Addition Systems (Invited Talk)}.
\newblock In {\em Proceedings {FSTTCS} 2019}, volume 150 of {\em LIPIcs}, pages
  3:1--3:2. Schloss Dagstuhl - Leibniz-Zentrum f{\"{u}}r Informatik, 2019.

\bibitem[Ler21]{DBLP:conf/focs/Leroux21}
J{\'{e}}r{\^{o}}me Leroux.
\newblock {The Reachability Problem for Petri Nets is Not Primitive Recursive}.
\newblock In {\em Proceedings of {FOCS} 2021}, pages 1241--1252. {IEEE}, 2021.

\bibitem[Lip76]{Lipton76}
Richard~J. Lipton.
\newblock {The Reachability Problem Requires Exponential Space}.
\newblock Technical report, Yale University, 1976.

\bibitem[LPS14]{DBLP:conf/csl/LerouxPS14}
J{\'{e}}r{\^{o}}me Leroux, M.~Praveen, and Gr{\'{e}}goire Sutre.
\newblock {Hyper-Ackermannian bounds for pushdown vector addition systems}.
\newblock In {\em Proceedings of {CSL-LICS} 2014}, pages 63:1--63:10, 2014.

\bibitem[LPSS19]{DBLP:journals/lmcs/LerouxPSS19}
J{\'{e}}r{\^{o}}me Leroux, M.~Praveen, Philippe Schnoebelen, and Gr{\'{e}}goire
  Sutre.
\newblock {On Functions Weakly Computable by Pushdown {Petri} Nets and Related
  Systems}.
\newblock {\em Log. Methods Comput. Sci.}, 15(4), 2019.

\bibitem[LS19]{DBLP:conf/lics/LerouxS19}
J{\'{e}}r{\^{o}}me Leroux and Sylvain Schmitz.
\newblock {Reachability in Vector Addition Systems is Primitive-Recursive in
  Fixed Dimension}.
\newblock In {\em Proceedings of {LICS} 2019}, pages 1--13. {IEEE}, 2019.

\bibitem[LS20]{DBLP:conf/concur/LerouxS20}
J{\'{e}}r{\^{o}}me Leroux and Gr{\'{e}}goire Sutre.
\newblock {Reachability in Two-Dimensional Vector Addition Systems with States:
  One Test Is for Free}.
\newblock In {\em Proceedings of {CONCUR} 2020}, volume 171, pages 37:1--37:17.
  Schloss Dagstuhl - Leibniz-Zentrum f{\"{u}}r Informatik, 2020.

\bibitem[LST15a]{DBLP:conf/rp/LerouxST15}
J{\'{e}}r{\^{o}}me Leroux, Gr{\'{e}}goire Sutre, and Patrick Totzke.
\newblock {On Boundedness Problems for Pushdown Vector Addition Systems}.
\newblock In {\em Proceedings of {RP} 2015}, volume 9328 of {\em Lecture Notes
  in Computer Science}, pages 101--113, 2015.

\bibitem[LST15b]{DBLP:conf/icalp/LerouxST15}
J{\'{e}}r{\^{o}}me Leroux, Gr{\'{e}}goire Sutre, and Patrick Totzke.
\newblock {On the Coverability Problem for Pushdown Vector Addition Systems in
  One Dimension}.
\newblock In {\em Proceedings of {ICALP} 2015}, volume 9135 of {\em Lecture
  Notes in Computer Science}, pages 324--336, 2015.

\bibitem[LT17]{DBLP:conf/birthday/LazicT17}
Ranko Lazic and Patrick Totzke.
\newblock {What Makes Petri Nets Harder to Verify: Stack or Data?}
\newblock In {\em Concurrency, Security, and Puzzles - Essays Dedicated to
  Andrew William Roscoe on the Occasion of His 60th Birthday}, volume 10160 of
  {\em Lecture Notes in Computer Science}, pages 144--161. Springer, 2017.

\bibitem[Min67]{Minsky67}
Marvin~L. Minsky.
\newblock {\em {Computation: finite and infinite machines}}.
\newblock Prentice-Hall, Inc., USA, 1967.

\bibitem[Par66]{DBLP:journals/jacm/Parikh66}
Rohit Parikh.
\newblock On context-free languages.
\newblock {\em J. {ACM}}, 13(4):570--581, 1966.

\bibitem[Rac78]{DBLP:journals/tcs/Rackoff78}
Charles Rackoff.
\newblock {The Covering and Boundedness Problems for Vector Addition Systems}.
\newblock {\em Theor. Comput. Sci.}, 6:223--231, 1978.

\bibitem[Rei08]{DBLP:journals/entcs/Reinhardt08}
Klaus Reinhardt.
\newblock {Reachability in Petri Nets with Inhibitor Arcs}.
\newblock {\em Electron. Notes Theor. Comput. Sci.}, 223:239--264, 2008.

\bibitem[Sch16a]{DBLP:journals/toct/Schmitz16}
Sylvain Schmitz.
\newblock {Complexity Hierarchies beyond Elementary}.
\newblock {\em {ACM} Trans. Comput. Theory}, 8(1):3:1--3:36, 2016.

\bibitem[Sch16b]{DBLP:journals/siglog/Schmitz16}
Sylvain Schmitz.
\newblock {The complexity of reachability in vector addition systems}.
\newblock {\em {ACM} {SIGLOG} News}, 3(1):4--21, 2016.

\bibitem[vdA98]{DBLP:journals/jcsc/Aalst98}
Wil M.~P. van~der Aalst.
\newblock {The Application of Petri Nets to Workflow Management}.
\newblock {\em J. Circuits Syst. Comput.}, 8(1):21--66, 1998.

\end{thebibliography}

\appendix


\section{Proof of Claim~\ref{cl:finite-index-thin}}\label{sec:app1}
We recall the statement of Claim~\ref{cl:finite-index-thin}.

\vskip 0.3cm

\noindent
\textbf{Claim~\ref{cl:finite-index-thin}.}
Every thin context-free grammar is finite-index.

\vskip 0.3cm

\begin{proof}
Fix a finite-index grammar $G$ with the initial nonterminal $S$.
We prove by induction on the number $k$ of components in a thin grammar that its index is at most $k$.
For $k = 1$ the statement is obvious. Assume the induction assumption for $k$ and show it for $k+1$.
Take any $S$-derivation $\tau$. First applied rule is of the form $S \to X_1 S_1$ or $S \to S_1 X_1$ (recall that our rules are at most binary),
where $X_1$ is from component lower than $S$. By induction assumption we can create sequence of words starting from $X_1$,
which contains at most $k$ nonterminals each. Thus adding $S_1$ to each of those words makes a sequence
of words with at most $k+1$ nonterminals each. Finally, we get a word with only terminals and the nonterminal $S_1$,
we can continue expanding $S_1$. We continue similar way with the rule $S_1 \to X_2 S_2$ or $S_1 \to S_2 X_2$
and proceed the same further with nonterminals $S_2$, $S_3$, etc., all in the same component as $S$.
At some point the nonterminal $S_i$ will also be from component lower than $S$, which will finish creation of the sequence of
words and finalise the argument.
\end{proof}

\section{Proof of Lemma~\ref{lem:1GVAS-without-negative-cycle}}\label{sec:app2}
We recall the statement of Lemma~\ref{lem:1GVAS-without-negative-cycle}.

\vskip 0.3cm

\noindent
\textbf{Lemma~\ref{lem:1GVAS-without-negative-cycle}.}
Let $G$ be a 1-GVAS without negative cycles. Suppose that $G$ can cover the value $B \in \N$ from an input value $a < B$. Then, there is $b \in \N$ of size at most polynomial in $B$ and exponential in $\size(G)$ such that $a \xrightarrow{G} b$ and $B \le b$.

\vskip 0.3cm

\begin{proof}
    First, observe that the result is immediate if $G$ doesn't have positive cycles either.
    Indeed, all the possible effects of $G$ are generated by simple derivations, and there are finitely many simple derivations. Let $\Delta$ be the maximum of those effects. If, from input $a$, $G$ can cover $B$, then it can reach some $b \in \N$ such that $B \le b \le a + \Delta$. From $a < B$, we deduce that $b \le B + \Delta$, which is obviously a bound polynomial in $B$ and exponential in $\size(G)$.

    We do the rest of the proof by induction on the depth $d$ of the dag of components of nonterminals.
    Recall from Section~\ref{sec:triangle} that a derivation $\theta$ is irreducible if removing any simple cycle from $\theta$ decreases the set of nonterminals it contains. 
    For the rest of the proof, we let $m(G)$ denote the smallest natural number such that 
    \begin{itemize}
    \item all the irreducible derivations have effect at most $m(G)$ in absolute value and are all valid from input $m(G)$
    \item simple cycles of positive effect have effect at most $m(G)$.
    \end{itemize}
    Value $m(G)$ is at most exponential in $\size(G)$ (Claim~\ref{cl:irreducible-size} guarantees that irreducible derivations are of size at most exponential in $\size(G)$). Since $G$ doesn't contain any negative cycle, minimal effects are reached at simple derivations. Therefore, $m(G)$ also has the property that all derivations produced by $G$, rooted at any nonterminal (non necessarily the initial one), have effect at least $-m(G)$.

    To prove Lemma~\ref{lem:1GVAS-without-negative-cycle}, we show the following claim by induction on $d$.
    Notice that Claim~\ref{cl:bound-on-jump} immediately implies Lemma~\ref{lem:1GVAS-without-negative-cycle}.

    \begin{claim}\label{cl:bound-on-jump}
    Let $G$ be a 1-GVAS without positive cycle.
    If $G$ can cover $B \in \N$ from an input $a < B$, then there is $b \in \N$ such that $a \xrightarrow{G} b$ and $ B \le b \le b_{\text{max}}(B,d,G)$, where $b_{\text{max}}(B,d,G) :=
    m(G) \cdot (\lvert \mathcal N(G) \rvert + 1) \cdot (B + 2 m(G) \lvert \mathcal N (G) \rvert) + 2(d+1) \cdot m(G) \cdot (\lvert \mathcal N (G) \rvert + 1)$
    \end{claim}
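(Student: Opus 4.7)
I would prove Claim~\ref{cl:bound-on-jump} under the hypothesis inherited from the ambient lemma (namely that $G$ has no negative cycle, so that every simple cycle has effect in $[0, m(G)]$) by strong induction on $d$, the depth of the dag of components of $G$. The preamble already disposes of the subcase in which $G$ has no positive cycle either, so I may assume a positive simple cycle $\sigma^+$ exists in some component.

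For the base case $d = 0$ (a single strongly connected component), I would start from any covering derivation $\theta$ witnessing $a \trans{G} b_0$ with $b_0 \ge B$ and decompose it Parikh-style into an irreducible core $\theta_0$ plus a multiset of simple cycles, all of nonnegative effect at most $m(G)$. The core $\theta_0$ has effect at least $-m(G)$ and yields a run valid at $m(G)$. To repair validity from the actual input $a$, I would prepend copies of $\sigma^+$ at a nonterminal shared with $\theta_0$ (such a nonterminal exists because the component is strongly connected and $\theta_0$ is irreducible), then greedily insert the remaining positive simple cycles one at a time, stopping the first time the cumulative effect reaches $B - a$. Since each inserted cycle contributes at most $m(G)$, the final effect lies within $m(G)$ of $B - a$, yielding an output in $[B, B + O(m(G) \lvert \Nn(G) \rvert)]$, which fits inside $b_{\max}(B, 0, G)$.

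For the inductive step, I would consider the top component $\mathcal{T}$ of $G$ and extract from any covering derivation its ``top skeleton'': the maximal subtree rooted at $\mathcal{S}(G)$ whose internal nodes are labelled by nonterminals of $\mathcal{T}$. Its leaves are either terminals of $G$ or subderivations rooted at nonterminals of strictly lower components; each such subgrammar $G_Y$ has a dag of components of depth at most $d-1$. Processing these lower leaves in left-to-right Euler order, I would apply the induction hypothesis to each $G_Y$ with a target $B_Y$ equal to the residual amount still needed to cover $B$ downstream, replacing the subderivation (if required) by one whose output is bounded by $b_{\max}(B_Y, d-1, G_Y)$. With these replacements in place, the top skeleton is then ``rectified'' using the base-case argument applied purely within $\mathcal{T}$, so that the overall effect reaches some $b \in [B, b_{\max}(B, d, G)]$. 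The additive term $2(d+1) \cdot m(G) \cdot (\lvert \Nn(G) \rvert + 1)$ in the formula for $b_{\max}$ is precisely what absorbs one ``core correction'' of size $2 m(G) \cdot (\lvert \Nn(G) \rvert + 1)$ per level of recursion.

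The principal technical obstacle is controlling the propagation of the thresholds $B_Y$ through the recursion so that the compound bound remains linear in $B$ with a multiplicative factor depending only on $m(G)$ and $\lvert \Nn(G) \rvert$, and additive only in $d$, as encoded in $b_{\max}$. The no-negative-cycle hypothesis is essential here: raising the output of a lower subderivation can never cause a validity failure further to the right in the Euler tour of the parent derivation, so the only way a replacement can go wrong is by overshooting $b_{\max}$. Verifying that the left-to-right processing never lets this overshoot accumulate to more than an additive $2 m(G) \cdot (\lvert \Nn(G) \rvert + 1)$ per layer, while simultaneously ensuring the recursion's targets $B_Y$ never exceed a value to which the induction hypothesis still applies, is the heart of the argument.
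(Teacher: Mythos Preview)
Your inductive skeleton differs from the paper's and has a real gap at the inductive step. The paper does not extract the \emph{full} top skeleton (which can be arbitrarily deep and contain arbitrarily many lower leaves). Instead it takes a derivation $\theta$ that is \emph{minimal} in the number of top-labelled nodes, and works with the bounded-depth prefix $\pref(\theta)$ obtained by cutting each branch at the first lower nonterminal \emph{or the first repetition of a top nonterminal}; this guarantees depth at most $\lvert\topp(G)\rvert$. Leaves of $\pref(\theta)$ labelled by a repeated top nonterminal $X$ (with same-label ancestor $N'$) are then handled by a three-way case split on the left/global effects of the $X$-cycle between $N$ and $N'$: minimality of $\theta$ disposes of the nonpositive-left-effect case, and explicit surgery (inserting copies of that very cycle, or of a positive simple cycle, and replacing side subtrees by simple derivations) handles the other two. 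The induction hypothesis is invoked only at the single lower leaf where the no-jump threshold is first crossed.

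Your plan instead replaces \emph{every} lower leaf via the induction hypothesis and then proposes to ``rectify the top skeleton using the base-case argument applied purely within $\mathcal{T}$''. This step does not go through. After the replacements, the top skeleton viewed ``purely within $\mathcal{T}$'' is a derivation whose leaves include the frozen outputs of lower subderivations; these act as terminals of magnitude up to $b_{\max}(B_Y,d-1,G_Y)$, not bounded by $m(G)$. Consequently the key estimate of your base case --- that each inserted simple cycle contributes at most $m(G)$ --- fails for cycles that pass through such leaves, and the greedy-stopping argument no longer yields an overshoot of $O(m(G)\lvert\Nn(G)\rvert)$. You also lack any device analogous to the paper's minimality of $\theta$, which is what prevents unbounded stacking of top-level cycles with nonpositive left effect. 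Without that, nothing stops the targets $B_Y$ (and hence the replaced outputs) from growing multiplicatively rather than additively per layer, so the linear-in-$B$, additive-in-$d$ shape of $b_{\max}$ cannot be recovered from your outline.
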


    \begin{proof}

    We take as basic case the case where $G$ doesn't have any positive cycle, which we already treated. We view nonterminals as GVAS with $d=-1$ (and with no positive cycle). For the induction step, we take a 1-GVAS $G$ without positive cycle, but with negative cycles, and with depth of dag of components of nonterminals $d + 1$

    Let $\theta$ be a derivation with counters of input $a$ and of output at least $B$.
    We take $\theta$ with the smallest possible number of nodes labelled by nonterminals in the top components.
    Our goal is to modify $\theta$ such that its new output value is lower than $b_{\text{max}}(B,d+1,G)$ (while $\theta$ remains a valid derivation with counters produced by $G$.)
    We denote by $\pref(\theta)$ the largest prefix of $\theta$ such that all internal nodes are labelled by a top nonterminal and have no strict ancestor with the same label.
    In other words, $\pref(\theta)$ is obtained from $\theta$ by starting at the root, going down all the branches until a node is found that is labelled by a lower nonterminal or a nonterminal already seen in an ancestor, and removing the strict descendants of such nodes. Observe that $\pref(\theta)$ has depth at most $\lvert \text{Top}(G) \rvert$.
    In the case where the top component is thin, the internal nodes of $\pref(\theta)$ simply form a path. Thus, the thin case is simpler than the branching one, but we treat them simultaneously.
    Our proof strategy consists of trying to bound the counter values of the leaves of $\pref(\theta)$,
    starting from the left and moving to the right. 

    For that, we fix a threshold $T(B,G) := B + \lvert \text{Top}(G) \rvert \cdot m(G)$ and with it we define the \emph{no-jump property}.
    Notice that $B \leq T(B, G) \leq b_{\text{max}}(B,d+1,G)$.
    The no-jump property at node $N$ states that, if the input of $N$ is (strictly) less than $T(B,G)$, then its output also is. 
    The input value of $\theta$ is $a < B \le T(B,G)$. Consequently, if in $\pref(\theta)$ all the leaves strictly on the left of some leaf $N$ have the no-jump property, then the input of $N$ is also smaller than $T(B,G)$ (as the input of a leaf of $\pref(\theta)$ is the output of the leaf on its left).

    \begin{claim} \label{cl:no-jump-lower}
        If a leaf of $\pref(\theta)$ labelled by a lower nonterminal $V$ doesn't have the no-jump property
        then we can modify derivation $\theta$ to $\theta'$ with an output in between $B$ and $b_{\text{max}}(B,d,G)$.
    \end{claim}

    \begin{proof}[Proof of Claim \ref{cl:no-jump-lower}]


    Suppose that $N$ is a leaf of $\pref(\theta)$ with input $a' < T(B,G)$ and output at least $T(B,G)$. By induction hypothesis there is valid derivation with counters, produced by $G_V$, of input $a'$ and output $b'$, with $b' \le b_{\text{max}}(B,d,G_V)$.
We can replace the subtree rooted at $N$ by this one; and replace by simple derivations the subtrees rooted at nodes that are right children of strict ancestor than $N$, but not on the path from the root to $N$. This gives a valid derivation with counters, produced by $G$, of input $a$, and with a output at most $b' +  \lvert \text{Top}(G) \rvert \cdot m(G)$, which is clearly smaller than $b_{\text{max}}(B,d+1,G)$
    \end{proof}


    Recall that our goal is to modify derivation with counters $\theta$ such that its output remains above $B$, but falls below $b_{\text{max}}(B,d+1,G)$. For that, we consider a prefix $\pref(\theta)$ of $\theta$, whose leaves are the first lower nonterminals, or repeated top nonterminals, encountered when moving down from the root. We want to bound the counters of the leaves of $\pref(\theta)$. To this end, we defined the no-jump property, that we have just proved for leaves of $\pref(\theta)$ labelled by lower nonterminals.

    Now, we focus on the leaves of $\pref(\theta)$ labelled by top nonterminals. We also try to prove that they satisfy the no-jump property. In some cases, we can't, but then we directly provide the modified version $\theta'$ of $\theta$, with output comprised between $B$ and $b_{\text{max}}(B,d+1,G)$. On the other hand, if we all the leaves of $\pref(\theta)$ satisfy the no-jump property, then the output of $\theta$ is at most $T(B,G)$, which concludes the proof.

We call $N$ the leaf of $\pref(\theta)$ on which we are working, and $N'$ its ancestor with same label.
Notice that $N'$ is unique since $N$ is a leaf of the prefix of the derivation such no nonterminal repeats on any path above leaves.
We start from the left, so that we can assume that all the leaves of $\pref(\theta)$ strictly on the left of $N$ satisfy the no-jump property. We distinguish three cases, depending of the left, right and global effects of the cycle $\gamma$ between $N'$ and $N$.
    \begin{itemize}
        \item If the cycle $\gamma$ has nonpositive left effect, then the minimality property of $\theta$ implies that the output of $N$ is smaller than $T(B,G)$ (which is of course stronger than the no-jump property). Indeed, if the output of $N$ is at least $T(B,G)$, and if we remove the cycle between $N'$ and $N$, then we obtain a smaller derivation with counters from $a$ to an output above $B$. (Recall that \emph{all derivations} produced by $G$, with any nonterminal at the root, have the effect greater than $m(G)$. This is in particular the case for the subtrees rooted at right children of strict ancestors of $N'$, and there at most $\lvert \text{Top}(G) \rvert$ of them to cross before reaching the output of $\theta$.)
        \item If the cycle $\gamma$ has global effect zero, but positive left effect then we can add many copies of $\gamma$
        without any global effect, but increasing the input and output of $N$ arbitrarily. The intuitiion
        is that then at node $N$ we do not need to care whether the counter drops below zero, so it is easier
        to produce derivations with not too big output. Recall also that we may assume existence of a cycle $\gamma_+$
        with small, but strictly positive effect. Roughly speaking we construct $\theta'$ by pumping the cycle
        $\gamma$ many times, making all the other subtrees as small as possible and pasting the cycle $\gamma_+$
        an appropriate number of times. Precisely speaking we construct $\theta'$ by doing the following modifications to $\theta$:
        \begin{itemize}
            \item Replace by simple derivations (produced by $G$) all the subtrees rooted at right children of strict ancestors of $N'$ that are not ancestors of $N'$.
            \item Replace the subtree rooted at $N$ by an irreducible derivation (produced by $G$) which contains a nonterminal $V$ such that there is a simple $V$-cycle $\gamma_+$ of positive global effect.
            \item Insert, in the new subtree rooted at $N$, the smallest number of copies of $\gamma_+$ such that the counter don't fall below zero on the right of the subtree rooted at $N'$, and such that the output value of the whole derivation is at least $B$.
            \item Insert at $N'$ enough copies of the cycle $\gamma$ to ensure that the counter don't fall below zero in the subtree rooted at $N'$. (This doesn't change the output value of the derivation, because $\gamma$ has global effect zero.)
        \end{itemize}
        \item If the cycle $\gamma$ has positive left and global effects, then we first replace in $\theta$ all the subtrees rooted at a node on the right of the branch from $N'$ to $N$ (the main branch of $\gamma$) by simple derivations (produced by $G$). Our strategy is the following. We know that the left effect of $\gamma$ is small. Indeed, since the input and output counters of every leaf of $\pref(\theta)$ strictly on the left of $N$ is at most $T(B,G)$, we know that the left effect of $\gamma$ is the difference between two nonnegative values smaller as $T(B,G)$, and thus smaller than $T(B,G)$. After our modification of $\gamma$, the right effect becomes also small. The global effect of $\gamma$ remains nonnegative (since, by assumption, $G$ doesn't have negative cycle). If it becomes zero after the modification, we can do as in the previous case. Now we assume that it is positive. To construct $\theta'$, we replace by simple derivations (produced by $G$) the subtree rooted at $N$ and the subtrees on the right of the branch from the root to $N$. Then, we insert at $N$ the least number of copies of $\gamma$ necessary to make the whole derivation valid and to have an output $b \ge B$.
        
        To upperbound the output $b$, we first observe that $(\lvert \text{Top}(G) \rvert + 1)m(G)$ copies are enough to make $\theta'$ valid (including the one already present in $\theta$). Indeed, remember that both the left and the global effect of $\gamma$ are at least one. Thus, if we insert $(\lvert \text{Top}(G) \rvert + 1)m(G)$ copies of $\gamma$ above $N'$, then the input of $N'$ is at least $(\lvert \text{Top}(G) \rvert + 1)m(G)$, and its output at least $\lvert \text{Top}(G)\rvert m(G)$. Then, we go through the right part of all the copies of $\gamma$. The output counter after each copies remains at least $\lvert \text{Top}(G)\rvert m(G)$. (That is because both the left and global effects of $\gamma$ are at least one. Going through the right part of a copy of $\gamma$ means removing to the counter the left effect of $\gamma$ and adding its global effect.) Finally, $N$ is at depth at most $\lvert \text{Top}(G) \rvert$, so there are at most this number of simple derivations to go through before reaching the output of $\theta'$, all of which have effect at least $-m(G)$ and are valid from input $m(G)$.

        If $(\lvert \text{Top}(G) \rvert + 1)m(G)$ copies are also enough for $b$ to exceed $B$, then we insert this number of copies and we have $b \le T(B,G) + (\lvert \text{Top}(G) \rvert + 1)m(G) \cdot (T(B,G) + \lvert \text{Top}(G) \rvert m(G)) + \text{Top}(G) \rvert m(G)$, which is a small enough upperbound to finish the proof.
        Otherwise, we keep inserting more copies of $\gamma$ until the output exceeds $B$.
        We cannot exceed $B$ by more than the effect of $\gamma$, therefore $b \le B + T(B,G)$, which is also sufficient to finish the proof.
    \end{itemize}

\end{proof}

\end{proof}

\end{document}